\RequirePackage[hyphens]{url}
\documentclass[submission,copyright,creativecommons]{eptcs}

\let\doappendix\undefined

\usepackage{graphicx}
\usepackage{enumerate}
\usepackage{amsmath}
\usepackage{amsthm}
\usepackage{hyperref}
\usepackage{csquotes}
\usepackage{amssymb}
\usepackage{bbm}
\usepackage[utf8]{inputenc}
\usepackage{tikz}
\usepackage{tikz-cd}
\usepackage{stmaryrd}
\usepackage{cite}
\usetikzlibrary{automata,positioning,arrows}
\usepackage{mathrsfs}
\usepackage{amssymb}
\usepackage[capitalise]{cleveref}
\usepackage{subcaption}
\usepackage{graphics}
\usepackage{adjustbox}
\usepackage{hyperref}
\usepackage{cleveref}
\usepackage[textsize=tiny]{todonotes}
\usepackage[english]{babel}
\usepackage{breakurl}             
\usepackage{underscore}           

\newcommand{\Pow}{\mathcal{P}}
\newcommand{\lang}{\mathcal{L}}


\newcommand{\Set}{\textnormal{Set}}
\newcommand{\C}{\mathscr{C}}

\usetikzlibrary{shapes.geometric}

\newtheorem{theorem}{Theorem}
\newtheorem{proposition}[theorem]{Proposition}
\newtheorem{lemma}[theorem]{Lemma}
\newtheorem{corollary}[theorem]{Corollary}
\theoremstyle{definition}
\newtheorem{definition}[theorem]{Definition}

\newtheorem{example}[theorem]{Example}
\theoremstyle{remark}

\crefname{prop}{Proposition}{Propositions}
\crefname{thm}{Theorem}{Theorems}
\crefname{def}{Definition}{Definitions}
\crefname{lemma}{Lemma}{Lemmas}
\crefname{example}{Example}{Examples}
\crefname{corollary}{Corollary}{Corollaries}

\newcommand{\deftag}[1]{{\hskip-\labelsep\ \upshape\bfseries (#1)}\hskip\labelsep}
\newcommand{\tagcite}[1]{{\hskip-\labelsep\ \upshape\bfseries \cite{#1}}\hskip\labelsep}

\setlength {\marginparwidth }{1.3cm}

\makeatletter
\newenvironment{oneshot}[2]
    {
        \bgroup
        \addtocounter{theorem}{-1}
        \expandafter\def\csname thetheorem\endcsname{\ref{#2}}
        \def\foo{\end{#1}}
        \begin{#1}
    }{
        \foo
        \egroup
    }
\makeatother

\title{Canonical Automata via Distributive Law Homomorphisms}
            \author{Stefan Zetzsche\thanks{The author has been supported by GCHQ via the VeTSS grant “Automated black-box verification of networking systems” (4207703/RFA 15845).} \qquad Gerco van Heerdt\thanks{The author has been supported by the EPSRC via the Standard Grant CLeVer (EP/S028641/1).}
\institute{University College London}
\email{s.zetzsche@cs.ucl.ac.uk}
\and
            Matteo Sammartino\thanks{The author has been supported by the EPSRC via the Standard Grant CLeVer (EP/S028641/1).}
\institute{Royal Holloway, University of London}
\institute{University College London}
\and
            Alexandra Silva\thanks{The author has been supported by the ERC via the Consolidator Grant AutoProbe 101002697 and by a Royal Society Wolfson Fellowship.}
\institute{Cornell University}
\institute{University College London}
}

\begin{document}
\maketitle

\begin{abstract}
The classical powerset construction is a standard method converting a non-deterministic automaton into a deterministic one recognising the same language. Recently, the powerset construction has been lifted to a more general framework that converts an automaton with side-effects, given by a monad, into a deterministic automaton accepting the same language. The resulting automaton has additional algebraic properties, both in the state space and transition structure, inherited from the monad.  In this paper, we study the reverse construction and present a framework in which a deterministic automaton with additional algebraic structure over a given monad can be converted into an equivalent succinct automaton with side-effects. Apart from recovering examples from the literature, such as the canonical residual finite-state automaton and the \'atomaton, we discover a new canonical automaton for a regular language by relating the free vector space monad over the two element field to the neighbourhood monad. Finally, we show that every regular language satisfying a suitable property parametric in two monads admits a size-minimal succinct acceptor.  
\end{abstract}

\section{Introduction}

The existence of a unique minimal \emph{deterministic} acceptor is an important property of regular languages. Establishing a similar result for \emph{non-deterministic} acceptors is  significantly more difficult, but nonetheless of great practical importance, as non-deterministic automata (NFA) can be exponentially more succinct than deterministic ones (DFA). The main issue is that a regular language can be accepted by several size-minimal NFAs that are not isomorphic.
A number of sub-classes of non-deterministic automata have been identified in the literature to tackle this issue, which all admit canonical representatives: the \emph{\'atomaton}~\cite{BrzozowskiT14}, the \emph{canonical residual finite-state automaton} (short \emph{canonical RFSA} and also known as \emph{jiromaton})~\cite{DenisLT02}, the \emph{minimal xor automaton}~\cite{VuilleminG210}, and the \emph{distromaton}~\cite{MyersAMU15}. 

In this paper we provide a general categorical framework that unifies constructions of canonical non-deterministic automata and unveils new ones. Our framework adopts the well-known representation of side-effects via \emph{monads}~\cite{moggi1991notions} to generalise non-determinism in automata. For instance, an NFA (without initial states) can be represented as a pair $\langle X,k \rangle $, where $X$ is the set of states and 
$
	k \colon X \to 2 \times \Pow(X)^A
$
combines the function classifying each state as accepting or rejecting with the function giving the set of next states for each input. The powerset forms a monad $\langle \Pow, \{-\}, \mu \rangle$, where $\{-\}$ creates singleton sets and $\mu$ takes the union of a set of sets. This allows describing the classical powerset construction, converting an NFA into a DFA, in categorical terms \cite{silva2010generalizing} as depicted on the left of \Cref{gen-det-diagrams},
\begin{figure*}[t]
\centering
	\begin{tikzcd}[ ampersand replacement=\&]
		X \ar{d}[swap]{k} \ar{r}{\{-\}} \&
			\mathcal{P}(X) \ar{dl}{k^\sharp} \ar[dashed]{r}{\textnormal{obs}} \&
			2^{A^*} \ar{d}{\langle \varepsilon, \delta \rangle} \\
		2 \times \mathcal{P}(X)^A \ar[dashed]{rr}[below]{2 \times \textnormal{obs}^A} \&
			\&
			2 \times (2^{A^*})^A
	\end{tikzcd}
	\qquad
			\begin{tikzcd}[ampersand replacement=\&]
			X \ar{d}[swap]{k} \ar{r}{\eta} \&
			TX \ar{dl}{k^\sharp} \ar[dashed]{r}{\textnormal{obs}} \&
				\Omega \ar{d}{\omega} \\
			FTX \ar[dashed]{rr}[below]{F\textnormal{obs}} \&
				\&
				F\Omega
		\end{tikzcd}.
\caption{Generalised determinisation of automata with side-effects in a monad.}
\label{gen-det-diagrams}
\end{figure*}
where $k^\sharp \colon \Pow(X) \to 2 \times \Pow(X)^A$ represents an equivalent DFA, obtained by taking the subsets of $X$ as states, and $\langle \varepsilon, \delta \rangle : 2^{A^*} \rightarrow 2 \times (2^{A^*})^A$ is the automaton of languages. There then exists a unique automaton homomorphism $\textnormal{obs}$, assigning a language semantics to each set of states.

As seen on the right of \Cref{gen-det-diagrams} this perspective further enables a \emph{generalised determinisation} construction \cite{silva2010generalizing}, where $2 \times (-)^A$ is replaced by any (suitable) functor $F$ describing the automaton structure, and $\mathcal{P}$ by a monad $T$ describing the automaton side-effects.
$\Omega \xrightarrow{\omega} F \Omega$ is the so-called \emph{final coalgebra}, providing a semantic universe that generalises the automaton of languages.

Our work starts from the observation that the deterministic automata resulting from this generalised determinisation constructions have \emph{additional algebraic structure}: the state space $\Pow(X)$ of the determinised automaton defines a free complete join-semilattice (CSL) over $X$, and $k^\sharp$ and $\textnormal{obs}$ are CSL homomorphisms. More generally, $TX$ defines a (free) algebra for the monad $T$, and $k^\sharp$ and $\textnormal{obs}$ are $T$-algebra homomorphisms.

With this observation in mind, our question is: can we exploit the additional algebraic structure to ``reverse'' these constructions? In other words, can we convert a deterministic automaton with additional algebraic structure over a given monad to an equivalent succinct automaton with side-effects, possibly over another monad? To answer this question, the paper makes the following contributions:
\begin{itemize}
	\item We present a general categorical framework based on bialgebras and distributive law homomorphisms that allows deriving canonical representatives for a wide class of succinct automata with side-effects in a monad. 		
	\item We strictly improve the expressivity of previous work \cite{HeerdtMSS19, arbib1975fuzzy}: our framework instantiates not only to well-known examples such as the canonical RFSA (\Cref{canonicalrfsaexample}) and the minimal xor automaton (\Cref{minimalxorexmaple}), but also includes the \'atomaton (\Cref{atomatonexample}) and the distromaton (\Cref{distromatonexample}), which were not covered in \cite{HeerdtMSS19, arbib1975fuzzy}.
 While other frameworks restrict themselves to the category of sets \cite{HeerdtMSS19}, we are able to include canonical acceptors in other categories, such as the \textit{canonical nominal RFSA} (\Cref{nominalexample}). 
	\item We relate vector spaces over the unique two element field with complete atomic Boolean algebras and consequently discover a previously unknown canonical mod-2 weighted acceptor for regular languages---the \emph{minimal xor-CABA automaton} (\Cref{minimalxorcabaexample})---that in some sense is to the minimal xor automaton what the \'atomaton is to the canonical RFSA  (\Cref{minimalxorcabadiagram}).
	\item We introduce an abstract notion of \emph{closedness} for succinct automata that is parametric in two monads (\Cref{closedsuccinctdef}), and 
	prove that every regular language satisfying a suitable property admits a canonical size-minimal representative among closed acceptors (\Cref{minimalitytheorem}). By instantiating the latter we subsume known minimality results for canonical automata, prove the xor-CABA automaton minimal, and establish a size comparison between different acceptors (\Cref{minimalityimplications}). 
\end{itemize}

\ifdefined\doappendix\else An extended version of this paper is available at \cite{arxiv}.\fi

\section{Overview of the approach}
\label{overview}
	
In this section, we give an overview of the ideas of the paper through an example. We show how our methodology allows recovering the construction of the 
\'atomaton for 
the regular language
$
	\mathcal{L} = (a+b)^*a
$, which consists of all words over $A = \lbrace a, b \rbrace$ that end in $a$. For each step, we hint at how it is generalised in our framework.

The classical construction of the \'atomaton for $\lang$ consists in closing the \emph{residuals}\footnote{A language is a \textit{residual} or \textit{left quotient} of $\mathcal{L} \subseteq A^*$, if it is of the form $v^{-1}\mathcal{L} = \lbrace u \in A^* \mid vu \in \mathcal{L} \rbrace$ for some $v \in A^*$. } of $\lang$ under all Boolean operations, and then forming a non-deterministic automaton whose states are the \emph{atoms}\footnote{A non-zero element $a \in B$ is called \emph{atom}, if for all $x \in B$ such that $x \leq a$ one finds $x = 0$ or $x = a$.}  of the ensuing complete atomic Boolean algebra (CABA)---that is, non-empty intersections of complemented or uncomplemented residuals.
In our categorical setting, this construction is obtained in several steps, which we now describe.

\subsection{Computing residuals}
We first construct the minimal DFA accepting $\lang$ as a coalgebra of type 
	$
		 M_\lang \to 2 \times (M_\lang)^{A} \enspace
	$.
	By the well-known Myhill-Nerode theorem~\cite{nerode1958linear}, $M_\lang$ is the set of residuals for $\lang$. The automaton is depicted in \Cref{m(l)}.

	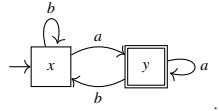
\begin{figure*}
		\tiny
		\center
	\begin{tikzpicture}[node distance=6em]
	\node[state, shape=rectangle, initial, initial text=] (x) {$x$};	
		\node[state, shape=rectangle, right of=x, accepting] (y) {$y$};
	    \path[->]
	(x) edge[loop above] node{$b$} (x)
	(y) edge[loop right] node{$a$} (y)
	(x) edge[above, bend left] node{$a$} (y)
	(y) edge[below, bend left] node{$b$} (x)
	;
	\end{tikzpicture}.
		\caption{The minimal DFA for $\mathcal{L} = (a+b)^*a$.}
	\label{m(l)}
	\end{figure*}
			In our framework, we consider coalgebras over an arbitrary endofunctor $F \colon \C \to \C$ ($F = 2 \times (-)^{A}$ and $\C = \Set$ in this case).
Minimal realisations, generalising minimal DFAs, exist for a wide class of functors $F$ and categories $\C$, including all the examples in this paper.

\subsection{Taking the Boolean closure}
We close the minimal DFA under all Boolean operations, generating an equivalent deterministic automaton that has additional algebraic structure: its state space is a CABA.
This is achieved via a double powerset construction---where sets of sets are interpreted as full disjunctive normal form---and the resulting coalgebra is of type
	$
		\Pow^2(M_\lang) \to 2 \times (\Pow^2(M_\lang))^{A}	$.
	Our construction relies on the so-called \emph{neighbourhood monad} $\mathcal{H}$, whose algebras are precisely CABAs, and yields a (free) \emph{bialgebra} capturing both the coalgebraic and the algebraic structure; the interplay of these two structures is captured via a \emph{distributive law}. 
 We then minimise this DFA to identify Boolean expressions evaluating to the same language. As desired, the resulting state space is precisely the Boolean closure of the residuals of $\lang$. Formally, we obtain the minimal bialgebra for $\lang$ depicted in \Cref{overlinem(l)atom}.

This step in our framework is generalised as closure of an $F$-coalgebra w.r.t\ (the algebraic structured induced by) any monad $S$ for which a suitable distributive law $\lambda$ with the coalgebra endofunctor $F$ exists. The first step of the closure yields a free $\lambda$-bialgebra, comprised of both an $F$-coalgebra and an $S$-algebra over the same state space. In a second step, minimisation is carried out in the category of $\lambda$-bialgebras, which guarantees simultaneous preservation of the algebraic structure and of the language semantics.

\subsection{Constructing the \'atomaton}
This step is the key technical result of our paper. Atoms have the property that their Boolean closure generates the entire CABA. In our framework, this property is generalised via the notion of \emph{generators} for algebras over a monad, which allows one to represent a bialgebra as an equivalent \emph{free} bialgebra over its generators, and hence to obtain succinct canonical representations (\Cref{forgenerator-isharp-is-bialgebra-hom}). In \Cref{succinctbialgebra1} we apply this result to obtain the canonical RFSA, the canonical nominal RFSA, and the minimal xor automaton for a given regular language.

However, to recover the \'atomaton from the minimal CABA-structured DFA of the previous step, in addition a subtle change of perspective is required. In fact, we are still working with the ``wrong'' side-effect: the non-determinism of bialgebras so far is determined by $\mathcal{H}$, whereas we are interested in an NFA, whose non-determinism is captured by the powerset monad $\mathcal{P}$. 
As is well-known, every element of a CABA can be obtained as the join of the atoms below it.
In other words, those atoms are also generators of the underlying CSL, which is an algebra for $\mathcal{P}$. We formally capture this idea as a map between monads $\mathcal{H} \to \mathcal{P}$. Crucially, we show that this map lifts to a \emph{distributive law homomorphism} and allows translating a bialgebra over $\mathcal{H}$ to a bialgebra over $\mathcal{P}$, which can be represented as a free bialgebra over atoms---the \'atomaton for $\lang$, which is shown in \Cref{atomaton}.

In \Cref{succinctbialgebra2} we generalise this idea to the situation of two monads $S$ and $T$ involved in distributive laws with the coalgebra endofunctor $F$. In particular, \Cref{generatorbialgebrahom} is our free representation result, spelling out a condition under which a bialgebra over $S$ can be represented as a free bialgebra over $T$, and hence admits an equivalent succinct representation as an automaton with side-effects in $T$. Besides the \'atomaton and the examples in \Cref{succinctbialgebra1}, this construction allows us to capture the distromaton and a newly discovered canonical acceptor that relates CABAs with vector spaces over the two element field.

\section{Preliminaries}

\begin{figure*}
\center	
	 \tiny
         \adjustbox{valign=m}{\begin{tikzpicture}[node distance= 8.5em]
	\node[state, initial, shape=rectangle, initial text=] (1) {$\lbrack \lbrace \lbrace x \rbrace, \lbrace x, y \rbrace \rbrace \rbrack $};

	\node[state, shape=rectangle, right of = 1] (3) {$\lbrack \emptyset \rbrack$};

	\node[state, shape=rectangle,below of = 1, accepting] (5) {$\lbrack\lbrace \lbrace x \rbrace,  \lbrace y \rbrace, \lbrace x, y \rbrace \rbrace \rbrack$};

	\node[state, shape=rectangle,right of = 5, accepting] (7) {$\lbrack\lbrace \lbrace y \rbrace \rbrace \rbrack$};

	\node[state, shape=rectangle,right of = 3] (9) {$\lbrack\lbrace \emptyset \rbrace \rbrack$};

	\node[state, shape=rectangle,right of = 9] (11) {$\lbrack \lbrace \lbrace x, y \rbrace, \emptyset \rbrace \rbrack$};

	\node[state, shape=rectangle,right of = 7, accepting] (13) {$\lbrack \lbrace \lbrace y \rbrace, \emptyset \rbrace \rbrack$};

	\node[state, shape=rectangle, right of = 13, accepting] (15) {$\lbrack \lbrace \lbrace x \rbrace, \lbrace y \rbrace, \lbrace x, y \rbrace, \emptyset \rbrace \rbrack$};
			
		    \path[->]
	(3) edge[loop above] node{$a,b$} (3)
	(1) edge[loop above] node{$b$} (1)
	(1) edge[right, bend left] node{$a$} (5)
	(7) edge[left] node{$a,b$} (3)
	(5) edge[loop below] node{$a$} (5)
	(5) edge[left, bend left] node{$b$} (1)
	(9) edge[bend left, right] node{$b$} (13)
	(9) edge[loop above] node{$a$} (9)
	(11) edge[left] node{$a,b$} (15)
	(13) edge[left, bend left] node{$a$} (9)
	(13) edge[loop below] node{$b$} (13)
	(15) edge[loop below] node{$a,b$} (15)
;
         \end{tikzpicture}}
	\qquad
                \adjustbox{valign=m}{\resizebox{0.35 \columnwidth}{!}{%
	\begin{tabular}[]{ c|c|c|c|c|c|c|c|c } 
		$\wedge$ & $1$ & $2$ & $3$ & $4$ & $5$ & $6$ & $7$ & $8$ \\
		\hline
		$1$ & $1$ & $2$ & $2$ & $1$ & $1$ & $2$ & $2$ & $1$ \\
		\hline
		$2$ & $2$ & $2$ & $2$ & $2$  &  $2$ & $2$  & $2$  & $2$  \\
		\hline
		$3$ & $2$ & $2$ & $3$ & $3$ & $2$ & $2$ & $3$ & $3$ \\
		\hline
		$4$ & $1$ & $2$ & $3$ & $4$ & $1$ & $2$ & $3$ & $4$ \\
		\hline
		$5$ & $1$ & $2$ & $2$ & $1$ & $5$ & $6$ & $6$ & $5$ \\
		\hline
		$6$ & $2$ & $2$ & $2$ & $2$ & $6$ & $6$ & $6$ & $6$ \\
		\hline
		$7$ & $2$ & $2$ & $3$ & $3$ & $6$ & $6$ & $7$ & $7$ \\
		\hline
		$8$ & $1$ & $2$ & $3$ & $4$ & $5$ & $6$ & $7$ & $8$
	\end{tabular}
	\qquad
		\begin{tabular}[]{ c|c} 
		  & $\neg$\\
		 \hline
		 $1$ & $7$ \\
		 \hline
		 $2$ & $8$ \\
		 \hline
		 $3$ & $5$ \\
		 \hline
		 $4$ & $6$ \\
		 \hline
		 $5$ & $3$ \\
		 \hline
		 $6$ & $4$ \\
		 \hline
		 $7$ & $1$ \\
		 \hline
		 $8$ & $2$ 
		\end{tabular}
        }}
		\caption{The minimal CABA-structured DFA for $\mathcal{L} = (a+b)^*a$, where
$1 \equiv \lbrack \lbrace \lbrace x \rbrace, \lbrace x, y \rbrace \rbrace \rbrack$, $2 \equiv \lbrack \emptyset \rbrack$, $3 \equiv \lbrack \lbrace \emptyset \rbrace \rbrack$, $4 \equiv \lbrack \lbrace \lbrace x, y \rbrace, \emptyset \rbrace \rbrack$, $5 \equiv \lbrack \lbrace \lbrace x \rbrace, \lbrace y \rbrace, \lbrace x, y \rbrace \rbrace \rbrack$, $6 \equiv \lbrack \lbrace \lbrace y \rbrace \rbrace \rbrack$, $7 \equiv \lbrack \lbrace \lbrace y \rbrace, \emptyset \rbrace \rbrack$, $8 \equiv \lbrack \lbrace \lbrace x \rbrace, \lbrace y \rbrace, \lbrace x, y \rbrace, \emptyset \rbrace \rbrack$.}
	\label{overlinem(l)atom}
	\end{figure*} 

\label{preliminaries}

We assume basic knowledge of category theory (including functors, natural transformations, and adjunctions)~\cite{awodey2010category}. In this section we recall the relevant notions for our technical development: coalgebras, monads, algebras over a monad, distributive laws, and bialgebras.

Unpointed deterministic automata are basic examples of coalgebras in the category of sets and functions: they are of the type $k \colon X \to FX$, where $FX = 2 \times X^A$ and $k$ pairs the final state function and the transition function assigning a next state to each letter $a\in A$.
Coalgebra has emerged as a unifying framework to study infinite data types and state-based systems \cite{rutten2000universal}.
\begin{definition}\deftag{Coalgebra}
	A \emph{coalgebra} for an endofunctor $F$ in a category $\C$ is a tuple $\langle X, k \rangle$ consisting of an object $X$ in $\C$ and a morphism $k\colon X \rightarrow FX$.
\end{definition}
Crucial in the theory of coalgebras is the notion of homomorphism, which allows to relate states of coalgebras of the same type. A homomorphism $f\colon \langle X, k_X \rangle \rightarrow \langle Y,k_Y \rangle$ between $F$-coalgebras is a morphism $f\colon X \rightarrow Y$ satisfying  $k_Y \circ f = Ff \circ k_X$. The category of $F$-coalgebras and homomorphisms is denoted by $\textnormal{Coalg}(F)$. If it exists, the final object of this category is of particular importance.
\begin{definition}\deftag{Final coalgebra}
	An $F$-coalgebra $\langle \Omega, k_{\Omega} \rangle$ is \emph{final} if every $F$-coalgebra $\langle X, k \rangle$ admits a unique homomorphism $\textnormal{obs}_{\langle X, k \rangle}: \langle X, k \rangle \rightarrow \langle \Omega, k_{\Omega} \rangle$.
\end{definition}
 The unique final coalgebra homomorphism can be understood as the observable behaviour of a system. For example, for the functor $FX = 2 \times X^A$, the final $F$-coalgebra is the set of all languages $\Pow(A^\star)$ and the final coalgebra homomorphism assigns to a state $x$ of an unpointed deterministic automaton the language in $\Pow(A^*)$ it accepts\footnote{For a deterministic automaton given by $\varepsilon: X \rightarrow 2$ and $\delta: X \rightarrow X^A$, acceptance is coinductively defined as a function $\textnormal{obs}: X \rightarrow 2^{A^*}$ by $\textnormal{obs}(x)(\varepsilon) = \varepsilon(x)$ and $\textnormal{obs}(x)(av) = \textnormal{obs}(\delta(x)(a))(v)$.} when given the initial state $x$.

In the context of computer science, monads have been introduced by Moggi as a general perspective on exceptions, side-effects, non-determinism, and continuations \cite{moggi1988computational, moggi1990abstract,		moggi1991notions}. 
\begin{definition}\deftag{Monad}
	A \emph{monad} on a category $\mathscr{C}$ is a tuple $\langle T, \eta, \mu \rangle$ consisting of an endofunctor $T: \mathscr{C} \rightarrow \mathscr{C}$ and natural transformations
	$
	\eta: \textnormal{id}_{\mathscr{C}} \Rightarrow T$ and $\mu: T^2 \Rightarrow T$
	satisfying $\mu \circ T\mu = \mu \circ \mu_T$ and $\mu \circ \eta_T = \textnormal{id}_T = \mu \circ T\eta$.
\end{definition}
By a slight abuse of notation we will refer to a monad simply by its underlying endofunctor. 

	Non-determinism is typically modelled by the \emph{powerset monad} $\mathcal{P} $, whose underlying endofunctor $\mathcal{P}$ assigns to a set $X$ the set of subsets $\mathcal{P}X$; whose unit maps an element $x$ to the singleton $\eta_X(x) = \lbrace x \rbrace$; and whose multiplication flattens subsets by taking their union $\mu_X(\Phi) = \bigcup_{U \in \Phi} U$. Other monads that play a role for us are the \emph{nominal powerset monad} $ \mathcal{P}_{\textnormal{n}}$ \cite{moerman2019residual}, the \emph{neighbourhood monad}
 $\mathcal{H}$ \cite{jacobs2015recipe}, the \emph{monotone neighbourhood monad}
 $\mathcal{A}$ \cite{jacobs2015recipe}, and the \emph{free vector space monad}
 $\mathcal{R}$ over the unique two element field \cite{jacobs2011bases}. The formal definitions are given in \ifdefined\doappendix\Cref{monaddefs}\else \cite[Definition~40]{arxiv}\fi.

The concept of a monad can also be seen as an alternative to Lawvere theory as a category theoretic formulation of universal algebra \cite{eilenberg1965adjoint, linton1966some}.  
\begin{definition}\deftag{Algebra over a monad}
	An \emph{algebra} over a monad $T$ on $\C$ is a tuple $\langle X, h \rangle$ consisting of an object $X$ in $\C$ and a morphism $h: TX \rightarrow X$ satisfying $h \circ \mu_X = h \circ Th$ and $h \circ \eta_X = \textnormal{id}_X$.
\end{definition}  Every object admits a \emph{free} algebra $\langle TX, \mu_X \rangle$. A homomorphism $f: \langle X, h_X \rangle \rightarrow \langle  Y, h_Y \rangle$ between $T$-algebras is a morphism $f: X \rightarrow Y$ satisfying $h_Y \circ Tf = f \circ h_X$. The category of $T$-algebras and homomorphisms is denoted by $\textnormal{Alg}(T)$.

\begin{example}
\begin{itemize}
	\item The category $\textnormal{Alg}(\mathcal{P})$ is isomorphic to the category of complete join-semi lattices (CSL) and functions that preserve all joins \cite{jacobs2011bases}.
	\item The category $\textnormal{Alg}(\mathcal{H})$ is isomorphic to the category of complete atomic Boolean algebras (CABA) and Boolean algebra homomorphisms that preserve all meets and all joins \cite{jacobs2015recipe}.
	\item The category $\textnormal{Alg}(\mathcal{A})$ is isomorphic to the category of completely distributive lattices (CDL) and functions that preserve all meets and all joins \cite{jacobs2015recipe}.
	\item The category $\textnormal{Alg}(\mathcal{R})$ is isomorphic to the category of vector spaces over the unique two element field ($\mathbb{Z}_2$-Vect) and linear maps \cite{jacobs2011bases}.
\end{itemize}
	
\end{example}

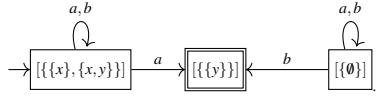
\begin{figure*}
	\center
	\tiny
\begin{tikzpicture}[node distance= 8.5em]
	\node[state, initial, shape=rectangle, initial text=] (1) {$\lbrack \lbrace \lbrace x \rbrace, \lbrace x, y \rbrace \rbrace \rbrack $};

	\node[state, shape=rectangle,right of = 1, accepting] (2) {$\lbrack\lbrace \lbrace y \rbrace \rbrace \rbrack$};

	\node[state, shape=rectangle,right of = 2] (3) {$\lbrack\lbrace \emptyset \rbrace \rbrack$};

		    \path[->]
	(1) edge[loop above] node{$a,b$} (1)
	(1) edge[above] node{$a$} (2)
	(3) edge[loop above] node{$a,b$} (3)
	(3) edge[above] node{$b$} (2)	
;
	\end{tikzpicture}.
	\caption{The \'atomaton for $\mathcal{L} = (a+b)^*a$.}
\label{atomaton}
\end{figure*}

Distributive laws have originally occurred as a way to compose monads \cite{beck1969distributive}, but now also exist in a wide range of other forms \cite{Street2009}. For our particular case it is sufficient to consider distributive laws between a monad and an endofunctor, sometimes referred to as \emph{Eilenberg-Moore laws}\cite{jacobs2015trace}.
\begin{definition}\deftag{Distributive law}
	A \emph{distributive law} between a monad $T$  and an endofunctor $F$ on $\C$ is a natural transformation $\lambda: TF \Rightarrow FT$
satisfying $F \eta_X = \lambda_X \circ \eta_{FX}$ and $\lambda_X \circ \mu_{FX} = F\mu_X \circ \lambda_{TX} \circ T{\lambda_X}$.
\end{definition}
 For example, every algebra $h: TB \rightarrow B$ for a set monad $T$ induces a  distributive law $\lambda^h$ between $T$ and $F$ with $FX = B \times X^A$ defined by 
 \begin{equation}
 \label{induceddistrlaweq}
 	\lambda^{h}_X := (h \times \textnormal{st}) \circ \langle T\pi_1, T\pi_2 \rangle,
 \end{equation}
 where $\textnormal{st}$ denotes the usual strength function\footnote{For any two sets $X,A$ the strength function $\textnormal{st}: T(X^A) \rightarrow (TX)^A$ is defined by $ \textnormal{st}(U)(a) = T(\textnormal{ev}_a)(U)$,
	where $\textnormal{ev}_{a}(f) = f(a)$.} \cite{jacobs2005bialgebraic}. We are particularly interested in canonical algebra structures for the output set $B = 2$. For instance, the algebra structures defined by $h^{\mathcal{P}}(\varphi) = h^{\mathcal{R}}(\varphi) =  \varphi(1)$ and $h^{\mathcal{H}}(\Phi) = h^{\mathcal{A}}(\Phi) = \Phi(\textnormal{id}_2)$, where we identify subsets with their characteristic functions. In these cases we will abuse notation and write $\lambda^T$ instead of $\lambda^{h^T}$. 

\begin{example}\deftag{Generalized determinisation \cite{rutten2013generalizing}}
\label[example]{determinisationexample}
Given a distributive law, one can model the determinisation of a system with dynamics in $F$ and side-effects in $T$ (sometimes referred to as \emph{succinct} automaton) by lifting a $FT$-coalgebra $\langle X, k \rangle$ to the $F$-coalgebra 
	$\langle TX, k^{\sharp} \rangle$, where $k^{\sharp} := (F \mu_X \circ \lambda_{TX}) \circ Tk$. As one verifies, the latter is in fact a $T$-algebra homomorphism of type $k^{\sharp}: \langle TX, \mu_X \rangle \rightarrow \langle FTX, F\mu_X \circ \lambda_{TX} \rangle$. For instance, if the distributive law $\lambda$ is induced by the  disjunctive $\mathcal{P}$-algebra $h^{\mathcal{P}}: \mathcal{P}2 \rightarrow 2$ with $h^{\mathcal{P}}(\varphi) = \bigvee_{u \in \varphi} u = \varphi(1)$, the lifting $k^{\sharp}$ is the DFA in CSL obtained from an NFA $k$ via the classical powerset construction.   
\end{example}	
	
	The example above illustrates the concept of a bialgebra: the algebraic part $(TX, \mu_X)$ and the coalgebraic part $(TX, k^{\sharp})$ of a lifted automaton are compatible along the distributive law $\lambda$.

    \begin{definition}\deftag{Bialgebra}
		A $\lambda$\emph{-bialgebra} is a tuple $\langle X, h, k \rangle$ consisting of a $T$-algebra $\langle X,h \rangle$ and an $F$-coalgebra $\langle X, k \rangle$ 
satisfying $Fh \circ \lambda_X \circ Tk = k \circ h$.
	\end{definition}
	 A homomorphism between $\lambda$-bialgebras is a morphism between the underlying objects that is simultaneously a $T$-algebra homomorphism and an $F$-coalgebra homomorphism.
	The category of $\lambda$-bialgebras and homomorphisms is denoted by $\textnormal{Bialg}(\lambda)$. The existence of a final $F$-coalgebra is equivalent to the existence of a final $\lambda$-bialgebra, as the next result shows.
	
	\begin{lemma}\tagcite{jacobs2012trace}
		Let $\langle \Omega, k_{\Omega} \rangle$ be the final $F$-coalgebra, then $\langle \Omega, h_{\Omega}, k_{\Omega} \rangle$ with $h_{\Omega}:= \textnormal{obs}_{\langle T\Omega, \lambda_{\Omega} \circ T k_{\Theta} \rangle}$ is the final $\lambda$-bialgebra satisfying $\textnormal{obs}_{\langle X, h, k \rangle} = \textnormal{obs}_{\langle X, k \rangle}$. Conversely, if $\langle \Omega, h_{\Omega}, k_{\Omega} \rangle$ is the final $\lambda$-bialgebra, then $\langle \Omega, k_{\Omega} \rangle$ is the final $F$-coalgebra.
	\end{lemma}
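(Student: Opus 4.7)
The plan is to exploit finality of $\langle \Omega, k_\Omega \rangle$ as the sole tool: whenever I need to prove an equality between two maps into $\Omega$, I will exhibit both sides as $F$-coalgebra homomorphisms from a common source and invoke uniqueness. The same pattern handles the algebra axioms for $h_\Omega$, the bialgebra compatibility, and the finality statement.

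For the forward direction, I set $h_\Omega := \textnormal{obs}_{\langle T\Omega, \lambda_\Omega \circ T k_\Omega \rangle}$. The bialgebra compatibility $k_\Omega \circ h_\Omega = F h_\Omega \circ \lambda_\Omega \circ T k_\Omega$ is precisely the assertion that $h_\Omega$ is an $F$-coalgebra homomorphism, so it holds by definition. For the unit law $h_\Omega \circ \eta_\Omega = \textnormal{id}_\Omega$, both sides are $F$-coalgebra endomorphisms of $\langle \Omega, k_\Omega \rangle$; the nontrivial side uses naturality of $\eta$ and the distributive law axiom $F\eta = \lambda \circ \eta_F$. For the associativity law $h_\Omega \circ \mu_\Omega = h_\Omega \circ T h_\Omega$, I exhibit both sides as $F$-coalgebra homomorphisms out of the twice-lifted coalgebra $\langle T^2\Omega, \lambda_{T\Omega} \circ T\lambda_\Omega \circ T^2 k_\Omega \rangle$: one side invokes naturality of $\mu$ together with the second distributive law axiom, while the other invokes naturality of $\lambda$ together with the already-established compatibility.

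For finality of $\langle \Omega, h_\Omega, k_\Omega \rangle$, the key claim is that for every bialgebra $\langle X, h, k \rangle$ the unique $F$-coalgebra homomorphism $\textnormal{obs} := \textnormal{obs}_{\langle X, k \rangle}$ is automatically a $T$-algebra homomorphism, which simultaneously delivers existence, uniqueness, and the equation $\textnormal{obs}_{\langle X, h, k \rangle} = \textnormal{obs}_{\langle X, k \rangle}$. One more instance of the pattern suffices: both $\textnormal{obs} \circ h$ and $h_\Omega \circ T \textnormal{obs}$ are $F$-coalgebra homomorphisms $\langle TX, \lambda_X \circ Tk \rangle \to \langle \Omega, k_\Omega \rangle$---the former via bialgebra compatibility at $X$, the latter via naturality of $\lambda$ and compatibility at $\Omega$---so they agree by finality. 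Uniqueness of the bialgebra homomorphism itself is immediate, since any such map is in particular an $F$-coalgebra homomorphism into $\Omega$.

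For the converse, assume $\langle \Omega, h_\Omega, k_\Omega \rangle$ is the final $\lambda$-bialgebra, and let $\langle X, k \rangle$ be any $F$-coalgebra. As in \Cref{determinisationexample}, the free construction $\langle TX, \mu_X, \lambda_X \circ T k \rangle$ is a $\lambda$-bialgebra, so there is a unique bialgebra homomorphism $g \colon TX \to \Omega$; then $f := g \circ \eta_X$ is an $F$-coalgebra homomorphism by a short calculation using naturality of $\eta$ and $F\eta = \lambda \circ \eta_F$. Uniqueness of $f$ follows because any $F$-coalgebra homomorphism $f' \colon X \to \Omega$ extends to a bialgebra homomorphism $h_\Omega \circ Tf' \colon TX \to \Omega$ (the algebra law uses associativity of $h_\Omega$ and naturality of $\mu$; the coalgebra law uses naturality of $\lambda$ and bialgebra compatibility of $h_\Omega$), which must coincide with $g$ and satisfies $(h_\Omega \circ Tf') \circ \eta_X = f'$ by the unit law. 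The whole proof is a sequence of small diagram chases; the main bookkeeping obstacle is remembering which of the two distributive law axioms and which naturality square to invoke at each node.
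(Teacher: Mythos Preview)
The paper does not supply its own proof of this lemma; it is stated with the citation tag \texttt{[jacobs2012trace]} and used as a known result from the literature, so there is nothing in the paper to compare against. Your argument is correct and is essentially the standard proof one finds in the cited source: repeatedly invoke finality of $\langle \Omega, k_\Omega \rangle$ to collapse pairs of $F$-coalgebra homomorphisms into $\Omega$, using the distributive law axioms and naturality to verify that each composite in question is indeed such a homomorphism; for the converse, pass through the free bialgebra on an $F$-coalgebra and precompose with the unit. All of your diagram chases check out, including the slightly delicate associativity step (both sides are coalgebra maps out of the twice-lifted structure on $T^2\Omega$) and the uniqueness step in the converse (extending an arbitrary coalgebra map $f'$ to the bialgebra map $h_\Omega \circ Tf'$).
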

	For instance, for the distributive law in \Cref{determinisationexample},
	the final bialgebra is carried by the final coalgebra $\Pow(A^*)$ and also has a free $\mathcal{P}$-algebra structure that takes the union of languages.
	
	The generalized determinisation procedure in \Cref{determinisationexample} can now be rephrased in terms of a functor between the category of coalgebras with dynamics in $F$ and side-effects in $T$ on the one side, and the category of bialgebras on the other side.
	
	\begin{lemma}\tagcite{jacobs2012trace}
	\label[lemma]{expfunctor}
		Defining $\textnormal{exp}_T(\langle X, k \rangle) := \langle TX, \mu_X, (F \mu_X \circ \lambda_{TX}) \circ Tk \rangle$ and $\textnormal{exp}_T(f) := Tf$ yields a functor $\textnormal{exp}_T: \textnormal{Coalg}(FT) \rightarrow \textnormal{Bialg}(\lambda)$. 
	\end{lemma}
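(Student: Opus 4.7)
The plan is to verify three things in sequence: (i) that for each $FT$-coalgebra $\langle X, k\rangle$, the triple $\textnormal{exp}_T\langle X, k\rangle = \langle TX, \mu_X, k^\sharp\rangle$ is indeed a $\lambda$-bialgebra; (ii) that for each $FT$-coalgebra homomorphism $f$, the morphism $Tf$ is both a $T$-algebra and an $F$-coalgebra homomorphism between the corresponding bialgebras; and (iii) that $\textnormal{exp}_T$ preserves identities and composition. Item (iii) is immediate from $T$ being a functor, so the substance is in (i) and (ii).

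For (i), the algebra part $\langle TX, \mu_X\rangle$ is the free $T$-algebra, whose axioms are precisely the monad laws. What needs actual work is the bialgebra compatibility
\[
	F\mu_X \circ \lambda_{TX} \circ T(k^\sharp) \;=\; k^\sharp \circ \mu_X.
\]
Expanding $k^\sharp = F\mu_X \circ \lambda_{TX} \circ Tk$ on both sides, the left-hand side becomes $F\mu_X \circ \lambda_{TX} \circ TF\mu_X \circ T\lambda_{TX} \circ TTk$. I first rewrite $\lambda_{TX}\circ TF\mu_X$ as $FT\mu_X\circ\lambda_{TTX}$ by naturality of $\lambda$ applied to the morphism $\mu_X\colon TTX\to TX$. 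Next I combine $F\mu_X\circ FT\mu_X = F(\mu_X\circ T\mu_X) = F(\mu_X\circ\mu_{TX})$ using functoriality of $F$ and the associativity axiom of the monad. Then the distributive-law axiom $\lambda_X \circ \mu_{FX} = F\mu_X \circ \lambda_{TX} \circ T\lambda_X$ (instantiated at $TX$) lets me replace $F\mu_{TX}\circ\lambda_{TTX}\circ T\lambda_{TX}$ by $\lambda_{TX}\circ\mu_{FTX}$. Finally, naturality of $\mu$ applied to $k\colon X\to FTX$ gives $\mu_{FTX}\circ TTk = Tk \circ \mu_X$, collapsing the expression to $F\mu_X \circ \lambda_{TX} \circ Tk \circ \mu_X = k^\sharp \circ \mu_X$, as required.

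For (ii), a morphism $f\colon\langle X,k_X\rangle\to\langle Y,k_Y\rangle$ in $\textnormal{Coalg}(FT)$ satisfies $k_Y\circ f = FTf\circ k_X$. The $T$-algebra homomorphism property $Tf\circ \mu_X = \mu_Y\circ TTf$ is immediate from naturality of $\mu$. For the $F$-coalgebra property I must show $k_Y^\sharp \circ Tf = FTf \circ k_X^\sharp$. Expanding and using (in order) naturality of $\lambda$ at $f$, functoriality of $F$ together with naturality of $\mu$ at $f$, and the coalgebra-homomorphism hypothesis $Tk_Y\circ Tf = TFTf\circ Tk_X$ (obtained by applying $T$ to the given square), the two sides collapse to the same composite.

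The main obstacle is the chain of rewrites in step (i): the proof is a purely diagrammatic exercise but requires invoking naturality of $\lambda$, associativity of $\mu$, the distributive-law axiom, and naturality of $\mu$ in exactly the right order. Once (i) is in place, (ii) follows by the same pattern but one level simpler, and (iii) is trivial.
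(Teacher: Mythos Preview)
Your proof is correct. Note, however, that the paper does not supply its own proof of this lemma: it is stated with the tag \textbf{\cite{jacobs2012trace}}, i.e.\ it is quoted from Jacobs~\cite{jacobs2012trace} and no argument appears in the paper or its appendix. Your verification---checking the bialgebra compatibility for $\langle TX,\mu_X,k^\sharp\rangle$ via naturality of $\lambda$, monad associativity, the multiplication axiom of the distributive law, and naturality of $\mu$, then checking that $Tf$ preserves both structures---is exactly the standard argument one finds in the cited source, and each rewrite step you list is valid.
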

	
	We will sometimes refer to the functor which arises from the one above by precomposition with the canonical embedding of $F$-coalgebras into $FT$-coalgebras. 

	\begin{corollary}
		Defining $\textnormal{free}_T(\langle X, k \rangle) := \langle TX, \mu_X, \lambda_X \circ Tk \rangle$ and $\textnormal{free}_T(f) := Tf$ yields a functor $\textnormal{free}_T: \textnormal{Coalg}(F) \rightarrow \textnormal{Bialg}(\lambda)$ satisfying $\textnormal{free}_T(\langle X, k \rangle) = \textnormal{exp}_T(\langle X, F\eta_X \circ k\rangle)$.
	\end{corollary}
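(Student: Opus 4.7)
The plan is to derive this corollary by exhibiting $\textnormal{free}_T$ as the composition $\textnormal{exp}_T \circ \iota$, where $\iota \colon \textnormal{Coalg}(F) \to \textnormal{Coalg}(FT)$ is the canonical embedding that pre-composes a coalgebra structure with the unit of $T$. Once this factorisation is established, functoriality of $\textnormal{free}_T$ follows immediately from functoriality of $\iota$ and $\textnormal{exp}_T$ (the latter being \Cref{expfunctor}).

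First I would define $\iota$ by $\iota(\langle X, k \rangle) := \langle X, F\eta_X \circ k \rangle$ and $\iota(f) := f$, and check that this really lands in $\textnormal{Coalg}(FT)$ and preserves homomorphisms. Preservation is a one-line computation: for $f \colon \langle X, k_X \rangle \to \langle Y, k_Y \rangle$ in $\textnormal{Coalg}(F)$, naturality of $\eta$ gives $FTf \circ F\eta_X = F\eta_Y \circ Ff$, so
\[
    FTf \circ (F\eta_X \circ k_X) = F\eta_Y \circ Ff \circ k_X = (F\eta_Y \circ k_Y) \circ f,
\]
which is exactly the homomorphism condition for $\iota(f)$.

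Next I would verify the displayed identity $\textnormal{free}_T(\langle X, k \rangle) = \textnormal{exp}_T(\iota(\langle X, k \rangle))$ on coalgebra parts (the algebra part is $\mu_X$ in both cases). Using naturality of $\lambda$ at the morphism $\eta_X$, namely $\lambda_{TX} \circ TF\eta_X = FT\eta_X \circ \lambda_X$, together with the monad unit law $\mu_X \circ T\eta_X = \textnormal{id}_{TX}$, I compute
\[
    F\mu_X \circ \lambda_{TX} \circ T(F\eta_X \circ k) = F\mu_X \circ FT\eta_X \circ \lambda_X \circ Tk = F(\mu_X \circ T\eta_X) \circ \lambda_X \circ Tk = \lambda_X \circ Tk,
\]
which is precisely the coalgebra part of $\textnormal{free}_T(\langle X, k \rangle)$.

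Finally, since $\textnormal{free}_T(f) = Tf = \textnormal{exp}_T(f) = \textnormal{exp}_T(\iota(f))$ on morphisms as well, $\textnormal{free}_T$ coincides with $\textnormal{exp}_T \circ \iota$ as an assignment, and is therefore a functor. I do not anticipate any real obstacle here: the only subtlety is keeping the naturality square for $\lambda$ applied to $\eta_X$ straight, but the unit law then immediately collapses the middle factor to the identity.
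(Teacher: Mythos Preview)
Your proposal is correct and follows exactly the approach the paper indicates in the sentence preceding the corollary: $\textnormal{free}_T$ arises from $\textnormal{exp}_T$ by precomposition with the canonical embedding $\iota \colon \textnormal{Coalg}(F) \to \textnormal{Coalg}(FT)$. The paper leaves the verification implicit, and you have supplied precisely the expected details (functoriality of $\iota$ via naturality of $\eta$, and the coalgebra-part identity via naturality of $\lambda$ at $\eta_X$ together with the unit law $\mu_X \circ T\eta_X = \textnormal{id}_{TX}$).
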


\section{Succinct automata from bialgebras}
\label{succinctbialgebra1}

In this section we introduce the foundations of our theoretical contributions.
We begin with the notion of a \textit{generator} \cite{arbib1975fuzzy} for an algebra over a monad and demonstrate how it can be used to translate a bialgebra into an equivalent free bialgebra.
While the treatment is very general, we are particularly interested in the case in which the bialgebra is given by a deterministic automaton that has additional algebraic structure over a given monad, and the translation results in an automaton with side-effects in that monad.
We will demonstrate that the theory in this section instantiates to the canonical RFSA \cite{DenisLT02}, the canonical \emph{nominal} RFSA \cite{moerman2019residual}, and the minimal xor automaton \cite{VuilleminG210}.

\begin{definition}\deftag{Generator and basis}
	A \emph{generator} for a $T$-algebra $\langle X, h \rangle$ is a tuple $\langle Y, i, d \rangle$  consisting of an object $Y$, a morphism $i \colon Y \rightarrow X$, and a morphism $d \colon X \rightarrow TY$ such that $(h \circ Ti) \circ d = \textnormal{id}_X$.
	A generator is called a \emph{basis} if it additionally satisfies $d \circ (h \circ Ti) = \textnormal{id}_{TY}$.
\end{definition}

A generator for an algebra is called a \textit{scoop} by Arbib and Manes~\cite{arbib1975fuzzy}. Here, we additionally introduce the notion of a basis.
Intuitively, one calls a set $Y$ that is embedded into an algebraic structure $X$ a generator for the latter if every element $x$ in $X$ admits a decomposition $d(x) \in TY$ into a formal combination of elements of $Y$ that evaluates to $x$.
If the decomposition is moreover \emph{unique}, that is, $h \circ Ti$ is not only a \emph{surjection} with right-inverse $d$, but a \emph{bijection} with two-sided inverse $d$, then a generator is called a basis.
Every algebra is generated by itself using the generator $\langle X, \textnormal{id}_X, \eta_X \rangle$, but not every algebra admits a basis.
We are particularly interested in classes of set-based algebras for which \emph{every} algebra admits a \emph{size-minimal} generator, that is, no generator has a carrier of smaller size. In such a situation we will also speak of \emph{canonical} generators.

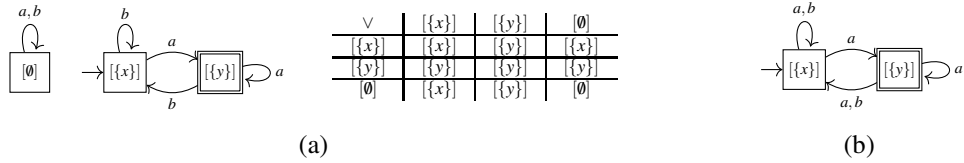
\begin{figure}[t]
	\centering
\begin{subfigure}[c]{\columnwidth}
		\centering	
		\tiny
		\begin{subfigure}[b]{.6 \columnwidth}
		\centering
		\centering
							\adjustbox{valign=m}{
			\begin{tikzpicture}[node distance=6em]
				\node[state, shape=rectangle, ] (0) {$\lbrack \emptyset \rbrack$};
				\node[state,shape=rectangle,  right of=0, initial, initial text=] (x) {$\lbrack \lbrace x \rbrace \rbrack$};
					\node[state, shape=rectangle,  right of=x, accepting] (y) {$\lbrack \lbrace y \rbrace \rbrack$};	
			    \path[->]
			(0) edge[loop above] node{$a,b$} (0)
			(x) edge[loop above] node{$b$} (x)
			(x) edge[above, bend left] node{$a$} (y)
			(y) edge[below, bend left] node{$b$} (x)
			(y) edge[loop right] node{$a$} (y)
			;
			\end{tikzpicture}
			}
			\qquad
					\adjustbox{valign=m}{
			\resizebox{0.4 \columnwidth}{!}{%
				\begin{tabular}{ c|c|c|c } 
 $\vee$ & $\lbrack \lbrace x \rbrace \rbrack$ & $ \lbrack \lbrace y \rbrace \rbrack$ &  $\lbrack \emptyset \rbrack$ \\
 \hline 
$\lbrack \lbrace x \rbrace \rbrack$ & $\lbrack \lbrace x \rbrace \rbrack$ & $\lbrack \lbrace y \rbrace \rbrack$ & $\lbrack \lbrace x \rbrace \rbrack$ \\ 
 \hline
 $\lbrack \lbrace y \rbrace \rbrack$ & $\lbrack \lbrace y \rbrace \rbrack$ & $\lbrack \lbrace y \rbrace \rbrack$ & $\lbrack \lbrace y \rbrace \rbrack$ \\
 \hline
 $\lbrack \emptyset \rbrack$ & $\lbrack \lbrace x \rbrace \rbrack$ & $\lbrack \lbrace y \rbrace \rbrack$ & $\lbrack \emptyset \rbrack$
\end{tabular}
}}
					\caption{}
		\label{overlineml}
		\end{subfigure}
	\begin{subfigure}[b]{.3 \columnwidth}
		\tiny
		\centering
							\adjustbox{valign=m}{
		\begin{tikzpicture}[node distance=6em]
			\node[state, shape=rectangle, initial, initial text=] (x) {$\lbrack \lbrace x \rbrace \rbrack$};
				\node[state, shape=rectangle, right of=x, accepting] (y) {$\lbrack \lbrace y \rbrace \rbrack$};	
		    \path[->]
		(x) edge[loop above] node{$a,b$} (x)
		(x) edge[above, bend left] node{$a$} (y)
		(y) edge[below, bend left] node{$a,b$} (x)
		(y) edge[loop right] node{$a$} (y)
		;
		\end{tikzpicture}
		}
		\caption{}
	\label{jiromaton}
	\end{subfigure}
	\end{subfigure}	
	\caption{(a) The minimal CSL-structured DFA for $\mathcal{L} = (a+b)^*a$; (b) The canonical RFSA for $\mathcal{L} = (a+b)^*a$.}
\end{figure}

\begin{example}
	\begin{itemize}
		\item 
			A tuple $\langle Y, i, d \rangle$ is a generator for a $\mathcal{P}$-algebra $L = \langle X,h \rangle \simeq \langle X, \vee^h \rangle$ iff $x = \bigvee^h_{y \in d(x)} i(y) $ for all $x \in X$. Note that if $Y \subseteq X$ is a subset, then $i(y) = y$ for all $y \in Y$.
            If $L$ satisfies the descending chain condition, which is in particular the case if $X$ is finite, then defining $i(y) = y$ and $d(x) = \lbrace y \in J(L) \mid y \leq x \rbrace$ turns the set of join-irreducibles\footnote{A non-zero element $x\in L$ is called \emph{join-irreducible} if for all $y,z \in L$ such that $x=y \vee z$ one finds $x = y $ or $x = z$.} $J(L)$ into a size-minimal generator $\langle J(L), i, d \rangle$ for $L$, cf. \ifdefined\doappendix\Cref{joinirreducstateminimal}\else \cite[Lemma~55]{arxiv}\fi.
		\item 
			A tuple $\langle Y, i, d \rangle$ is a generator for a $\mathcal{R}$-algebra $V = \langle X,h \rangle \simeq \langle X, +^h, \cdot^h \rangle$ iff $x = \sum^h_{y \in Y} d(x)(y) \cdot^h i(y)$ for all $x \in X$.
			As it is well-known that every vector space can be equipped with a basis, every $\mathcal{R}$-algebra $V$ admits a basis. One can show that a basis is size-minimal, cf. \ifdefined\doappendix\Cref{xorbasisstateminimal}\else \cite[Lemma~52]{arxiv}\fi.
	\end{itemize}
\end{example}

It is enough to find generators for the underlying algebra of a bialgebra to derive an equivalent free bialgebra.
This is because the algebraic and coalgebraic components are tightly intertwined via a distributive law.
\begin{proposition}
\label[proposition]{forgenerator-isharp-is-bialgebra-hom}
	Let $\langle X, h, k\rangle$ be a $\lambda$-bialgebra and let $\langle Y, i, d \rangle$ be a generator for the $T$-algebra $\langle X,h \rangle$.
	Then $h \circ Ti \colon \textnormal{exp}_T(\langle Y, Fd \circ k \circ i\rangle ) \rightarrow \langle X, h, k \rangle$ is a $\lambda$-bialgebra homomorphism.
\end{proposition}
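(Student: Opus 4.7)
The plan is to verify the two conditions that together constitute a $\lambda$-bialgebra homomorphism, namely that $h \circ Ti$ is both a $T$-algebra morphism from $\langle TY, \mu_Y \rangle$ to $\langle X, h \rangle$ and an $F$-coalgebra morphism from $\langle TY, F\mu_Y \circ \lambda_{TY} \circ T(Fd \circ k \circ i) \rangle$ to $\langle X, k \rangle$.

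The $T$-algebra homomorphism condition $h \circ T(h \circ Ti) = (h \circ Ti) \circ \mu_Y$ reduces immediately: the multiplication axiom $h \circ Th = h \circ \mu_X$ collapses the left-hand side to $h \circ \mu_X \circ T^2 i$, and naturality of $\mu$ (in the form $\mu_X \circ T^2 i = Ti \circ \mu_Y$) rearranges this into the right-hand side. This step is routine and uses only the $T$-algebra structure of $\langle X, h \rangle$, not the generator or bialgebra data.

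The core of the proof is the $F$-coalgebra homomorphism equation
$k \circ h \circ Ti \;=\; F(h \circ Ti) \circ F\mu_Y \circ \lambda_{TY} \circ T(Fd \circ k \circ i)$.
I would start from the right-hand side and rewrite it in the following order. First, functoriality of $F$ and $T$ collects terms into $F(h \circ Ti \circ \mu_Y) \circ \lambda_{TY} \circ TFd \circ T(k \circ i)$. Naturality of $\mu$ then replaces $Ti \circ \mu_Y$ with $\mu_X \circ T^2 i$, and naturality of the distributive law $\lambda$ applied to $d\colon X \to TY$ replaces $\lambda_{TY} \circ TFd$ with $FTd \circ \lambda_X$. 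Collecting the result under $F$ gives $F(h \circ \mu_X \circ T^2 i \circ Td) \circ \lambda_X \circ T(k \circ i)$. Using the algebra axiom $h \circ \mu_X = h \circ Th$, the inner composite becomes $T(h \circ Ti \circ d)$, which by the generator equation $h \circ Ti \circ d = \textnormal{id}_X$ is just $\textnormal{id}_{TX}$. What remains is $Fh \circ \lambda_X \circ Tk \circ Ti$, and applying the $\lambda$-bialgebra law $Fh \circ \lambda_X \circ Tk = k \circ h$ produces the left-hand side $k \circ h \circ Ti$.

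The main obstacle will be scheduling the naturality and algebra rewrites in the correct order so that $Td$ ends up adjacent to $T^2 i$ and $Th$, allowing the chain $h \circ Th \circ T^2 i \circ Td = T(h \circ Ti \circ d) = \textnormal{id}$ to consume the decomposition map; once that cancellation is set up, the $\lambda$-bialgebra law closes the diagram with no further work.
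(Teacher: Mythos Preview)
Your proof is correct and follows essentially the same route as the paper's. The paper records the $F$-coalgebra homomorphism verification as a single commuting diagram whose cells are precisely the ingredients you list—naturality of $\mu$, naturality of $\lambda$, the algebra axiom $h \circ Th = h \circ \mu_X$, the generator identity $h \circ Ti \circ d = \textnormal{id}_X$, and the bialgebra law—so your equational rewrite is just a linear reading of that diagram (the only cosmetic difference is that the paper invokes naturality of $\lambda$ at $h \circ Ti$ whereas you invoke it at $d$, which amounts to traversing the same square from the other side).
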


Intuitively, the bialgebra $\langle X, h, k \rangle$ is a deterministic automaton with additional algebraic structure in the monad $T$ and say initial state $x \in X$, while the equivalent free bialgebra is the determinisation of the succinct automaton $Fd \circ k \circ i \colon Y \rightarrow FTY$ with side-effects in $T$ and initial state $d(x) \in TY$. 
The following result further observes that if one considers a basis for the underlying algebraic structure of a bialgebra, rather than just a generator, then the equivalent free bialgebra is in fact isomorphic to the original bialgebra.

\begin{proposition}
\label[proposition]{forbasis-isharp-is-bialgebra-iso}
	Let $\langle X, h, k\rangle$ be a $\lambda$-bialgebra and let $\langle Y, i, d \rangle$ be a basis for the $T$-algebra $\langle X,h \rangle$.
	Then $h \circ Ti \colon \textnormal{exp}_T(\langle Y, Fd \circ k \circ i\rangle ) \rightarrow \langle X, h, k \rangle$ is a $\lambda$-bialgebra isomorphism.
\end{proposition}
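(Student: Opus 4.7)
The plan is to reduce the statement to \Cref{forgenerator-isharp-is-bialgebra-hom} together with a straightforward inversion argument, exploiting the extra equation that distinguishes a basis from a generator.

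First I would invoke \Cref{forgenerator-isharp-is-bialgebra-hom} directly: since every basis is in particular a generator, we already know that $h \circ Ti \colon \textnormal{exp}_T(\langle Y, Fd \circ k \circ i\rangle) \to \langle X, h, k \rangle$ is a $\lambda$-bialgebra homomorphism. Hence the entire task reduces to showing that $h \circ Ti$ is an isomorphism in the underlying category $\C$, because a bialgebra homomorphism that is invertible in $\C$ is automatically a bialgebra isomorphism (its inverse is both a $T$-algebra homomorphism and an $F$-coalgebra homomorphism by a standard diagrammatic argument: precomposing and postcomposing with the inverse in the homomorphism square yields the dual square).

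Next I would use the two defining equations of a basis to exhibit $d$ as a two-sided inverse of $h \circ Ti$. The generator equation $(h \circ Ti) \circ d = \id_X$ provides one direction, and the additional basis equation $d \circ (h \circ Ti) = \id_{TY}$ provides the other. Together they establish $h \circ Ti$ as an iso in $\C$ with inverse $d$.

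Finally, to be explicit, I would briefly verify that $d$ is a homomorphism of the requisite kind by a short diagram chase, though this is in principle superfluous once one invokes the general fact above. For the algebraic part, starting from the homomorphism identity $(h \circ Ti) \circ \mu_Y = h \circ T(h \circ Ti)$ and pre-/post-composing with $d$ and $Td$ and using that both $d \circ (h \circ Ti) = \id_{TY}$ and $Td \circ T(h \circ Ti) = \id_{T^2 Y}$ yields $d \circ h = \mu_Y \circ Td$; analogously for the coalgebraic square relating $k$ with $Fd \circ k \circ i$ lifted to $TY$, one derives the compatibility of $d$ with the two $F$-coalgebra structures. I expect no genuine obstacle here: the argument is essentially formal, and the only subtlety is keeping track of which side of the basis equation is being used at each step, which is why I would lean on the general automatic-inverse fact rather than chase the diagrams explicitly.
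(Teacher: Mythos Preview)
Your proposal is correct and follows essentially the same approach as the paper: both invoke \Cref{forgenerator-isharp-is-bialgebra-hom} to get that $h \circ Ti$ is a $\lambda$-bialgebra homomorphism, observe that the two basis equations make $d$ a two-sided inverse in $\C$, and then argue that $d$ is itself a bialgebra homomorphism. The only stylistic difference is that you lean primarily on the standard fact that the $\C$-inverse of a (bi)algebra homomorphism is automatically a (bi)algebra homomorphism, whereas the paper carries out the explicit diagram chases for the $T$-algebra and $F$-coalgebra squares; your shortcut is entirely legitimate and indeed renders the explicit verification you sketch redundant.
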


We conclude this section by illustrating how \Cref{forgenerator-isharp-is-bialgebra-hom} can be used to construct the canonical RFSA \cite{DenisLT02}, the canonical nominal RFSA \cite{moerman2019residual}, and the minimal xor automaton \cite{VuilleminG210} for a regular language $\mathcal{L}$ over some alphabet $A$. All examples follow three analogous steps:
\begin{enumerate}
	\item We construct the minimal\footnote{Minimal in the sense that every state is reachable by an element of $A^*$ and no two different states observe the same language.} pointed coalgebra $M_{\mathcal{L}}$ for the (nominal) set endofunctor $F = 2 \times (-)^{A}$ accepting $\mathcal{L}$. For the case $A = \lbrace a, b \rbrace$ and $\mathcal{L} = (a+b)^*a$, the coalgebra $M_{\mathcal{L}}$ is depicted in \Cref{m(l)}.
	\item We equip the former with additional algebraic structure in a monad $T$ (which is related to $F$ via a canonically induced distributive law $\lambda$) by generating the $\lambda$-bialgebra $\textnormal{free}_T(M_{\mathcal{L}})$. By identifying semantically equivalent states we consequently derive the minimal\footnote{Minimal in the sense that every state is reachable by an element of $T(A^*)$ and no two different states observe the same language.} (pointed) $\lambda$-bialgebra $\langle X, h, k \rangle$  for $\mathcal{L}$.
	\item We identify canonical generators $\langle Y, i, d \rangle$ for $\langle X, h \rangle$ and use \Cref{forgenerator-isharp-is-bialgebra-hom} to derive an equivalent succinct automaton $\langle Y, Fd \circ k \circ i \rangle$ with side-effects in $T$.
\end{enumerate}

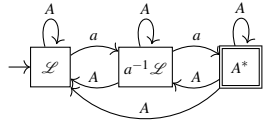
\begin{figure*}
\tiny
\centering
	\begin{tikzpicture}[node distance=6em]
			\node[state, shape=rectangle, initial, initial text=] (L) {$\mathcal{L}$};
				\node[state, shape=rectangle, right of=L] (aL) {$a^{-1}\mathcal{L}$};	
\node[state, accepting, shape=rectangle, right of=aL] (A) {$A^*$};	
		    \path[->]
		(L) edge[loop above] node{$A$} (L)
		(L) edge[above, bend left] node{$a$} (aL)
		(aL) edge[above, bend left] node{$a$} (A)
		(aL) edge[loop above] node{$A$} (aL)
		(aL) edge[above, bend left] node{$A$} (L)
		(A) edge[loop above] node{$A$} (A)
		(A) edge[above, bend left=45] node{$A$} (L)
		(A) edge[above, bend left] node{$A$} (aL)
		;
		\end{tikzpicture}
		\caption{The orbit-finite representation of the canonical nominal RFSA for $\mathcal{L} =  \lbrace v a w a u \mid v, w, u \in A^*, a \in A \rbrace$.}
				\label{fig:canonialnominalrfsa}
\end{figure*}

\begin{example}\deftag{The canonical RFSA}
\label[example]{canonicalrfsaexample}
    Using the $\mathcal{P}$-algebra structure $h^{\mathcal{P}}: \mathcal{P}2 \rightarrow 2$ with $h^{\mathcal{P}}(\varphi) = \varphi(1)$, we derive a canonical distributive law $\lambda^{\mathcal{P}}$ between $F$ and the powerset monad $\mathcal{P}$. The minimal pointed $\lambda^{\mathcal{P}}$-bialgebra for $\mathcal{L} = (a+b)^*a$ with its underlying CSL structure is depicted in \Cref{overlineml}; the construction can be verified with the help of \ifdefined\doappendix\Cref{freepowersetbialgebrastructure}\else \cite[Lemma~47]{arxiv}\fi. The partially ordered state space $L = \lbrace \lbrack \emptyset \rbrack \leq \lbrack \lbrace x \rbrace \rbrack \leq \lbrack \lbrace y \rbrace \rbrack \rbrace$ is necessarily finite, thus satisfies the descending chain condition, which turns the set of join-irreducibles into a size-minimal generator $\langle J(L), i, d\rangle$ with $i(y) = y$ and $d(x) = \lbrace y \in J(L) \mid y \leq x \rbrace$, cf. \ifdefined\doappendix\Cref{joinirreducstateminimal}\else \cite[Lemma~55]{arxiv}\fi. In this case, the join-irreducibles are given by all non-zero states. The $\mathcal{P}$-succinct automaton consequently induced by \Cref{forgenerator-isharp-is-bialgebra-hom} is depicted in \Cref{jiromaton}; it can be recognised as the canonical RFSA, cf. e.g. \cite{MyersAMU15}.
 \end{example}

\begin{example}\deftag{The canonical nominal RFSA}
\label[example]{nominalexample}
    It is not hard to see that $F$ extends to a functor on the category of nominal sets; the usual strength function is equivariant \ifdefined\doappendix(\Cref{equivariantstrength})\else\cite[Lemma~46]{arxiv}\fi; and $h^{\mathcal{P}_{\textnormal{n}}}: \mathcal{P}_{\textnormal{n}}2 \rightarrow 2$ with $h^{\mathcal{P}_{\textnormal{n}}}(\varphi) = \varphi(1)$ defines a $\mathcal{P}_{\textnormal{n}}$-algebra, which induces a canonical distributive law $\lambda^{\mathcal{P}_{\textnormal{n}}}$ between $F$ and the nominal powerset monad $\mathcal{P}_{\textnormal{n}}$. As in \cite{moerman2019residual}, let $\mathcal{L} = \lbrace v a w a u \mid v, w, u \in A^*, a \in A \rbrace$, then $a^{-n}\mathcal{L} = a^{-2}\mathcal{L} = A^*$ for $n \geq 2$, and $v^{-1}\mathcal{L} = \cup_{a \in A} a^{-\vert v \vert_a} \mathcal{L}$, where $\vert v \vert_a$ denotes the number of $a$'s that occur in $v$. In consequence, the nominal CSL underlying the minimal pointed $\lambda^{\mathcal{P}_{\textnormal{n}}}$-bialgebra is generated by the orbit-finite nominal set of join-irreducibles $\lbrace \mathcal{L} \rbrace \cup \lbrace a^{-1} \mathcal{L} \mid a \in A \rbrace \cup \lbrace A^* \rbrace$, which is equipped with the obvious $\textnormal{Perm}(A)$-action and satisfies the inclusion $\mathcal{L} \subseteq a^{-1} \mathcal{L} \subseteq A^*$. The orbit-finite representation of the $\mathcal{P}_{\textnormal{n}}$-succinct automaton consequently induced by \Cref{forgenerator-isharp-is-bialgebra-hom} is depicted in \Cref{fig:canonialnominalrfsa}. 
\end{example}

\begin{example}\deftag{The minimal xor automaton}
\label[example]{minimalxorexmaple}
    The $\mathcal{R}$-algebra structure $h^{\mathcal{R}}: \mathcal{R}2 \rightarrow 2$ with $h^{\mathcal{R}}(\varphi) = \varphi(1)$ induces a canonical distributive law $\lambda^{\mathcal{R}}$ between $F$ and the free vector space monad $\mathcal{R}$ over the two element field. The minimal pointed $\lambda^{\mathcal{R}}$-bialgebra accepting $\mathcal{L} = (a+b)^*a$ is depicted in \Cref{fm(l)xor} and coincides with the bialgebra freely generated by the $F$-coalgebra in \Cref{m(l)}. The construction can be verified using \ifdefined\doappendix\Cref{freexorbialgebrastructure}\else\cite[Lemma~50]{arxiv}\fi.
    The underlying vector space structure necessarily has a basis; we choose the size-minimal generator  $\langle Y, i, d \rangle$ with $Y = \lbrace \lbrace x \rbrace, \lbrace x, y \rbrace \rbrace$, $i(y) = y$, and $d(\emptyset) = \emptyset$, $d(\lbrace x \rbrace) = \lbrace \lbrace x \rbrace \rbrace$, $d(\lbrace y \rbrace) = \lbrace \lbrace x \rbrace$, $\lbrace x, y \rbrace \rbrace$, $d(\lbrace x, y \rbrace) = \lbrace \lbrace x, y \rbrace \rbrace$, which is sufficient by \ifdefined\doappendix\Cref{xorbasisstateminimal}\else\cite[Lemma~52]{arxiv}\fi. The $\mathcal{R}$-succinct automaton induced by \Cref{forgenerator-isharp-is-bialgebra-hom} is depicted in \Cref{xorautomaton}; it can be recognised as the minimal xor automaton, cf. e.g. \cite{MyersAMU15}.  
\end{example}

\section{Changing the type of succinct automata}

\label{succinctbialgebra2}

This section contains a generalisation of the approach in \Cref{succinctbialgebra1}. The extension is based on the observation that in the last section we implicitly considered \textit{two} types of monads: (i) a monad $S$ that describes the additional algebraic structure of a given deterministic automaton; and (ii) a monad $T$ that captures the side-effects of the succinct automaton that is obtained by the generator-based translation.
In \Cref{forgenerator-isharp-is-bialgebra-hom}, the main result of the last section, the monads coincided, but to recover for instance the \'atomaton \cite{BrzozowskiT14} we will have to extend \Cref{forgenerator-isharp-is-bialgebra-hom} to a situation where $S$ and $T$ can differ. 

\subsection{Relating distributive laws}

We now introduce the main technical ingredient of our extension:  \textit{distributive law homomorphisms}.
As before, we present the theory on the level of arbitrary bialgebras, even though we will later focus on the case where the coalgebraic dynamics are those of deterministic automata. Distributive law homomorphisms will allow us to shift a bialgebra over a monad $S$ to an equivalent bialgebra over a monad $T$, for which we can then find, analogous to \Cref{succinctbialgebra1}, an equivalent succinct representation. The notion we use is an instance of a much more general definition that allows to relate distributive laws on two different categories. We restrict to the case where both distributive laws are given over the same behavioural endofunctor $F$.

\begin{definition}\deftag{Distributive law homomorphism \cite{watanabe2002well, power2002combining}}
Let $\lambda^{S}: SF \rightarrow FS$ and $\lambda^{T}: TF \rightarrow FT$ be distributive laws between monads $S$ and $T$ and an endofunctor $F$, respectively.
	A \textnormal{distributive law homomorphism} $\alpha: \lambda^{S} \rightarrow \lambda^{T}$ consists of a natural transformation $\alpha: T \Rightarrow S$ satisfying $\mu^S \circ \alpha_S \circ T \alpha = \alpha \circ \mu^T$, $\alpha \circ \eta^T = \eta^S$ and $\lambda^S \circ \alpha_F = F\alpha \circ \lambda^T$.
\end{definition}

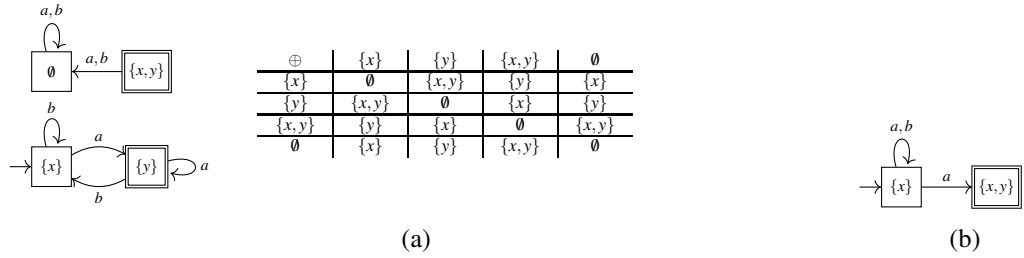
\begin{figure}[t]
\centering
\begin{subfigure}[c]{\columnwidth}
\centering
\begin{subfigure}[b]{.7 \columnwidth}
\tiny
					\adjustbox{valign=m}{
\begin{tikzpicture}[node distance=6em]
	\node[state, shape=rectangle] (0) {$\emptyset$};
		\node[state, shape=rectangle, right of=0, accepting] (xory) {$\lbrace x,y \rbrace$};
	\node[state, shape=rectangle, below of=0, initial, initial text=] (x) {$\lbrace x \rbrace$};
	\node[state, shape=rectangle, right of=x, accepting] (y) {$\lbrace y \rbrace$};	
	    \path[->]
	(0) edge[loop above] node{$a,b$} (0)
	(xory) edge[above] node{$a,b$} (0)
	(x) edge[loop above] node{$b$} (x)
	(x) edge[above, bend left] node{$a$} (y)
	(y) edge[below, bend left] node{$b$} (x)
	(y) edge[loop right] node{$a$} (y)
	;
	\end{tikzpicture}
	}
	\qquad
						\adjustbox{valign=m}{
				\resizebox{0.45 \columnwidth}{!}{%
		\begin{tabular}[]{ c|c|c|c|c } 
 $\oplus$ & $ \lbrace x \rbrace $ & $ \lbrace y \rbrace$ & $\lbrace x, y \rbrace$ &   $\emptyset$  \\
 \hline
 $ \lbrace x \rbrace $ & $\emptyset$ & $\lbrace x, y \rbrace$ & $\lbrace y \rbrace$ & $\lbrace x \rbrace$ \\
 \hline
 $ \lbrace y \rbrace$ & $\lbrace x, y \rbrace$ & $\emptyset$ & $\lbrace x \rbrace$ & $\lbrace y \rbrace$\\
\hline
 $\lbrace x, y \rbrace$ & $\lbrace y \rbrace$ & $\lbrace x \rbrace$ & $\emptyset$ & $\lbrace x, y \rbrace$\\
\hline
 $\emptyset$ & $ \lbrace x \rbrace $ & $ \lbrace y \rbrace$ & $\lbrace x, y \rbrace$ &   $\emptyset$
\end{tabular}
}}
	\caption{}
\label{fm(l)xor}
\end{subfigure}
\begin{subfigure}[b]{.2 \columnwidth}
\tiny
					\adjustbox{valign=m}{
\begin{tikzpicture}[node distance=6em]
		\node[state, shape=rectangle, initial, initial text=] (x) {$\lbrace x \rbrace$};
		\node[state, shape=rectangle, right of=x, accepting] (xory) {$\lbrace x,y \rbrace$};
	    \path[->]
	(x) edge[loop above] node{$a,b$} (x)
	(x) edge[above] node{$a$} (xory)
	;
	\end{tikzpicture}
	}
	\caption{}
\label{xorautomaton}
\end{subfigure}
\end{subfigure}
\caption{(a) The minimal $\mathbb{Z}_2$-Vect structured DFA for $\mathcal{L} = (a+b)^*a$ (freely-generated by the DFA in \Cref{m(l)}); (b) Up to the choice of a basis, the minimal xor automaton for $\mathcal{L} = (a+b)^*a$.}
\end{figure}

The above definition is such that $\alpha$ induces a functor between the categories of $\lambda^S$- and $\lambda^T$-bialgebras.

\begin{lemma}\tagcite{klin2015presenting, bonsangue2013presenting}
\label[lemma]{inducedbialgebra}
			Let $\alpha: \lambda^S \rightarrow \lambda^T$ be a distributive law homomorphism. Then $\alpha \langle X, h, k\rangle := \langle X, h \circ \alpha_X, k \rangle$ and  $\alpha(f) := f$ defines a functor $\alpha: \textnormal{Bialg}(\lambda^S) \rightarrow \textnormal{Bialg}(\lambda^T)$.
\end{lemma}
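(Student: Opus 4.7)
The plan is to verify directly the three things a functor demands: (a) the assignment on objects does produce a $\lambda^T$-bialgebra, (b) the assignment on morphisms produces $\lambda^T$-bialgebra homomorphisms, and (c) identities and composites are preserved. Preservation of identities and composites is immediate because $\alpha(f) := f$, so the real content lies in (a) and (b), which amount to checking the three monad/distributive-law axioms of the definition against the algebra and bialgebra laws already present in $\langle X, h, k\rangle$.

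For (a), I would first show that $h' := h \circ \alpha_X$ is a $T$-algebra. The unit law follows from $\alpha \circ \eta^T = \eta^S$ together with $h$ being an $S$-algebra, giving $h' \circ \eta^T_X = h \circ \eta^S_X = \textnormal{id}_X$. The associativity law requires rewriting $h' \circ \mu^T_X$ using $\alpha \circ \mu^T = \mu^S \circ \alpha_S \circ T\alpha$ to move $\alpha$ past $\mu^T$, then using $h \circ \mu^S_X = h \circ Sh$, and finally naturality of $\alpha$ (i.e.\ $\alpha_X \circ Th = Sh \circ \alpha_{SX}$) to turn this back into $h' \circ T h'$. Second, to show the bialgebra compatibility $Fh' \circ \lambda^T_X \circ Tk = k \circ h'$, I would push $F\alpha_X$ past $\lambda^T_X$ using the distributive-law axiom $\lambda^S \circ \alpha_F = F\alpha \circ \lambda^T$, then use naturality of $\alpha$ in $k$ to obtain $\lambda^S_X \circ Sk \circ \alpha_X$, and finally invoke the $\lambda^S$-bialgebra law for $\langle X, h, k\rangle$ to conclude with $k \circ h \circ \alpha_X = k \circ h'$.

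For (b), given a $\lambda^S$-bialgebra homomorphism $f \colon \langle X, h_X, k_X\rangle \to \langle Y, h_Y, k_Y\rangle$, the $F$-coalgebra part of being a $\lambda^T$-bialgebra homomorphism is unchanged. For the $T$-algebra part one computes $f \circ h_X \circ \alpha_X = h_Y \circ Sf \circ \alpha_X = h_Y \circ \alpha_Y \circ Tf$, where the first equality uses that $f$ is an $S$-algebra homomorphism and the second is naturality of $\alpha$ in $f$. Together with the trivial preservation of identities and composition, this yields the desired functor.

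None of the individual steps is hard, but the bookkeeping is slightly delicate because each axiom is used exactly once and in a specific direction; the main point requiring care is keeping straight that $\alpha$ has type $T \Rightarrow S$ while the induced $T$-algebra structure is obtained by \emph{pre}composing $h$ with $\alpha_X$, so that naturality of $\alpha$ must be applied in each of the two verifications to bridge between $S$- and $T$-levels. Once this orientation is fixed, the three distributive-law-homomorphism axioms line up one-to-one with unit, multiplication, and bialgebra compatibility.
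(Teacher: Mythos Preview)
Your proposal is correct and follows essentially the same approach as the paper: both verify that $h\circ\alpha_X$ is a $T$-algebra using the unit and multiplication axioms of $\alpha$, establish the $\lambda^T$-bialgebra compatibility by combining the distributive-law axiom of $\alpha$ with naturality and the given $\lambda^S$-bialgebra law, and handle morphisms via naturality of $\alpha$. The paper presents these as commutative diagrams while you describe them equationally, but the underlying arguments coincide step for step.
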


The next result is a straightforward consequence of \Cref{forgenerator-isharp-is-bialgebra-hom}, and may be strengthened to an isomorphism in case one is given a basis instead of a generator, analogous to \Cref{forbasis-isharp-is-bialgebra-iso}. It can be seen as a road map to the approach we propose in this section.

\begin{corollary}
\label[corollary]{generatorbialgebrahom}
	Let $\alpha: \lambda^S \rightarrow \lambda^T$ be a homomorphism between distributive laws and $\langle X,h,k \rangle$ a $\lambda^S$-bialgebra. If $\langle Y, i, d \rangle$ is a generator for the $T$-algebra $\langle X, h \circ \alpha_X \rangle$, then $
	(h \circ \alpha_X) \circ Ti: \textnormal{exp}_T(\langle  Y, Fd \circ k \circ i \rangle) \rightarrow \langle X, h \circ \alpha_X, k \rangle
	$ is a $\lambda^T$-bialgebra homomorphism.
\end{corollary}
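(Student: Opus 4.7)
The plan is to derive this corollary as a direct composition of two earlier results in the excerpt, namely \Cref{inducedbialgebra} and \Cref{forgenerator-isharp-is-bialgebra-hom}. The key observation is that the assumption on $\alpha$ does exactly one thing for us: it lets us move from the category of $\lambda^S$-bialgebras to the category of $\lambda^T$-bialgebras while keeping the underlying coalgebra $k$ intact and merely precomposing the algebra structure with $\alpha_X$.

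First, I would invoke \Cref{inducedbialgebra}. Since $\alpha : \lambda^S \to \lambda^T$ is a distributive law homomorphism, the induced functor $\alpha : \textnormal{Bialg}(\lambda^S) \to \textnormal{Bialg}(\lambda^T)$ sends the given $\lambda^S$-bialgebra $\langle X, h, k\rangle$ to $\alpha\langle X,h,k\rangle = \langle X, h \circ \alpha_X, k\rangle$, which is a genuine $\lambda^T$-bialgebra. This single step absorbs all the compatibility conditions in the definition of a distributive law homomorphism (unit, multiplication, and the pentagon with $F$): one does not need to re-verify $F(h \circ \alpha_X) \circ \lambda^T_X \circ T k = k \circ (h \circ \alpha_X)$ by hand, since it has already been packaged in that lemma.

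Second, the hypothesis gives us a generator $\langle Y, i, d\rangle$ for the $T$-algebra $\langle X, h \circ \alpha_X\rangle$, which is precisely the underlying $T$-algebra of the $\lambda^T$-bialgebra produced in the previous step. I would therefore apply \Cref{forgenerator-isharp-is-bialgebra-hom} to the $\lambda^T$-bialgebra $\langle X, h \circ \alpha_X, k\rangle$ together with this generator. That proposition immediately yields that
\[
(h \circ \alpha_X) \circ Ti \;:\; \textnormal{exp}_T(\langle Y, Fd \circ k \circ i\rangle) \longrightarrow \langle X, h \circ \alpha_X, k\rangle
\]
is a $\lambda^T$-bialgebra homomorphism, which is exactly the statement of the corollary.

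I do not foresee a genuine obstacle here; the only thing to be careful about is to phrase the argument in the right order, since \Cref{forgenerator-isharp-is-bialgebra-hom} needs a $\lambda^T$-bialgebra as input, not a $\lambda^S$-bialgebra, so the invocation of \Cref{inducedbialgebra} must come first. The remark in the statement about strengthening to an isomorphism under a basis is then also immediate: simply replace \Cref{forgenerator-isharp-is-bialgebra-hom} by \Cref{forbasis-isharp-is-bialgebra-iso} in the same composition, yielding a $\lambda^T$-bialgebra isomorphism.
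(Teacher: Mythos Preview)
Your proposal is correct and follows exactly the paper's own proof: first invoke \Cref{inducedbialgebra} to see that $\langle X, h \circ \alpha_X, k\rangle$ is a $\lambda^T$-bialgebra, then apply \Cref{forgenerator-isharp-is-bialgebra-hom} to that bialgebra with the given generator. The additional remark about strengthening to an isomorphism via \Cref{forbasis-isharp-is-bialgebra-iso} is also in line with the paper's commentary preceding the corollary.
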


\subsection{Deriving distributive law  relations}

We now turn to the procedure of deriving a distributive law homomorphism. In practice, coming up with a natural transformation and proving that it lifts to a distributive law homomorphism can be quite cumbersome. 

 Fortunately, for certain cases, there is a way to simplify things significantly. For instance, as the next result shows, if, as in \eqref{induceddistrlaweq}, the involved distributive laws are induced by algebra structures $h^{S}$ and $h^{T}$ for an output set $B$, respectively, then one of the conditions is implied by a less convoluted constraint.

\begin{lemma}
\label[lemma]{distributivelawaxiomeasier}
	Let $\alpha: T \Rightarrow S$ be a natural transformation satisfying $h^{S} \circ \alpha_B = h^{T}$, then $\lambda^{S} \circ \alpha_F = F \alpha \circ \lambda^{T}$.
\end{lemma}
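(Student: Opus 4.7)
The plan is to unfold both $\lambda^S$ and $\lambda^T$ using formula \eqref{induceddistrlaweq}, push $\alpha$ through the components of the pairing using naturality, and then verify the two components of the resulting product separately. Specifically, writing $FX = B \times X^A$, we want to prove
\[
(h^S \times \textnormal{st}) \circ \langle S\pi_1, S\pi_2 \rangle \circ \alpha_{B \times X^A} \; = \; (\textnormal{id}_B \times \alpha_X^A) \circ (h^T \times \textnormal{st}) \circ \langle T\pi_1, T\pi_2 \rangle.
\]
First I would use naturality of $\alpha$ applied to the projections $\pi_1 \colon B \times X^A \to B$ and $\pi_2 \colon B \times X^A \to X^A$, which yields $S\pi_i \circ \alpha_{B \times X^A} = \alpha_{(-)} \circ T\pi_i$. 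After this rewriting, the two sides reduce to the pair of equations $h^S \circ \alpha_B \circ T\pi_1 = h^T \circ T\pi_1$ (first component) and $\textnormal{st} \circ \alpha_{X^A} \circ T\pi_2 = \alpha_X^A \circ \textnormal{st} \circ T\pi_2$ (second component).

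The first component is immediate from the hypothesis $h^S \circ \alpha_B = h^T$. The second component follows from a separate small lemma, namely that $\alpha$ commutes with the strength, $\textnormal{st} \circ \alpha_{X^A} = \alpha_X^A \circ \textnormal{st}$. To see this, I would verify the equality pointwise: for $U \in T(X^A)$ and $a \in A$,
\[
\textnormal{st}(\alpha_{X^A}(U))(a) = S(\textnormal{ev}_a)(\alpha_{X^A}(U)) \stackrel{(\ast)}{=} \alpha_X(T(\textnormal{ev}_a)(U)) = \alpha_X(\textnormal{st}(U)(a)),
\]
where $(\ast)$ is naturality of $\alpha$ with respect to $\textnormal{ev}_a \colon X^A \to X$.

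The argument is essentially a diagram chase, and the only step requiring the hypothesis $h^S \circ \alpha_B = h^T$ is the first projection; the second projection uses only naturality of $\alpha$. I do not anticipate a serious obstacle: the proof reduces entirely to naturality of $\alpha$ at $\pi_1$, $\pi_2$, and $\textnormal{ev}_a$, plus the assumption on the output algebras. The mildest subtlety is confirming that this naturality-based compatibility between $\alpha$ and strength is available in the category $\C$; in $\Set$ (and more generally in any cartesian closed category with the pointwise definition of $\textnormal{st}$ used here) it is automatic.
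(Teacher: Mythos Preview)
Your proposal is correct and essentially identical to the paper's proof: the paper also factors the equation into two squares via naturality of $\alpha$ at the projections, dispatches the $B$-component using the hypothesis $h^S \circ \alpha_B = h^T$, and proves $\textnormal{st} \circ \alpha_{X^A} = \alpha_X^A \circ \textnormal{st}$ by the same pointwise calculation with $\textnormal{ev}_a$.
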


The next result shows that for the neighbourhood monad there exists a family of \textit{canonical} choices of distributive law homomorphisms parametrised by Eilenberg-Moore algebra structures on the output set $B = 2$. While it is well-known that such algebras induce a monad morphism, for instance in the coalgebraic modal logic community \cite{klin2004coalgebraic, schroder2008expressivity, hansen2014strong}, its commutativity with canonical distributive laws has not been observed before. Moreover, we provide a new formalisation in terms of the strength function, which allows the result to be lifted to strong monads and arbitrary output objects on other categories than the one of sets and functions.

\begin{proposition}
\label[proposition]{algebrainduceddistributivellawhom}
	Any algebra $h: T2 \rightarrow 2$ over a set monad $T$ induces a homomorphism $\alpha^{h}: \lambda^{\mathcal{H}} \rightarrow \lambda^{h}$ between distributive laws by $\alpha^{h}_X := h^{2^X} \circ \textnormal{st} \circ T(\eta^{\mathcal{H}}_X)$.\end{proposition}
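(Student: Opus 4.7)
My approach is to unpack $\alpha^h_X := h^{2^X} \circ \textnormal{st} \circ T(\eta^{\mathcal{H}}_X)$ componentwise and verify in turn the four obligations of a distributive law homomorphism $\alpha^h : \lambda^{\mathcal{H}} \rightarrow \lambda^h$: naturality of $\alpha^h : T \Rightarrow \mathcal{H}$, the unit and multiplication axioms of a monad morphism, and the compatibility with the distributive laws. Naturality in $X$ is a routine diagram chase using the naturality of $\eta^{\mathcal{H}}$ and of $\textnormal{st}$ together with the observation that $h^{(-)}$ is $h$ applied componentwise; for $f : X \to Y$ the identity $\mathcal{H} f \circ \alpha^h_X = \alpha^h_Y \circ Tf$ unfolds straightforwardly using $\mathcal{H} f(\Phi)(\varphi) = \Phi(\varphi \circ f)$.

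For the unit axiom $\alpha^h \circ \eta^T = \eta^{\mathcal{H}}$, evaluating at $\varphi \in 2^X$ and invoking naturality of $\eta^T$ yields $\alpha^h_X(\eta^T_X(x))(\varphi) = h(\textnormal{st}(\eta^T_{\mathcal{H}X}(\eta^{\mathcal{H}}_X(x)))(\varphi)) = h(\eta^T_2(\varphi(x))) = \varphi(x) = \eta^{\mathcal{H}}_X(x)(\varphi)$, where the penultimate step uses the algebra identity $h \circ \eta^T_2 = \textnormal{id}_2$ and the concrete form $\eta^{\mathcal{H}}_X(x)(\varphi) = \varphi(x)$ of the unit of $\mathcal{H}$. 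The multiplication axiom $\mu^{\mathcal{H}} \circ \alpha^h_{\mathcal{H}} \circ T\alpha^h = \alpha^h \circ \mu^T$ is the main calculational hurdle: after evaluating both sides at some $U \in T^2 X$ and some $\varphi \in 2^X$, one has to push $T\alpha^h(U)$ and its image under $\alpha^h_{\mathcal{H}X}$ through two layers of strength and simplify. The relevant identities are the interaction of $\textnormal{st}$ with $\mu^T$ (swapping an outer $T$ past the exponent) together with the Eilenberg--Moore associativity $h \circ \mu^T_2 = h \circ Th$; these collapse the two nested $h$-applications into a single $h$ applied to the multiplied element, matching the right-hand side. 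I expect this step to be the main obstacle.

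Finally, for the distributive-law compatibility $\lambda^{\mathcal{H}} \circ \alpha^h_F = F\alpha^h \circ \lambda^h$, I invoke \Cref{distributivelawaxiomeasier}, which reduces the problem to the scalar identity $h^{\mathcal{H}} \circ \alpha^h_2 = h$ on the output object. Since $h^{\mathcal{H}}(\Phi) = \Phi(\textnormal{id}_2)$, for any $t \in T2$ one computes $h^{\mathcal{H}}(\alpha^h_2(t)) = \alpha^h_2(t)(\textnormal{id}_2) = h(T(\textnormal{ev}_{\textnormal{id}_2})(T(\eta^{\mathcal{H}}_2)(t))) = h(T(\textnormal{ev}_{\textnormal{id}_2} \circ \eta^{\mathcal{H}}_2)(t))$, and then $\textnormal{ev}_{\textnormal{id}_2} \circ \eta^{\mathcal{H}}_2 = \textnormal{id}_2$ by the explicit form of the neighbourhood unit, so the whole expression collapses to $h(t)$. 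Combined with the preceding verifications this establishes the proposition.
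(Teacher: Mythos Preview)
Your proposal is correct and follows essentially the same route as the paper: both verify naturality, the unit and multiplication laws of a monad morphism, and then invoke \Cref{distributivelawaxiomeasier} to reduce the distributive-law compatibility to the scalar identity $h^{\mathcal{H}} \circ \alpha^h_2 = h$, which is dispatched via $\textnormal{ev}_{\textnormal{id}_2} \circ \eta^{\mathcal{H}}_2 = \textnormal{id}_2$. The only substantive difference is presentational---the paper argues via commutative diagrams where you compute pointwise---and for the multiplication step the paper isolates the identity $2^{\eta^{\mathcal{H}}_{2^X}} \circ \eta^{\mathcal{H}}_{2^{2^X}} = \textnormal{id}_{2^{2^X}}$ (equivalently $\textnormal{ev}_{\eta^{\mathcal{H}}_{2^X}(\varphi)} \circ \eta^{\mathcal{H}}_{\mathcal{H}X} = \textnormal{ev}_\varphi$) as the key simplification that lets $\mu^{\mathcal{H}}$ interact with the rest; this is the ingredient your sketch leaves implicit alongside the strength--$\mu^T$ law and the algebra associativity.
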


The rest of the section is concerned with using \Cref{algebrainduceddistributivellawhom} and \Cref{generatorbialgebrahom} to derive canonical acceptors based on induced distributive law homomorphisms.

\begin{figure}[t]
\centering
\begin{subfigure}[c]{\columnwidth} 
\centering
\begin{subfigure}[b]{0.5 \columnwidth}
\center
\tiny
\begin{tikzpicture}[node distance= 2em, ]
	\node[state, initial, shape=rectangle, initial text=] (1) {$\lbrack \lbrace \lbrace x \rbrace, \lbrace x, y \rbrace \rbrace \rbrack$};

	\node[state, shape=rectangle, right = of 1] (3) {$\lbrack \emptyset \rbrack$};

	\node[state, shape=rectangle, below = of 1, accepting] (5) {$\lbrack \lbrace \lbrace x \rbrace,  \lbrace y \rbrace, \lbrace x, y \rbrace \rbrace \rbrack$};

	\node[state, shape=rectangle, right = of 5, accepting] (15) {$\lbrack \lbrace \lbrace x \rbrace, \lbrace y \rbrace, \lbrace x, y \rbrace, \emptyset \rbrace \rbrack$};
		    \path[->]
	(3) edge[loop above] node{$a,b$} (3)
	(1) edge[loop above] node{$b$} (1)
	(1) edge[right, bend left] node{$a$} (5)
	(5) edge[loop below] node{$a$} (5)
	(5) edge[left, bend left] node{$b$} (1)
	(15) edge[loop below] node{$a,b$} (15)
;
	\end{tikzpicture} \\
		\resizebox{0.3 \columnwidth}{!}{%
		\begin{tabular}[]{ c|c|c|c|c} 
	$\vee$ & $1$ & $2$ & $3$ & $4$ \\
	\hline
	$1$ & $1$ & $1$ & $3$ & $4$ \\
	\hline
	$2$ & $1$ & $2$ & $3$ & $4$ \\
	\hline
	$3$ & $3$ & $3$ & $3$ & $4$ \\
	\hline
	$4$ & $4$ & $4$ & $4$ & $4$\\
	\end{tabular}
	}
			\resizebox{0.3 \columnwidth}{!}{%
			\begin{tabular}[]{ c|c|c|c|c} 
	$\wedge$ & $1$ & $2$ & $3$ & $4$ \\
	\hline
	$1$ & $1$ & $2$ & $1$ & $1$ \\
	\hline
	$2$ & $2$ & $2$ & $2$ & $2$ \\
	\hline
	$3$ & $1$ & $2$ & $3$ & $3$ \\
	\hline
	$4$ & $1$ & $2$ & $3$ & $4$
	\end{tabular}
	}

\caption{}
\label{overlinem(l)distro}
\end{subfigure} 
\begin{subfigure}[b]{0.4 \columnwidth}
\centering
\tiny
\begin{tikzpicture}[node distance=2em]
	\node[state, initial, shape=rectangle, initial text=] (1) {$\lbrack \lbrace \lbrace x \rbrace, \lbrace x, y \rbrace \rbrace \rbrack$};

	\node[state, shape=rectangle,below= of  1, accepting] (5) {$\lbrack \lbrace \lbrace x \rbrace,  \lbrace y \rbrace, \lbrace x, y \rbrace \rbrace \rbrack$};

	\node[state, shape=rectangle, right= of  5, accepting] (15) {$\lbrack \lbrace \lbrace x \rbrace, \lbrace y \rbrace, \lbrace x, y \rbrace, \emptyset \rbrace \rbrack$};
		    \path[->]
	(1) edge[loop above] node{$a,b$} (1)
	(1) edge[right, bend left] node{$a$} (5)
	(5) edge[loop below] node{$a$} (5)
	(5) edge[left, bend left] node{$a,b$} (1)
	(15) edge[loop below] node{$a,b$} (15)
	(15) edge[bend right, right] node{$a,b$} (1)
	(15) edge[below] node{$a,b$} (5)
;
	\end{tikzpicture}
\caption{}
\label{distromaton}
\end{subfigure} 
\end{subfigure}
\caption{(a) The minimal CDL-structured DFA for $\mathcal{L} = (a+b)^*a$, where
$1 \equiv \lbrack \lbrace \lbrace x \rbrace, \lbrace x, y \rbrace \rbrace \rbrack$, $2 \equiv \lbrack \emptyset \rbrack$, $3 \equiv \lbrack \lbrace \lbrace x \rbrace, \lbrace y \rbrace, \lbrace x, y \rbrace \rbrace \rbrack$, $4 \equiv \lbrack \lbrace \lbrace x \rbrace, \lbrace y \rbrace, \lbrace x, y \rbrace, \emptyset \rbrace \rbrack$; (b) The distromaton for $\mathcal{L} = (a+b)^*a$.}
\end{figure}

\subsection{Example: The \'atomaton}

\label{atomatonexample}
We will now justify the previous informal construction of the \'atomaton. 
As hinted before, the \'atomaton can be recovered by relating the neighbourhood monad $\mathcal{H}$---whose algebras are complete \emph{atomic} Boolean algebras (CABAs)---and the powerset monad $\mathcal{P}$. Formally, as a consequence of \Cref{algebrainduceddistributivellawhom} we obtain the following. 

\begin{corollary}
\label[corollary]{alphapowersetneighbourhooddistrlaw}
	Let $\alpha_X: \mathcal{P}X \rightarrow \mathcal{H}X$ satisfy $\alpha_X(\varphi)(\psi) = \bigvee_{x \in X} \varphi(x) \wedge \psi(x)$, then $\alpha$ constitutes a distributive law homomorphism $\alpha: \lambda^{\mathcal{H}} \rightarrow \lambda^{\mathcal{P}}$.
\end{corollary}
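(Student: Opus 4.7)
The plan is to derive this as a direct instantiation of \Cref{algebrainduceddistributivellawhom}. Since the powerset algebra structure $h^{\mathcal{P}} \colon \mathcal{P}2 \to 2$ with $h^{\mathcal{P}}(\varphi) = \varphi(1)$ is a set monad algebra on the two-element output, that proposition already guarantees the existence of a distributive law homomorphism $\alpha^{h^{\mathcal{P}}} \colon \lambda^{\mathcal{H}} \to \lambda^{\mathcal{P}}$ with components $\alpha_X^{h^{\mathcal{P}}} = (h^{\mathcal{P}})^{2^X} \circ \textnormal{st} \circ \mathcal{P}(\eta_X^{\mathcal{H}})$. All that remains is to compute this composite and check that it agrees, pointwise, with the formula stated in the corollary.

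I would unfold the composite from right to left on an element $\varphi \in \mathcal{P}X$, viewed equivalently as a subset $V_\varphi = \{x \in X \mid \varphi(x) = 1\}$. First, $\mathcal{P}(\eta_X^{\mathcal{H}})(\varphi) = \{\eta_X^{\mathcal{H}}(x) \mid x \in V_\varphi\} \in \mathcal{P}(2^{2^X})$, where $\eta_X^{\mathcal{H}}(x)(\psi) = \psi(x)$. Applying the strength function at $\psi \in 2^X$ yields $\textnormal{st}(\mathcal{P}(\eta_X^{\mathcal{H}})(\varphi))(\psi) = \{\psi(x) \mid x \in V_\varphi\} \in \mathcal{P}2$. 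Finally, the pointwise lift $(h^{\mathcal{P}})^{2^X}$ simply post-composes with $h^{\mathcal{P}}$, which on a subset of $2$ evaluates to its join, giving $\bigvee_{x \in V_\varphi} \psi(x)$. Re-expressing this using the characteristic function $\varphi$ of $V_\varphi$ produces $\bigvee_{x \in X} \varphi(x) \wedge \psi(x)$, matching the stated formula exactly.

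The only technical point to be careful about is the identification between subsets and their characteristic functions, which is used implicitly both in the definition of $h^{\mathcal{P}}$ (where $\varphi(1) = 1$ iff $1 \in \varphi$, iff the join over $\varphi$ is $1$) and in recasting $\bigvee_{x \in V_\varphi} \psi(x)$ as $\bigvee_{x \in X} \varphi(x) \wedge \psi(x)$. Once that identification is fixed, the verification is a short chain of equalities and no separate check of the axioms of a distributive law homomorphism is needed, since those are delivered wholesale by \Cref{algebrainduceddistributivellawhom}. This keeps the proof to essentially a one-line invocation of that proposition followed by the above unfolding, with no genuine obstacle beyond bookkeeping.
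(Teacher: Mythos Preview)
Your proposal is correct and follows essentially the same approach as the paper: invoke \Cref{algebrainduceddistributivellawhom} for the algebra $h^{\mathcal{P}}$ and then verify by unfolding that the induced $\alpha^{h^{\mathcal{P}}}$ coincides with the given $\alpha$. The only cosmetic difference is that the paper carries out the unfolding in characteristic-function notation (via $\mathcal{P}(\psi)(\varphi)(1) = \bigvee_{x \in \psi^{-1}(1)} \varphi(x)$) whereas you phrase it in subset notation; the arguments are otherwise identical.
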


The next statement follows from a well-known Stone-type duality \cite{taylor2002subspaces} representation theorem for CABAs.

\begin{lemma}
\label[lemma]{basisshiftecabapowerset}
	Let $\alpha_X: \mathcal{P}X \rightarrow \mathcal{H}X$ satisfy $\alpha_X(\varphi)(\psi) = \bigvee_{x \in X} \varphi(x) \wedge \psi(x)$. If $B = \langle X, h \rangle$ is a $\mathcal{H}$-algebra, then $\langle \textnormal{At}(B), i, d \rangle$ with $i(a) = a$ and $d(x) = \lbrace a \in \textnormal{At}(B) \mid a \leq x \rbrace$ is a basis for the $\mathcal{P}$-algebra $\langle X, h \circ \alpha_X \rangle$.
\end{lemma}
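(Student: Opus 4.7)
The plan is to show that $d$ and $h \circ \alpha_X \circ \mathcal{P}i$ are mutually inverse, which is exactly what the generator and basis conditions demand. I will reduce both to the Tarski representation theorem for CABAs: every complete atomic Boolean algebra $B$ is isomorphic as a CABA to $\mathcal{P}(\textnormal{At}(B))$ via the order-isomorphism $x \mapsto \lbrace a \in \textnormal{At}(B) \mid a \leq x \rbrace$, with inverse $S \mapsto \bigvee S$, where the join is taken in $B$.

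The first step is to unpack the composite $h \circ \alpha_X$ on a subset $S \subseteq X$ and identify it with the CABA join $\bigvee S$. Unfolding the definition of $\alpha_X$, one reads off $\alpha_X(S) = \lbrace \psi \subseteq X \mid S \cap \psi \neq \emptyset \rbrace \in \mathcal{H}X$, which is recognised as the ``disjunction'' element in the free CABA generated by $X$; applying the $\mathcal{H}$-algebra $h$ then evaluates this disjunction to $\bigvee S$ in $B$. A cleaner route I expect to prefer uses that $\alpha$ lifts to a monad morphism $\mathcal{P} \Rightarrow \mathcal{H}$ (implicit in \Cref{alphapowersetneighbourhooddistrlaw} together with the definition of a distributive law homomorphism), so $h \circ \alpha_X$ is automatically a $\mathcal{P}$-algebra structure on $X$, hence a CSL structure. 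Since the resulting partial order must coincide with the one carried by the CABA, the join operation is forced to agree with the CABA join.

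With this identification in hand, the generator condition $(h \circ \alpha_X) \circ \mathcal{P}i \circ d = \textnormal{id}_X$ becomes the assertion $\bigvee \lbrace a \in \textnormal{At}(B) \mid a \leq x \rbrace = x$, which is precisely the atomicity half of Tarski's theorem. The basis condition $d \circ (h \circ \alpha_X) \circ \mathcal{P}i = \textnormal{id}_{\mathcal{P}\textnormal{At}(B)}$ becomes $\lbrace a \in \textnormal{At}(B) \mid a \leq \bigvee S \rbrace = S$ for every $S \subseteq \textnormal{At}(B)$. The inclusion $\supseteq$ is immediate. For $\subseteq$, I would invoke complete distributivity in the CABA: if $a$ is an atom with $a \leq \bigvee S$, then $a = a \wedge \bigvee S = \bigvee_{b \in S}(a \wedge b)$; since each meet of two atoms is either $0$ or $a$, and $a \neq 0$, there must exist some $b \in S$ with $a = b$, so $a \in S$.

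The main obstacle is the first step, namely identifying $h \circ \alpha_X$ with the CABA join without wading through the explicit unit and multiplication of $\mathcal{H}$. I expect the monad-morphism argument to be the path of least resistance, after which the remainder of the proof is a clean invocation of the Tarski representation theorem.
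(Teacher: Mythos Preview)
Your proof is correct and takes a genuinely different route from the paper's. The paper does not identify $h\circ\alpha_X$ with the CABA join for an arbitrary $\mathcal{H}$-algebra $h$; instead it establishes the single identity $2^{\eta^{\mathcal{H}}_X}\circ\alpha_{\mathcal{P}X}=\mu^{\mathcal{P}}_X$ on \emph{free} $\mathcal{H}$-algebras, and then transports this along the Tarski equivalence $d\colon B\simeq K(\textnormal{At}(B))$ with $K(Y)=\langle\mathcal{P}Y,2^{\eta^{\mathcal{H}}_Y}\rangle$. Applying the functor $\alpha(-)$ to this isomorphism yields $\langle X,h\circ\alpha_X\rangle\simeq\langle\mathcal{P}(\textnormal{At}(B)),\mu^{\mathcal{P}}_{\textnormal{At}(B)}\rangle$, which is exactly the basis statement. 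Your approach, by contrast, first argues that $h\circ\alpha_X$ computes the CABA join on an arbitrary $B$, and then checks the two basis equations elementwise via atomicity and complete distributivity. The elementwise verification is entirely sound; the only place requiring care is the step you yourself flag, namely that the CSL order induced by $h\circ\alpha_X$ agrees with the CABA order carried by $h$. This does follow, but not tautologically from $\alpha$ being a monad morphism: one still needs to know how the $\textnormal{Alg}(\mathcal{H})\simeq\textnormal{CABA}$ equivalence reads off joins from $h$, which is precisely what the paper's reduction to free algebras sidesteps. The paper's route has the additional payoff that the identity $2^{\eta^{\mathcal{H}}_X}\circ\alpha_{\mathcal{P}X}=\mu^{\mathcal{P}}_X$ is reused verbatim later in the computation of $\img(\obs_{\exp_{\mathcal{P}}(\mathcal{X})})$; your route is more elementary and self-contained but yields no such reusable lemma.
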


The \'atomaton for the regular language $\mathcal{L} = (a+b)^*a$, for example, can now be obtained as follows.
First, we construct the minimal pointed $\lambda^{\mathcal{H}}$-bialgebra accepting $\mathcal{L}$, which is depicted in \Cref{overlinem(l)atom} together with its underlying CABA structure $B$. The construction can be verified with the help of \ifdefined\doappendix\Cref{freeneighbourhoodbialgebrastructure}\else\cite[Lemma~48]{arxiv}\fi. Using the distributive law homomorphism $\alpha$ of \Cref{alphapowersetneighbourhooddistrlaw}, it can be translated into an equivalent pointed $\lambda^{\mathcal{P}}$-bialgebra with underlying CSL-structure $\alpha(B)$. By \Cref{basisshiftecabapowerset} the atoms $\textnormal{At}(B)$ of $B$ form a basis for $\alpha(B)$. In this case the atoms are given by $\lbrack \lbrace \lbrace x \rbrace, \lbrace x, y \rbrace \rbrace \rbrack, \lbrack \lbrace \lbrace y \rbrace \rbrace \rbrack$ and $\lbrack \lbrace \emptyset \rbrace \rbrack$. The $\mathcal{P}$-succinct automaton consequently induced by \Cref{generatorbialgebrahom} is depicted in \Cref{atomaton}; it can be recognised as the \'atomaton, cf. e.g. \cite{MyersAMU15}.

\subsection{Example: The distromaton}

\label{distromatonexample}

We shall now use our framework to recover another canonical non-deterministic acceptor: the \textit{distromaton} \cite{MyersAMU15}. As the name suggests, it can be constructed by relating the monotone neighbourhood monad $\mathcal{A}$---whose algebras are completely \textit{distributive} lattices---and the powerset monad $\mathcal{P}$. Formally, the relationship can be established by the same natural transformation we used for the \'atomaton. 

\begin{corollary}
\label[corollary]{neighbourhoodpowersetmorphism}
	Let $\alpha_X: \mathcal{P}X \rightarrow \mathcal{A}X$ satisfy $\alpha_X(\varphi)(\psi) = \bigvee_{x \in X} \varphi(x) \wedge \psi(x)$, then $\alpha$ constitutes a distributive law homomorphism $\alpha: \lambda^{\mathcal{A}} \rightarrow \lambda^{\mathcal{P}}$.
\end{corollary}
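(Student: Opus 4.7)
The plan is to follow the template of \Cref{alphapowersetneighbourhooddistrlaw}, replacing the neighbourhood monad $\mathcal{H}$ by its monotone sub-monad $\mathcal{A}$. The very first thing I would check is that $\alpha_X(\varphi)$ does in fact land in $\mathcal{A}X \subseteq \mathcal{H}X$: the formula says $\alpha_X(\varphi)$ is nonzero on exactly those $\psi$ that meet $\varphi$, and this condition is manifestly upward-closed in $\psi$. Naturality of $\alpha$ in $X$ then follows by the same routine calculation as for the $\mathcal{H}$-case.

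With this in place, I would discharge the three axioms of a distributive law homomorphism in turn. The unit axiom $\alpha \circ \eta^{\mathcal{P}} = \eta^{\mathcal{A}}$ reduces to observing that $\alpha_X(\{x\})(\psi) = \psi(x)$, which is exactly the defining formula of $\eta^{\mathcal{A}}_X(x)$. For the distributive law axiom $\lambda^{\mathcal{A}} \circ \alpha_F = F\alpha \circ \lambda^{\mathcal{P}}$ I would not compute directly, but instead invoke \Cref{distributivelawaxiomeasier}: it suffices to show $h^{\mathcal{A}} \circ \alpha_2 = h^{\mathcal{P}}$, which is immediate from $h^{\mathcal{A}}(\alpha_2(\varphi)) = \alpha_2(\varphi)(\textnormal{id}_2) = \varphi(1) = h^{\mathcal{P}}(\varphi)$.

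The main obstacle is therefore the multiplication axiom $\mu^{\mathcal{A}} \circ \alpha_{\mathcal{A}} \circ \mathcal{P}\alpha = \alpha \circ \mu^{\mathcal{P}}$, which is the only place where the rather densely-defined $\mu^{\mathcal{A}}$ has to be unpacked. I would handle it by reducing everything to a double join over $X \times \Phi$: evaluating the left-hand side on $\psi$ yields $\bigvee_{\varphi \in \Phi} \alpha_X(\varphi)(\psi) = \bigvee_{x \in X} \bigvee_{\varphi \in \Phi} \varphi(x) \wedge \psi(x)$, and swapping the joins exposes the characteristic function of $\bigcup \Phi = \mu^{\mathcal{P}}_X(\Phi)$, so the expression collapses to $\alpha_X(\mu^{\mathcal{P}}_X(\Phi))(\psi)$. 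Since $\alpha$, as a natural transformation into $\mathcal{H}$, is precisely the one proved to be a distributive law homomorphism in \Cref{alphapowersetneighbourhooddistrlaw}, this calculation is word-for-word the one used there; the $\mathcal{A}$-version of the multiplication axiom can equivalently be obtained just by restricting the $\mathcal{H}$-version along the inclusion $\mathcal{A} \hookrightarrow \mathcal{H}$, whichever viewpoint one prefers.
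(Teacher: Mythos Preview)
Your proposal is correct and follows essentially the same approach as the paper: verify that $\alpha_X(\varphi)$ is monotone (so lands in $\mathcal{A}X$), then inherit the distributive law homomorphism axioms from the already-established $\mathcal{H}$-case in \Cref{alphapowersetneighbourhooddistrlaw}. The paper is simply more terse, making the single structural observation that $\mathcal{A}$ and $\mathcal{H}$ differ only on objects and concluding all axioms at once, whereas you re-examine the axioms individually before noting (for the multiplication axiom) the restriction viewpoint that the paper applies globally.
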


The distromaton for the regular language $\mathcal{L} = (a+b)^*a$, for example, can now be obtained as follows.
First, we construct the minimal  pointed $\lambda^{\mathcal{A}}$-bialgebra for $\mathcal{L}$, depicted in \Cref{overlinem(l)distro} with its underlying CDL structure $h$. The construction can be verified with the help of \ifdefined\doappendix\Cref{freealternatingbialgebrastructure}\else\cite[Lemma~49]{arxiv}\fi. Using the distributive law homomorphism $\alpha$ in \Cref{neighbourhoodpowersetmorphism}, it can be translated into an equivalent pointed $\lambda^{\mathcal{P}}$-bialgebra with underlying CSL structure $L = h \circ \alpha_X$. Its partially ordered state space $  \lbrack \emptyset \rbrack \leq \lbrack \lbrace \lbrace x \rbrace, \lbrace x, y \rbrace \rbrace \rbrack \leq \lbrack \lbrace \lbrace x \rbrace, \lbrace y \rbrace, \lbrace x, y \rbrace \rbrace \rbrack \leq \lbrack \lbrace \lbrace x \rbrace, \lbrace y \rbrace, \lbrace x, y \rbrace, \emptyset \rbrace \rbrack  $ is necessarily finite, which turns the set of join-irreducibles into a size-minimal generator $\langle J(L), i, d \rangle$ for $L$, where $i(y) = y$ and $d(x) = \lbrace y \in J(L) \mid y \leq x \rbrace$. In this case, the join-irreducibles are given by all non-zero states.   
	The $\mathcal{P}$-succinct automaton consequently induced by \Cref{generatorbialgebrahom} is depicted in \Cref{distromaton} and can be recognised as the distromaton, cf. \cite{MyersAMU15}.

\subsection{Example: The minimal xor-CABA automaton}

\label{minimalxorcabaexample}

We conclude this section by relating the neighbourhood monad $\mathcal{H}$ with the free vector space monad $\mathcal{R}$ over the unique two element field $\mathbb{Z}_2$. In particular, we derive a new canonical succinct acceptor for regular languages, which we call the \emph{minimal xor-CABA automaton}. 

Intuitively, the next result says that every CABA can be equipped with a symmetric difference like operation that turns it into a vector space over the two element field. 

\begin{corollary}
\label[corollary]{alphaxorneighbourhooddistrlaw}
	Let $\alpha_X: \mathcal{R}X \rightarrow \mathcal{H}X$ satisfy $\alpha_X(\varphi)(\psi) = \bigoplus_{x \in X} \varphi(x) \cdot \psi(x)$, then $\alpha$ constitutes a distributive law homomorphism $\alpha: \lambda^{\mathcal{H}} \rightarrow \lambda^{\mathcal{R}}$.
\end{corollary}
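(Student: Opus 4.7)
The plan is to obtain this statement as a direct application of \Cref{algebrainduceddistributivellawhom}, instantiated at the set monad $T = \mathcal{R}$ with the canonical algebra structure $h^{\mathcal{R}} \colon \mathcal{R}2 \to 2$ sending a formal $\mathbb{Z}_2$-combination $\varphi$ to its coefficient $\varphi(1)$. The proposition immediately yields a distributive law homomorphism $\alpha^{h^{\mathcal{R}}} \colon \lambda^{\mathcal{H}} \to \lambda^{h^{\mathcal{R}}}$, and by the abuse of notation fixed earlier ($\lambda^T := \lambda^{h^T}$) this is precisely a homomorphism $\lambda^{\mathcal{H}} \to \lambda^{\mathcal{R}}$. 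The only remaining task is therefore to verify that the natural transformation produced by the general recipe $\alpha^{h^{\mathcal{R}}}_X = (h^{\mathcal{R}})^{2^X} \circ \textnormal{st} \circ \mathcal{R}(\eta^{\mathcal{H}}_X)$ agrees pointwise with the formula $\alpha_X(\varphi)(\psi) = \bigoplus_{x \in X} \varphi(x)\cdot\psi(x)$ in the statement.

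The verification is a short computation which I would carry out by chasing a fixed $\varphi \in \mathcal{R}X$ through the composite, evaluated on an arbitrary $\psi \in 2^X$. First, $\mathcal{R}(\eta^{\mathcal{H}}_X)(\varphi)$ is the formal $\mathbb{Z}_2$-combination of the principal neighbourhoods $\eta^{\mathcal{H}}_X(x)$ with coefficients $\varphi(x)$. Applying the strength at $\psi$ and using naturality in the form $\textnormal{st}(U)(\psi) = \mathcal{R}(\textnormal{ev}_\psi)(U)$ from the footnote in \Cref{preliminaries}, together with functoriality, gives
\begin{equation*}
\textnormal{st}\bigl(\mathcal{R}(\eta^{\mathcal{H}}_X)(\varphi)\bigr)(\psi) \;=\; \mathcal{R}\bigl(\textnormal{ev}_\psi \circ \eta^{\mathcal{H}}_X\bigr)(\varphi).
\end{equation*}
The key observation is that $\textnormal{ev}_\psi \circ \eta^{\mathcal{H}}_X \colon X \to 2$ sends $x$ to $\eta^{\mathcal{H}}_X(x)(\psi) = \psi(x)$ and hence coincides with $\psi$ itself. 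Consequently the above term equals $\mathcal{R}(\psi)(\varphi) \in \mathcal{R}2$, which as a function $2 \to \mathbb{Z}_2$ satisfies $\mathcal{R}(\psi)(\varphi)(1) = \bigoplus_{x \,:\, \psi(x) = 1} \varphi(x)$. Finally, applying $(h^{\mathcal{R}})^{2^X}$ extracts the coefficient at $1$, yielding $\bigoplus_{x \in X} \varphi(x)\cdot\psi(x)$, exactly the desired formula.

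The main subtlety, and the only place where care is needed, is the bookkeeping between the different exponent conventions for $\mathcal{H}X = 2^{2^X}$ and the application of strength; once one recognises that precomposition with $\eta^{\mathcal{H}}_X$ is the inverse operation to evaluating a neighbourhood at $\psi$, the calculation collapses cleanly. No extra hypotheses on $X$ or additional naturality checks are required beyond those already provided by \Cref{algebrainduceddistributivellawhom}, so the corollary follows.
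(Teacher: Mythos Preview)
Your proposal is correct and follows essentially the same approach as the paper: the paper's proof is declared ``analogous to the proof of \Cref{alphapowersetneighbourhooddistrlaw}'', which proceeds exactly by instantiating \Cref{algebrainduceddistributivellawhom} at the monad in question and then verifying pointwise that the induced $\alpha^{h}$ coincides with the stated formula via the same chain $\textnormal{st} \circ T(\eta^{\mathcal{H}}_X) \to T(\textnormal{ev}_\psi \circ \eta^{\mathcal{H}}_X) = T(\psi)$ followed by evaluation of $h$ at $1$. Your computation mirrors this step for step with $\mathcal{R}$ in place of $\mathcal{P}$ and $\bigoplus,\cdot$ in place of $\bigvee,\wedge$.
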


Since every vector space admits a basis, above result leads to the definition of a new acceptor of regular languages.
Let $\alpha$ denote the homomorphism in \Cref{alphaxorneighbourhooddistrlaw} and $F$ the endofunctor given by $FX = 2 \times X^{A}$.

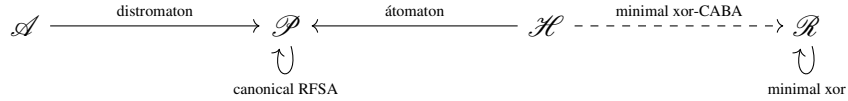
\begin{figure*}[t]
\centering
	\begin{tikzpicture}[node distance=9em]
		\node[] (H) {$\mathcal{H}$};
		\node[left of = H] (P) {$\mathcal{P}$};
		\node[right of = H] (X) {$\mathcal{R}$};
		\node[left of = P] (A) {$\mathcal{A}$};
		    \path[->]
	(H) edge[above] node{\tiny \'atomaton} (P)
	(H) edge[above, dashed] node{\tiny
 minimal xor-CABA} (X)
	(P) edge[loop below] node{\tiny
 canonical RFSA} (P)
	(X) edge[loop below] node{\tiny
minimal xor} (X)
	(A) edge[above] node{\tiny
 distromaton} (P)
;
	\end{tikzpicture}
	\caption{The minimal xor-CABA automaton is to the minimal xor automaton what the \'atomaton is to the canonical RFSA.}
	\label{minimalxorcabadiagram}
\end{figure*}

\begin{definition}\deftag{Minimal xor-CABA automaton}
	Let $\langle X, h, k \rangle$ be the minimal $x$-pointed $\lambda^{\mathcal{H}}$-bialgebra accepting a regular language $\mathcal{L} \subseteq A^*$, and $B = \langle Y, i, d \rangle$ a basis for the $\mathcal{R}$-algebra $\langle X, h \circ \alpha_X \rangle$. The \emph{minimal xor-CABA automaton} for $\mathcal{L}$ with respect to $B$ is the $d(x)$-pointed $\mathbb{Z}_2$-weighted automaton $Fd \circ k \circ i$.
	\end{definition}
	
	In \Cref{minimalxorcabadiagram} it is indicated how the canonical acceptors of this paper, including the minimal xor-CABA automaton, are based on relations between pairs of monads.
		
	For the regular language $\mathcal{L} = (a+b)^*a$ above definition instantiates as follows. First, as for the \'atomaton, we construct the minimal pointed $\lambda^{\mathcal{H}}$-bialgebra $\langle X, h, k\rangle$ for $\mathcal{L}$; it is depicted in \Cref{overlinem(l)atom}. As one easily verifies, the $\mathbb{Z}_2$-vector space $\langle X, h \circ \alpha_X \rangle$ is induced by the symmetric difference operation $\oplus$ on subsets. Using the notation in \Cref{overlinem(l)atom}, we choose the basis $\langle Y, i, d \rangle$ with $Y = \lbrace  4,6,7,8 \rbrace$; $i(y) = y$; and $d(1) = 7 \oplus 8$, $d(2) = \emptyset$, $d(3) = 6 \oplus 7$, $d(4) = 4$, $d(5) = 6 \oplus 7 \oplus 8$, $d(6) = 6$, $d(7) = 7$, $d(8) = 8$. The induced $d(1) = 7 \oplus 8$-pointed $\mathcal{R}$-succinct automaton accepting $\mathcal{L}$, i.e. the minimal xor-CABA automaton, is depicted in \Cref{minimalxorcaba}.

\section{Minimality}\label{minimality}

\newcommand{\img}{\textnormal{im}}
\newcommand{\gen}{\textnormal{gen}}
\newcommand{\obs}{\textnormal{obs}}
\newcommand{\ext}{\textnormal{ext}}
\newcommand{\expa}{\textnormal{exp}}

 In this section we restrict ourselves to the category of (nominal) sets. We show that every language satisfying a suitable property parametric in monads $S$ and $T$ admits a size-minimal succinct automaton of type $T$ accepting it. As a main result we obtain \Cref{minimalitytheorem}, which is a generalisation of parts of \cite[Theorem 4.8]{MyersAMU15}. In \Cref{minimalityimplications} we instantiate the former to subsume known minimality results for canonical automata, to prove the xor-CABA automaton minimal, and to establish a size-comparison between different acceptors.

Given a distributive law homomorphism $\alpha: \lambda^S \rightarrow \lambda^T$, let $\textnormal{ext}: \textnormal{Coalg}(FT) \rightarrow \textnormal{Coalg}(FS)$ be the functor given by $\textnormal{ext}(\langle X, k \rangle) = \langle X, F\alpha_X \circ k \rangle$ and $\textnormal{ext}(f) = f$. Moreover, let $\textnormal{exp}_U: \textnormal{Coalg}(FU) \rightarrow \textnormal{Bialg}(\lambda^U)$ for $U \in \lbrace S, T \rbrace$ denote the functor introduced in \Cref{expfunctor}.

\begin{proposition}
\label[proposition]{alphaunderlies}
Let $\alpha: \lambda^{S} \rightarrow \lambda^T$ be a distributive law homomorphism. Then $\alpha_X: TX \rightarrow SX$ underlies a natural transformation $\alpha: \textnormal{exp}_T \Rightarrow \alpha \circ \textnormal{exp}_S \circ \textnormal{ext}$ between functors of type $\textnormal{Coalg}(FT) \rightarrow \textnormal{Bialg}(\lambda^T)$.
\end{proposition}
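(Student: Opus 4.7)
The plan is to unfold the statement into two things that need verifying: (i) for each $FT$-coalgebra $\langle X, k\rangle$, the map $\alpha_X \colon TX \to SX$ is a $\lambda^T$-bialgebra homomorphism between $\textnormal{exp}_T(\langle X, k\rangle)$ and $(\alpha\circ \textnormal{exp}_S \circ \textnormal{ext})(\langle X, k\rangle)$; and (ii) the family $\{\alpha_X\}$ is natural in $\langle X, k\rangle$. I would first spell out the two bialgebras explicitly: the domain is $\langle TX,\mu^T_X, F\mu^T_X \circ \lambda^T_{TX}\circ Tk\rangle$, while unfolding $\textnormal{ext}$, then $\textnormal{exp}_S$, then $\alpha$ gives the codomain $\langle SX, \mu^S_X \circ \alpha_{SX}, F\mu^S_X \circ \lambda^S_{SX}\circ S(F\alpha_X\circ k)\rangle$, regarded as a $\lambda^T$-bialgebra via \Cref{inducedbialgebra}.

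For (i), the $T$-algebra axiom $\mu^S_X \circ \alpha_{SX}\circ T\alpha_X = \alpha_X \circ \mu^T_X$ is exactly the monad-morphism clause of a distributive law homomorphism. The $F$-coalgebra axiom is the only calculation of any substance. Starting from $F\alpha_X \circ F\mu^T_X \circ \lambda^T_{TX}\circ Tk$, I would rewrite in six steps: (1) fold $F\alpha_X \circ F\mu^T_X = F(\alpha_X\circ \mu^T_X)$ and replace by $F(\mu^S_X \circ \alpha_{SX}\circ T\alpha_X)$ using the monad-morphism axiom; (2) invoke naturality of $\lambda^T\colon TF \Rightarrow FT$ at $\alpha_X$ to replace $FT\alpha_X \circ \lambda^T_{TX}$ by $\lambda^T_{SX}\circ TF\alpha_X$; (3) apply the compatibility axiom at $SX$, namely $F\alpha_{SX}\circ \lambda^T_{SX} = \lambda^S_{SX}\circ \alpha_{FSX}$; (4) use naturality of $\alpha\colon T\Rightarrow S$ at $F\alpha_X\colon FTX\to FSX$ to rewrite $\alpha_{FSX}\circ TF\alpha_X = SF\alpha_X \circ \alpha_{FTX}$; (5) use naturality of $\alpha$ once more at $k$ to rewrite $\alpha_{FTX}\circ Tk = Sk\circ \alpha_X$; (6) collapse $SF\alpha_X \circ Sk = S(F\alpha_X\circ k)$ by functoriality of $S$, leaving $(F\mu^S_X \circ \lambda^S_{SX}\circ S(F\alpha_X\circ k))\circ \alpha_X$, which is what was wanted.

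For (ii), a morphism $f\colon \langle X,k_X\rangle \to \langle Y,k_Y\rangle$ in $\textnormal{Coalg}(FT)$ is sent by $\textnormal{exp}_T$ to $Tf$ and, since $\textnormal{ext}$ and the functor $\alpha$ of \Cref{inducedbialgebra} both act as identity on morphisms, by $\alpha\circ \textnormal{exp}_S\circ \textnormal{ext}$ to $Sf$. The naturality square therefore collapses to $\alpha_Y\circ Tf = Sf\circ \alpha_X$, which is just the naturality of $\alpha\colon T\Rightarrow S$.

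The main obstacle is merely bookkeeping: keeping track of which component of $\alpha$ or $\lambda$ sits at which object (there are instances at $X$, $SX$, $TX$, $FTX$, and $FSX$ floating around). No additional hypothesis beyond the three axioms of a distributive law homomorphism and the naturality of the data already in play is required; in fact these axioms are tailored precisely so that the calculation in step (i) goes through.
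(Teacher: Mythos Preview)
Your proposal is correct and follows essentially the same approach as the paper: unfold both bialgebras, observe that the $T$-algebra part is immediate from the monad-morphism axiom, verify the $F$-coalgebra part using naturality of $\alpha$, naturality of $\lambda^T$, and the compatibility axiom, and reduce naturality in $\langle X,k\rangle$ to naturality of $\alpha\colon T\Rightarrow S$. The only difference is presentational: the paper packages your six-step equational rewrite into a single commutative diagram whose cells correspond exactly to your steps (1)--(5).
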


In the above situation a $T$-succinct automaton admits \emph{two} semantics, induced by lifting the former either to a bialgebra over $\lambda^S$ or $\lambda^T$.
The next definition introduces a notion of \emph{closedness} that captures those cases in which the image of both semantics coincides.

\begin{definition}\deftag{$\alpha$-closed succinct automaton}
\label[definition]{closedsuccinctdef}
Let $\alpha: \lambda^{S} \rightarrow \lambda^T$ be a distributive law homomorphism.
	We say that a $T$-succinct automaton $\mathcal{X}$ is \emph{$\alpha$-closed} if the unique diagonal below is an isomorphism:
	\[
	\begin{tikzcd}
		\expa_T(\mathcal{X}) \arrow[twoheadrightarrow]{r}{\textnormal{obs}} \arrow{d}[left]{\textnormal{obs} \circ \alpha_X} & \img(\obs_{\expa_T(\mathcal{X})}) \arrow[dashed]{dl}{} \arrow{d}{} \\
		\img(\obs_{\alpha(\expa_S(\ext(\mathcal{X})))}) \arrow[hookrightarrow]{r}{} & \Omega
	\end{tikzcd}.
	\]
	\end{definition}

Next we show that succinct automata obtained from certain generators are $\alpha$-closed.

\begin{lemma}
\label[lemma]{generatorclosed}
Let $\alpha: \lambda^{S} \rightarrow \lambda^T$ be a distributive law homomorphism and $\langle X, h, k \rangle$ a $\lambda^S$-bialgebra.
	If $\langle Y, i, d \rangle$ is a generator for $\langle X, h \circ \alpha_X \rangle$, then $\langle Y,  Fd \circ k \circ i\rangle$ is $\alpha$-closed.
\end{lemma}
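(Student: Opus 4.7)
My goal is to show that the images $\img(\obs_{\expa_T(\mathcal{X})})$ and $\img(\obs_{\alpha(\expa_S(\ext(\mathcal{X})))})$ coincide as subobjects of $\Omega$, which immediately renders the unique diagonal an isomorphism (in fact the identity on the common image). Observe first that the diagonal exists at all because, by \Cref{alphaunderlies}, the component $\alpha_Y: \expa_T(\mathcal{X}) \to \alpha(\expa_S(\ext(\mathcal{X})))$ is a $\lambda^T$-bialgebra homomorphism, so $\obs_{\alpha(\expa_S(\ext(\mathcal{X})))} \circ \alpha_Y = \obs_{\expa_T(\mathcal{X})}$ by uniqueness of observational semantics, witnessing the inclusion of the top image into the bottom.

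I plan to prove that both images equal $\img(\obs_{\langle X, k \rangle})$. For the top image, \Cref{generatorbialgebrahom} applied to the given generator $\langle Y, i, d\rangle$ of the $T$-algebra $\langle X, h \circ \alpha_X \rangle$ produces a $\lambda^T$-bialgebra homomorphism $(h \circ \alpha_X) \circ Ti : \expa_T(\mathcal{X}) \to \alpha(\langle X, h, k \rangle)$. The generator axiom $(h \circ \alpha_X) \circ Ti \circ d = \id_X$ makes this split epi. Combined with the facts that $\obs$ of a bialgebra equals $\obs$ of its underlying $F$-coalgebra (final bialgebra lemma) and that $\alpha$ leaves the coalgebra part of a bialgebra untouched, this forces $\img(\obs_{\expa_T(\mathcal{X})}) = \img(\obs_{\langle X, k \rangle})$.

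For the bottom image, the key observation is that $\langle Y, i, \alpha_Y \circ d\rangle$ is a generator for the $S$-algebra $\langle X, h \rangle$: using naturality of $\alpha$ one has $\alpha_X \circ Ti = Si \circ \alpha_Y$, and hence $h \circ Si \circ (\alpha_Y \circ d) = h \circ \alpha_X \circ Ti \circ d = \id_X$. Since $F\alpha_Y \circ Fd = F(\alpha_Y \circ d)$, the coalgebra induced by this new generator is precisely the coalgebraic component of $\ext(\mathcal{X})$. Applying \Cref{forgenerator-isharp-is-bialgebra-hom} to the $\lambda^S$-bialgebra $\langle X, h, k \rangle$ with the new generator therefore yields a split-epi $\lambda^S$-bialgebra homomorphism $h \circ Si : \expa_S(\ext(\mathcal{X})) \to \langle X, h, k \rangle$, so $\img(\obs_{\expa_S(\ext(\mathcal{X}))}) = \img(\obs_{\langle X, k \rangle})$, and by the invariance of $\obs$ under $\alpha$ the same holds for $\img(\obs_{\alpha(\expa_S(\ext(\mathcal{X})))})$.

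The main subtlety is recognising the symmetric role of the generator data after post-composition with $\alpha$: one has to spot that $(i, \alpha_Y \circ d)$ promotes the given generator of the $T$-algebra $\langle X, h \circ \alpha_X \rangle$ to a generator of the $S$-algebra $\langle X, h \rangle$, and that the succinct automaton it induces on the $S$-side is exactly $\ext(\mathcal{X})$. Once this symmetry is in place, two dual applications of \Cref{forgenerator-isharp-is-bialgebra-hom}/\Cref{generatorbialgebrahom} collapse both sides of the closedness square onto the same image in $\Omega$.
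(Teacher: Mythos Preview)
Your proposal is correct and follows essentially the same approach as the paper: both arguments exhibit the surjective $\lambda^T$-bialgebra homomorphism $(h \circ \alpha_X) \circ Ti$ onto $\alpha(\mathbb{X})$, observe via naturality of $\alpha$ that $\langle Y, i, \alpha_Y \circ d\rangle$ generates the $S$-algebra $\langle X, h\rangle$ and that the induced succinct automaton is exactly $\ext(\mathcal{X})$, and then conclude by uniqueness of epi--mono factorisations that both images coincide with $\img(\obs_{\alpha(\mathbb{X})}) = \img(\obs_{\langle X,k\rangle})$. Your explicit invocation of \Cref{alphaunderlies} to justify existence of the diagonal is a welcome clarification that the paper leaves implicit.
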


We are now able to state our main result, which is a generalisation of parts of \cite[Theorem 4.8]{MyersAMU15}.

\begin{theorem}[Minimal succinct automata]
\label[theorem]{minimalitytheorem}
	Given a language $\mathcal{L} \in \Omega$ such that there exists a minimal pointed $\lambda^S$-bialgebra $\mathbb{M}$ accepting $\mathcal{L}$ and the underlying algebra of $\alpha(\mathbb{M})$ admits a size-minimal generator, there exists a pointed $\alpha$-closed $T$-succinct automaton $\mathcal{X}$ accepting $\mathcal{L}$ such that: \begin{itemize}
		\item for any pointed $\alpha$-closed $T$-succinct automaton $\mathcal{Y}$ accepting $\mathcal{L}$ we have that $\img(\obs_{\expa_T(\mathcal{X})}) \subseteq \img(\obs_{\expa_T(\mathcal{Y})})$;
		\item if $\img(\obs_{\expa_T(\mathcal{X})}) = \img(\obs_{\expa_T(\mathcal{Y})})$, then $\vert X \vert \leq \vert Y \vert$, where $X$ and $Y$ are the carriers of $\mathcal{X}$ and $\mathcal{Y}$, respectively.
	\end{itemize} 
 \end{theorem}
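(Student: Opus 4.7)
The plan is to construct $\mathcal{X}$ directly from the size-minimal generator. Let $\mathbb{M} = \langle M, h_M, k_M \rangle$ denote the minimal pointed $\lambda^S$-bialgebra accepting $\mathcal{L}$, with initial state $m_0$, and let $\langle X, i_X, d_X \rangle$ be a size-minimal generator for the $T$-algebra $\langle M, h_M \circ \alpha_M \rangle$ underlying $\alpha(\mathbb{M})$. I would define $\mathcal{X} := \langle X, Fd_X \circ k_M \circ i_X \rangle$ pointed at $d_X(m_0) \in TX$. By \Cref{generatorbialgebrahom}, the map $p := (h_M \circ \alpha_M) \circ Ti_X$ is a $\lambda^T$-bialgebra homomorphism $\expa_T(\mathcal{X}) \to \alpha(\mathbb{M})$; the generator condition forces $p$ to be a retraction, hence surjective, and it sends $d_X(m_0)$ to $m_0$. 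Preservation of observation along bialgebra homomorphisms gives that $\mathcal{X}$ accepts $\mathcal{L}$, while \Cref{generatorclosed} guarantees that $\mathcal{X}$ is $\alpha$-closed. Moreover, surjectivity of $p$ yields $\img(\obs_{\expa_T(\mathcal{X})}) = \img(\obs_{\alpha(\mathbb{M})})$, and since $\alpha$ leaves the underlying $F$-coalgebra untouched, this coincides with $\img(\obs_{\mathbb{M}}) =: \widehat{M} \subseteq \Omega$; note that $\widehat{M}$ is exactly the sub-$\lambda^S$-bialgebra of the final bialgebra reachable from $\mathcal{L}$, as $\mathbb{M}$ is both reachable and observable.

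For the first bullet, let $\mathcal{Y}$ be any pointed $\alpha$-closed $T$-succinct automaton accepting $\mathcal{L}$. \Cref{alphaunderlies} provides a $\lambda^T$-bialgebra homomorphism $\alpha_Y \colon \expa_T(\mathcal{Y}) \to \alpha(\expa_S(\ext(\mathcal{Y})))$, so $\expa_S(\ext(\mathcal{Y}))$, suitably pointed, is a pointed $\lambda^S$-bialgebra accepting $\mathcal{L}$. Its image in $\Omega$ is a sub-$\lambda^S$-bialgebra containing $\mathcal{L}$, hence contains the closure of $\{\mathcal{L}\}$ under the operations of the final $\lambda^S$-bialgebra, and by minimality of $\mathbb{M}$ this closure is precisely $\widehat{M}$. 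Because $\alpha$-closedness identifies $\img(\obs_{\expa_T(\mathcal{Y})})$ with $\img(\obs_{\alpha(\expa_S(\ext(\mathcal{Y})))})$, and the latter agrees as a subset of $\Omega$ with $\img(\obs_{\expa_S(\ext(\mathcal{Y}))})$ (the $\lambda^S$- and $\lambda^T$-observation maps share the same underlying $F$-coalgebra morphism), I conclude $\img(\obs_{\expa_T(\mathcal{X})}) = \widehat{M} \subseteq \img(\obs_{\expa_T(\mathcal{Y})})$.

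For the second bullet, suppose $\img(\obs_{\expa_T(\mathcal{Y})}) = \widehat{M}$. Then $\obs_{\expa_T(\mathcal{Y})}$ corestricts to a surjective $\lambda^T$-bialgebra homomorphism $\pi \colon \expa_T(\mathcal{Y}) \twoheadrightarrow \alpha(\mathbb{M})$. Since the algebra part of $\expa_T(\mathcal{Y})$ is the free algebra $\langle TY, \mu^T_Y \rangle$, freeness forces $\pi = (h_M \circ \alpha_M) \circ Tj$ with $j := \pi \circ \eta^T_Y \colon Y \to M$. Picking a section $e \colon M \to TY$ of $\pi$ then exhibits $\langle Y, j, e \rangle$ as a generator for $\alpha(\mathbb{M})$, and size-minimality of $\langle X, i_X, d_X \rangle$ yields $|X| \leq |Y|$. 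The main obstacle is the image comparison in the previous paragraph: one has to track carefully that images in the final $\lambda^S$- and $\lambda^T$-bialgebras, both carried by $\Omega$, coincide as subsets whenever the underlying $F$-coalgebra morphism is the same, and that $\alpha$-closedness is exactly the condition that lets the image of a $T$-succinct automaton be pulled back to a closure with respect to the $S$-algebra structure on $\Omega$. In the nominal setting, a secondary delicacy is that the section $e$ must be realised inside the category; this should follow from orbit-finiteness of $M$, but needs extra care compared to the plain $\Set$ case.
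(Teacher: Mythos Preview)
Your proposal is correct and follows essentially the same route as the paper: construct $\mathcal{X}$ from the size-minimal generator of $\alpha(\mathbb{M})$, invoke \Cref{generatorclosed} for $\alpha$-closedness, use minimality of $\mathbb{M}$ together with $\alpha$-closedness of $\mathcal{Y}$ for the inclusion, and extract a generator on $Y$ from the resulting surjection $\expa_T(\mathcal{Y}) \twoheadrightarrow \alpha(\mathbb{M})$ for the size bound. Your exposition is in places more explicit than the paper's---notably the freeness factorisation $\pi = (h_M \circ \alpha_M) \circ Tj$ and the choice of a section $e$, which the paper compresses into ``$Y$ forms the carrier of a generator''---and your closing remark about equivariant sections in the nominal case is a genuine subtlety the paper glosses over.
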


For a $T$-succinct automaton $\mathcal{X}$ let us write $\textnormal{obs}^{\dag}_{\mathcal{X}} := \textnormal{obs}_{\textnormal{exp}_T(\mathcal{X})} \circ \eta^T_X: X \rightarrow \Omega$ for a generalisation of the semantics of non-deterministic automata. The next result provides an equivalent characterisation of $\alpha$-closedness in terms of $\textnormal{obs}^{\dag}$ that will be particularly useful in \Cref{minimalityimplications}. 

\begin{lemma}
\label[lemma]{imgobssuccinctsem}
	Let $\alpha: \lambda^{S} \rightarrow \lambda^T$ be a distributive law homomorphism. For any $T$-succinct automaton $\mathcal{X}$ it holds that $\img(\obs_{\exp_T(\mathcal{X})}) = \img(h \circ \alpha_{\Omega} \circ T(\obs^{\dag}_{\mathcal{X}}))$ and 
		$\img(\obs_{\alpha(\exp_S(\ext(\mathcal{X})))}) = \img(h \circ S(\obs^{\dag}_{\mathcal{X}}))$, where $\langle \Omega, h, k \rangle$ is the final $\lambda^S$-bialgebra.
	\end{lemma}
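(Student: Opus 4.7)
The plan is to establish both identities not merely at the level of images but as equations between morphisms, from which the image equalities follow at once. The whole argument is a routine chase combining the unit and multiplication laws of $S$ and $T$, finality, and the distributive law homomorphism axioms on $\alpha$.

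First I would identify the $T$-algebra structure of the final $\lambda^T$-bialgebra on the carrier $\Omega$. By \Cref{inducedbialgebra}, $\alpha$ sends the final $\lambda^S$-bialgebra $\langle \Omega, h, k \rangle$ to the $\lambda^T$-bialgebra $\langle \Omega, h \circ \alpha_\Omega, k \rangle$, and uniqueness of the final $\lambda^T$-bialgebra on $\Omega$ forces its $T$-algebra structure to coincide with $h \circ \alpha_\Omega$. Since $\obs_{\expa_T(\mathcal{X})}$ is a $\lambda^T$-bialgebra homomorphism, it is in particular a $T$-algebra homomorphism from $\langle TX, \mu^T_X \rangle$ to $\langle \Omega, h \circ \alpha_\Omega \rangle$. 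Combining this with the monad unit law $\mu^T_X \circ T\eta^T_X = \id_{TX}$ I get
\[
\obs_{\expa_T(\mathcal{X})} = \obs_{\expa_T(\mathcal{X})} \circ \mu^T_X \circ T\eta^T_X = h \circ \alpha_\Omega \circ T(\obs_{\expa_T(\mathcal{X})}) \circ T\eta^T_X = h \circ \alpha_\Omega \circ T(\obs^{\dag}_{\mathcal{X}}),
\]
from which the first image equation is immediate.

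For the second identity I would appeal to \Cref{alphaunderlies}, which supplies a $\lambda^T$-bialgebra homomorphism $\alpha_X : \expa_T(\mathcal{X}) \to \alpha(\expa_S(\ext(\mathcal{X})))$, and observe that $\obs_{\alpha(\expa_S(\ext(\mathcal{X})))}$ coincides with $\obs_{\expa_S(\ext(\mathcal{X}))}$, both being the unique $F$-coalgebra homomorphism into $\Omega$. Precomposing the finality identity $\obs_{\expa_T(\mathcal{X})} = \obs_{\expa_S(\ext(\mathcal{X}))} \circ \alpha_X$ with $\eta^T_X$ and using the axiom $\alpha \circ \eta^T = \eta^S$ yields $\obs^{\dag}_{\mathcal{X}} = \obs_{\expa_S(\ext(\mathcal{X}))} \circ \eta^S_X$. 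Exploiting that $\obs_{\expa_S(\ext(\mathcal{X}))}$ is an $S$-algebra homomorphism from $\langle SX, \mu^S_X \rangle$ to $\langle \Omega, h \rangle$, together with the $S$-monad unit law, I then compute
\[
h \circ S(\obs^{\dag}_{\mathcal{X}}) = h \circ S(\obs_{\expa_S(\ext(\mathcal{X}))}) \circ S\eta^S_X = \obs_{\expa_S(\ext(\mathcal{X}))} \circ \mu^S_X \circ S\eta^S_X = \obs_{\alpha(\expa_S(\ext(\mathcal{X})))},
\]
which gives the desired image equality.

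I do not foresee a substantial obstacle; the only subtle point worth verifying with care is the very first step, namely that the $T$-algebra structure of the final $\lambda^T$-bialgebra on $\Omega$ really is $h \circ \alpha_\Omega$, as this is the only place where the distributive law homomorphism conditions on $\alpha$ (beyond $\alpha \circ \eta^T = \eta^S$) are genuinely used, via \Cref{inducedbialgebra}.
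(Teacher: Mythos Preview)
Your proposal is correct and follows essentially the same route as the paper: both arguments identify the $T$-algebra structure on the final $\lambda^T$-bialgebra as $h \circ \alpha_\Omega$ (the paper isolates this as a separate lemma, \Cref{alphapreservesfinal}), then unfold $\obs_{\expa_T(\mathcal{X})}$ via the unit law and the $T$-algebra homomorphism property; for the second identity both derive $\obs^{\dag}_{\mathcal{X}} = \obs_{\expa_S(\ext(\mathcal{X}))} \circ \eta^S_X$ from \Cref{alphaunderlies} and $\alpha \circ \eta^T = \eta^S$ (the paper packages this as \Cref{obsdagext}), then use the $S$-algebra homomorphism property of $\obs_{\expa_S(\ext(\mathcal{X}))}$ and the $S$-unit law. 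The only cosmetic difference is the order in which the second chain of equalities is traversed.
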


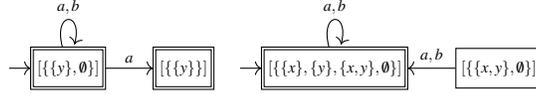
\begin{figure}[t]
\tiny
\centering
	\begin{tikzpicture}[node distance=3em]
		\node[state, initial, shape=rectangle, initial text=, accepting] (7) {$\lbrack \lbrace \lbrace y \rbrace, \emptyset \rbrace \rbrack$};
		
	\node[state, shape=rectangle, right = of 7, accepting] (6) {$\lbrack \lbrace \lbrace y \rbrace \rbrace \rbrack$};

	\node[state, shape=rectangle, initial, right = of 6, accepting, initial text=] (8) {$\lbrack \lbrace \lbrace x \rbrace, \lbrace y \rbrace, \lbrace x, y \rbrace, \emptyset \rbrace \rbrack$};
	
		\node[state, shape=rectangle, right = of 8] (4) {$\lbrack \lbrace \lbrace x, y \rbrace, \emptyset \rbrace \rbrack$};
		
		    \path[->]
	(7) edge[loop above] node{$a,b$} (7)
	(7) edge[above] node{$a$} (6)
	(8) edge[loop above] node{$a,b$} (8)
	(4) edge[above] node{$a,b$} (8)
;
	\end{tikzpicture}
	\caption{The minimal xor-CABA automaton for $\mathcal{L} = (a+b)^*a$.}
			\label{minimalxorcaba}
\end{figure}

\subsection{Applications to canonical automata}

\label{minimalityimplications}

In this section we instantiate \Cref{minimalitytheorem} to characterise a variety of canonical acceptors from the literature as size-minimal representatives among subclasses of $\alpha$-closed succinct automata, i.e. those automata whose images of the two semantics induced by $\alpha$ coincide.
We begin with the canonical RFSA and the minimal xor automaton, for which $\alpha$ is the identitity and $\alpha$-closedness therefore is trivial. 

In \cite{DenisLT02} the canonical RFSA for $\mathcal{L}$ has been characterised as size-minimal among those NFAs accepting $\mathcal{L}$ for which states accept a residual of $\mathcal{L}$. More recently, it was shown that the class in fact can be extended to those NFAs accepting $\mathcal{L}$ for which states accept a \emph{union} of residuals of $\mathcal{L}$ \cite{MyersAMU15}. The next result recovers the latter as a consequence of the second point in \Cref{minimalitytheorem}. We write $\overline{Y}$ for the algebraic closure\footnote{If $Y = \textnormal{im}(f)$ for some morphism $f$ with codomain $\langle X,h \rangle$, the closure is given by the induced $T$-algebra structure on $\textnormal{im}(h \circ Tf)$.} of a subset $Y \subseteq X$ of some $T$-algebra $X$.

\begin{corollary}
\label[corollary]{canonicalrfsaminimal}
	The canonical RFSA for $\mathcal{L}$ is size-minimal among non-deterministic automata $\mathcal{Y}$ accepting $\mathcal{L}$ with 
	$
	 \overline{\textnormal{im}(\obs^{\dag}_{\mathcal{Y}})}^{\textnormal{CSL}} \subseteq \overline{\textnormal{Der}(\mathcal{L})}^{\textnormal{CSL}} 
	$.
\end{corollary}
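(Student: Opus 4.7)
The plan is to apply \Cref{minimalitytheorem} with $S = T = \mathcal{P}$ and $\alpha = \id$ (the identity natural transformation, which is trivially a distributive law homomorphism $\lambda^{\mathcal{P}} \to \lambda^{\mathcal{P}}$). In this case the two functors $\expa_T$ and $\alpha \circ \expa_S \circ \ext$ agree on the nose, so every $\mathcal{P}$-succinct automaton---i.e., every NFA---is automatically $\alpha$-closed, which eliminates the closedness side condition in \Cref{minimalitytheorem}. The hypotheses of \Cref{minimalitytheorem} hold for a regular $\mathcal{L}$: as recalled in \Cref{canonicalrfsaexample}, the minimal pointed $\lambda^{\mathcal{P}}$-bialgebra $\mathbb{M}$ for $\mathcal{L}$ has finite carrier $L = \overline{\textnormal{Der}(\mathcal{L})}^{\textnormal{CSL}}$, and finiteness together with the descending chain condition supplies a size-minimal generator, namely the join-irreducibles $J(L)$ with $d(x) = \lbrace y \in J(L) \mid y \leq x \rbrace$.

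The second step is to identify the distinguished succinct automaton $\mathcal{X}$ produced by \Cref{minimalitytheorem} with the canonical RFSA for $\mathcal{L}$: both are obtained by feeding exactly this generator $\langle J(L), i, d \rangle$ for $\mathbb{M}$ into \Cref{forgenerator-isharp-is-bialgebra-hom}. To connect the $\img$-condition of \Cref{minimalitytheorem} to the closure-condition of the statement, I would invoke \Cref{imgobssuccinctsem}: for $\alpha = \id$ it yields, for any NFA $\mathcal{Z}$, the identity $\img(\obs_{\expa_T(\mathcal{Z})}) = \img(h \circ \mathcal{P}(\obs^{\dag}_{\mathcal{Z}})) = \overline{\img(\obs^{\dag}_{\mathcal{Z}})}^{\textnormal{CSL}}$, since the $\mathcal{P}$-algebra structure on $\Omega = 2^{A^*}$ is union of languages. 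Applied to $\mathcal{X}$, whose state-languages are exactly the join-irreducibles of $L$, this gives $\img(\obs_{\expa_T(\mathcal{X})}) = \overline{J(L)}^{\textnormal{CSL}} = L = \overline{\textnormal{Der}(\mathcal{L})}^{\textnormal{CSL}}$.

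The conclusion is then a one-line squeeze. For any NFA $\mathcal{Y}$ accepting $\mathcal{L}$ in the hypothesised class, the rewriting above gives $\img(\obs_{\expa_T(\mathcal{Y})}) = \overline{\img(\obs^{\dag}_{\mathcal{Y}})}^{\textnormal{CSL}} \subseteq \overline{\textnormal{Der}(\mathcal{L})}^{\textnormal{CSL}} = \img(\obs_{\expa_T(\mathcal{X})})$, while the first bullet of \Cref{minimalitytheorem} supplies the reverse inclusion. Equality of images then triggers the second bullet, delivering $\vert X \vert \leq \vert Y \vert$. The one point I expect to require a little care is the identification $\overline{J(L)}^{\textnormal{CSL}} = L$ for the finite CSL $L$: this is a standard consequence of the descending chain condition (every element is the join of the join-irreducibles below it), but it is the fact that specialises the abstract $\img$-equality of \Cref{minimalitytheorem} to the concrete closure-ordering phrased in the corollary.
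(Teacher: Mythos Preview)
Your proposal is correct and follows essentially the same approach as the paper: instantiate \Cref{minimalitytheorem} with $S=T=\mathcal{P}$ and $\alpha=\id$, observe that $\alpha$-closedness is vacuous, use the join-irreducibles of the finite CSL $\overline{\textnormal{Der}(\mathcal{L})}^{\textnormal{CSL}}$ as size-minimal generator, and rewrite the $\img$-condition as a CSL-closure via \Cref{imgobssuccinctsem}. The only cosmetic difference is that the paper packages the identification $\img(\obs_{\expa_{\mathcal{P}}(\mathcal{X})}) = \overline{\textnormal{Der}(\mathcal{L})}^{\textnormal{CSL}}$ into a separate general lemma (\ifdefined\doappendix\Cref{obsgenerated}\else\cite[Lemma~56]{arxiv}\fi), whereas you obtain it by the elementary observation $\overline{J(L)}^{\textnormal{CSL}}=L$; both routes are equivalent.
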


The second condition in \Cref{minimalitytheorem} is always satisfied for a \emph{reachable} succinct automaton $\mathcal{Y}$. Since for $\mathbb{Z}_2$-weighted automata it is possible to find an equivalent reachable $\mathbb{Z}_2$-weighted automaton with less or equally many states (which for NFA is not necessarily the case), the minimal xor automaton is minimal among \emph{all} $\mathbb{Z}_2$-weighted automata, as was already known from for instance \cite{VuilleminG210}.

\begin{corollary}
\label[corollary]{minimalxor}
		The minimal xor automaton for $\mathcal{L}$ is size-minimal among $\mathbb{Z}_2$-weighted automata accepting $\mathcal{L}$.
\end{corollary}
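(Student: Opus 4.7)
The plan is to apply \Cref{minimalitytheorem} in the special case $S = T = \mathcal{R}$ with $\alpha = \id_{\mathcal{R}} \colon \lambda^{\mathcal{R}} \to \lambda^{\mathcal{R}}$. Since source and target distributive laws coincide, \emph{every} $\mathbb{Z}_2$-weighted automaton is trivially $\alpha$-closed, so the theorem will compare the minimal xor automaton $\mathcal{X}$ against all pointed $\mathbb{Z}_2$-weighted automata at once. The hypotheses are met: the minimal pointed $\lambda^{\mathcal{R}}$-bialgebra $\mathbb{M}$ accepting a regular language $\mathcal{L}$ exists and is finite-dimensional (this is exactly the minimal $\mathbb{Z}_2$-Vect-structured DFA from \Cref{minimalxorexmaple}), and every vector space admits a basis, which is a size-minimal generator. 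By definition, $\mathcal{X}$ is induced from such a basis via \Cref{generatorbialgebrahom}, so $\expa_{\mathcal{R}}(\mathcal{X}) \cong \mathbb{M}$ by \Cref{forbasis-isharp-is-bialgebra-iso}, and in particular $\img(\obs_{\expa_{\mathcal{R}}(\mathcal{X})})$ coincides with the image of $\mathbb{M} \hookrightarrow \Omega$, i.e.\ the smallest sub-bialgebra of $\Omega$ containing $\mathcal{L}$.

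The key intermediate step is to reduce to reachable automata. Let $\mathcal{Y}$ be an arbitrary pointed $\mathbb{Z}_2$-weighted automaton accepting $\mathcal{L}$, with initial vector $y_0 \in \mathcal{R}Y$. I would form the reachable subspace $R \subseteq \mathcal{R}Y$, namely the smallest $\mathbb{Z}_2$-subspace containing $y_0$ and closed under the determinised transition map $k^{\sharp}$. Since $R$ inherits a sub-bialgebra structure from $\expa_{\mathcal{R}}(\mathcal{Y})$, picking any basis $B$ of $R$ and transporting the structure yields a reachable pointed $\mathbb{Z}_2$-weighted automaton $\mathcal{Y}'$ on state space $B$ that accepts $\mathcal{L}$ and satisfies $\vert B \vert = \dim R \leq \dim \mathcal{R}Y = \vert Y \vert$.

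For this reachable $\mathcal{Y}'$, \Cref{imgobssuccinctsem} applied with $\alpha = \id$ identifies $\img(\obs_{\expa_{\mathcal{R}}(\mathcal{Y}')})$ with the $\mathbb{Z}_2$-span inside $\Omega$ of the languages $\obs^{\dag}_{\mathcal{Y}'}(y)$ for $y \in B$. Because bialgebra homomorphisms preserve reachable sub-bialgebras and $\mathcal{Y}'$ is reachable from its initial state (whose image in $\Omega$ is $\mathcal{L}$), this span is precisely the smallest sub-bialgebra of $\Omega$ containing $\mathcal{L}$, which has already been identified with $\img(\obs_{\expa_{\mathcal{R}}(\mathcal{X})})$. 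Hence the equality $\img(\obs_{\expa_{\mathcal{R}}(\mathcal{X})}) = \img(\obs_{\expa_{\mathcal{R}}(\mathcal{Y}')})$ holds, and the second bullet of \Cref{minimalitytheorem} delivers $\vert X \vert \leq \vert B \vert \leq \vert Y \vert$, as required.

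The main obstacle is the reachability reduction in the middle paragraph: although routine for linear weighted automata, it requires care to phrase categorically so that $\expa_{\mathcal{R}}(\mathcal{Y}')$ genuinely recovers the sub-bialgebra $R \subseteq \expa_{\mathcal{R}}(\mathcal{Y})$ up to isomorphism, and so that ``reachable'' corresponds to the correct coalgebraic notion (the initial vector generating the whole determinised state space as a sub-bialgebra). Once this is in place, the image calculation and the appeal to \Cref{minimalitytheorem} are immediate.
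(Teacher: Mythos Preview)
Your proposal is correct and follows essentially the same route as the paper: instantiate \Cref{minimalitytheorem} with $S=T=\mathcal{R}$ and $\alpha=\id$ (so $\alpha$-closedness is trivial), then reduce an arbitrary $\mathbb{Z}_2$-weighted acceptor to a reachable one of no larger size so that the image condition of the second bullet becomes an equality. The only cosmetic difference is where the basis is extracted: you take a basis of the reachable sub-bialgebra $R\subseteq\mathcal{R}Y$, whereas the paper takes a basis of the epi-mono factorisation of the reachability map into $\Omega$; both yield a state set of size at most $\vert Y\vert$ and both give $\img(\obs_{\expa_{\mathcal{R}}(\cdot)})=\overline{\mathrm{Der}(\mathcal{L})}^{\mathbb{Z}_2\textnormal{-Vect}}$, so the conclusion is identical.
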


For the \'atomaton, the distromaton, and the minimal xor-CABA automaton the distributive law homomorphism $\alpha$ in play is non-trivial; $\alpha$-closedness translates to the below equalities between closures. In all three cases it is possible to waive the inclusion induced by the second point in \Cref{minimalitytheorem}.

\begin{corollary}
\label[corollary]{minimalityatomaton}
		The \'atomaton for $\mathcal{L}$ is size-minimal among non-deterministic automata $\mathcal{Y}$ accepting $\mathcal{L}$ with $\overline{\textnormal{im}(\obs^{\dag}_{\mathcal{Y}})}^{\textnormal{CSL}} = \overline{\textnormal{im}(\obs^{\dag}_{\mathcal{Y}})}^{\textnormal{CABA}}$.
\end{corollary}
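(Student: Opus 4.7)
The plan is to instantiate Theorem~\ref{minimalitytheorem} with $S = \mathcal{H}$, $T = \mathcal{P}$, and the distributive law homomorphism $\alpha \colon \lambda^{\mathcal{H}} \to \lambda^{\mathcal{P}}$ of Corollary~\ref{alphapowersetneighbourhooddistrlaw}. The minimal pointed $\lambda^{\mathcal{H}}$-bialgebra $\mathbb{M}$ accepting $\mathcal{L}$ exists, and by Lemma~\ref{basisshiftecabapowerset} its atoms form a basis for the $\mathcal{P}$-algebra $\alpha(\mathbb{M})$; since in a CABA the atoms are exactly the join-irreducibles of the underlying CSL, this basis is moreover size-minimal as a generator by \cite[Lemma~55]{arxiv}. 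The $\alpha$-closed $\mathcal{P}$-succinct automaton furnished by the theorem is then, by the construction recalled in Section~\ref{atomatonexample}, precisely the \'atomaton $\mathcal{X}$ for $\mathcal{L}$.

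Next I would unfold $\alpha$-closedness concretely using Lemma~\ref{imgobssuccinctsem} applied to the final $\lambda^{\mathcal{H}}$-bialgebra carried by $2^{A^*}$: the two image-sets appearing in Definition~\ref{closedsuccinctdef} identify, respectively, with the CSL closure and the CABA closure of $\img(\obs^{\dag}_{\mathcal{Y}})$ inside $2^{A^*}$. Thus $\alpha$-closedness of a pointed $\mathcal{P}$-succinct automaton $\mathcal{Y}$ translates exactly to the equality of closures appearing in the statement of the corollary, and the first bullet of Theorem~\ref{minimalitytheorem} yields $\overline{\textnormal{Der}(\mathcal{L})}^{\textnormal{CABA}} \subseteq \overline{\img(\obs^{\dag}_{\mathcal{Y}})}^{\textnormal{CABA}}$ for every such $\mathcal{Y}$ accepting $\mathcal{L}$.

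The last step is to waive the equality-of-images premise of the second bullet of Theorem~\ref{minimalitytheorem} by a direct size comparison. Writing $C_{\mathcal{Y}} := \overline{\img(\obs^{\dag}_{\mathcal{Y}})}^{\textnormal{CABA}}$, the set $\img(\obs^{\dag}_{\mathcal{Y}})$ already generates $C_{\mathcal{Y}}$ as a CSL by $\alpha$-closedness, and since in a CABA the atoms coincide with the join-irreducibles, this image must contain every atom of $C_{\mathcal{Y}}$. The sub-CABA inclusion $\overline{\textnormal{Der}(\mathcal{L})}^{\textnormal{CABA}} \subseteq C_{\mathcal{Y}}$ then writes each atom of the smaller CABA as a disjoint join of atoms of the larger, yielding $|\textnormal{At}(C_{\mathcal{Y}})| \geq |\textnormal{At}(\overline{\textnormal{Der}(\mathcal{L})}^{\textnormal{CABA}})| = |X|$. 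Chaining, $|Y| \geq |\img(\obs^{\dag}_{\mathcal{Y}})| \geq |\textnormal{At}(C_{\mathcal{Y}})| \geq |X|$, which is what the corollary asks for. The main obstacle is the atomicity argument: one has to check that any CSL-generating subset of a CABA contains all of its atoms, which relies on atoms being indecomposable as joins of strictly smaller elements.
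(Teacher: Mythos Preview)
Your proposal is correct and follows essentially the same route as the paper: instantiate Theorem~\ref{minimalitytheorem} with $\alpha\colon\lambda^{\mathcal{H}}\to\lambda^{\mathcal{P}}$, identify $\alpha$-closedness with the closure equality via Lemma~\ref{imgobssuccinctsem}, obtain the sub-CABA inclusion $\overline{\textnormal{Der}(\mathcal{L})}^{\textnormal{CABA}}\subseteq C_{\mathcal{Y}}$ from the first bullet, and then compare atom counts. The only difference is presentational: where the paper invokes the auxiliary results that join-irreducibles form a size-minimal generator and that atom counts are monotone along sub-CABA inclusions, you inline both arguments directly (atoms must lie in any CSL-generating subset; atoms of a sub-CABA partition into nonempty disjoint sets of atoms of the ambient CABA), which is perfectly valid and arguably more transparent here.
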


The above result can be shown to be similar to \cite[Theorem 4.9]{MyersAMU15}, which characterises the \'atomaton as size-minimal among non-deterministic automata whose accepted languages are \emph{closed under complement}. The result below is very similar to a characterisation of the distromaton as size-minimal among non-deterministic automata whose accepted languages are \emph{closed under intersection} \cite[Theorem 4.13]{MyersAMU15}.

\begin{corollary}
\label[corollary]{distromatonminimal}
		The distromaton for $\mathcal{L}$ is size-minimal among non-deterministic automata $\mathcal{Y}$ accepting $\mathcal{L}$ with $ \overline{\textnormal{im}(\obs^{\dag}_{\mathcal{Y}})}^{\textnormal{CSL}} = \overline{\textnormal{im}(\obs^{\dag}_{\mathcal{Y}})}^{\textnormal{CDL}}$.
\end{corollary}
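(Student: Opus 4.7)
The plan is to instantiate \Cref{minimalitytheorem} with $S = \mathcal{A}$ (monotone neighbourhood monad), $T = \mathcal{P}$ (powerset monad), and the distributive law homomorphism $\alpha \colon \lambda^{\mathcal{A}} \to \lambda^{\mathcal{P}}$ of \Cref{neighbourhoodpowersetmorphism}. First, I would let $\mathbb{M}$ be the minimal pointed $\lambda^{\mathcal{A}}$-bialgebra accepting $\mathcal{L}$, as constructed in \Cref{distromatonexample}. Its underlying CDL is finite, so after applying $\alpha$ the resulting CSL $\alpha(\mathbb{M})$ satisfies the descending chain condition and admits a size-minimal generator given by its set of join-irreducibles (cf.\ \ifdefined\doappendix\Cref{joinirreducstateminimal}\else\cite[Lemma~55]{arxiv}\fi). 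The associated $\mathcal{P}$-succinct automaton $\mathcal{X}$ produced by \Cref{generatorbialgebrahom} is precisely the distromaton, and it is $\alpha$-closed by \Cref{generatorclosed}.

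Next, I would translate the abstract closedness condition into the stated closure equality. Using \Cref{imgobssuccinctsem}, for any $\mathcal{P}$-succinct automaton $\mathcal{Y}$ one has
\[
\img(\obs_{\expa_{\mathcal{P}}(\mathcal{Y})}) = \img(h \circ \alpha_{\Omega} \circ \mathcal{P}(\obs^{\dag}_{\mathcal{Y}})), \qquad \img(\obs_{\alpha(\expa_{\mathcal{A}}(\ext(\mathcal{Y})))}) = \img(h \circ \mathcal{A}(\obs^{\dag}_{\mathcal{Y}})),
\]
where $\langle \Omega, h, k \rangle$ is the final $\lambda^{\mathcal{A}}$-bialgebra. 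Since $h \circ \alpha_{\Omega}$ equips $\Omega$ with its CSL structure (as joins of languages) while $h$ itself gives its CDL structure, these images compute the CSL-closure and the CDL-closure of $\img(\obs^{\dag}_{\mathcal{Y}})$ respectively. Therefore $\mathcal{Y}$ is $\alpha$-closed if and only if $\overline{\img(\obs^{\dag}_{\mathcal{Y}})}^{\textnormal{CSL}} = \overline{\img(\obs^{\dag}_{\mathcal{Y}})}^{\textnormal{CDL}}$, which is exactly the class in the statement.

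Finally, to conclude minimality I would invoke \Cref{minimalitytheorem}. The first bullet already gives $\img(\obs_{\expa_{\mathcal{P}}(\mathcal{X})}) \subseteq \img(\obs_{\expa_{\mathcal{P}}(\mathcal{Y})})$ for every $\alpha$-closed $\mathcal{Y}$ accepting $\mathcal{L}$. The subtle part is waiving the equality hypothesis in the second bullet: I would argue, analogously to the treatment of the atomaton in \Cref{minimalityatomaton}, that because $\mathcal{Y}$ accepts $\mathcal{L}$ every residual of $\mathcal{L}$ lies in the CDL-closure of $\img(\obs^{\dag}_{\mathcal{Y}})$, and $\alpha$-closedness forces all such residuals already into the CSL-closure; consequently the CSL-closure of $\img(\obs^{\dag}_{\mathcal{Y}})$ must contain the carrier of the minimal CSL-structured DFA for $\mathcal{L}$. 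Combined with the first bullet this yields equality of the two images, so the size bound $\vert X \vert \leq \vert Y \vert$ applies unconditionally.

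The main obstacle is precisely this last step, namely removing the equal-image precondition of the second bullet of \Cref{minimalitytheorem}. Making it rigorous requires tying $\alpha$-closedness of $\mathcal{Y}$ back to the minimality of $\mathbb{M}$ via the fact that any $\alpha$-closed acceptor factors through $\mathbb{M}$ at the level of closures; once that factorisation is established, the CSL-closure of $\img(\obs^{\dag}_{\mathcal{Y}})$ is forced to agree with that of the distromaton's image, and the size comparison follows directly.
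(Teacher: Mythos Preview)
Your setup is correct and matches the paper's: instantiating \Cref{minimalitytheorem} with $S=\mathcal{A}$, $T=\mathcal{P}$, the homomorphism of \Cref{neighbourhoodpowersetmorphism}, the minimal $\lambda^{\mathcal{A}}$-bialgebra $\mathbb{M}$, and join-irreducibles as size-minimal generator. Your translation of $\alpha$-closedness into the closure equality via \Cref{imgobssuccinctsem} is also right (this is the content of the corollary the paper calls \texttt{imgdistromaton}).

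The gap is in your final step. You try to upgrade the inclusion $\img(\obs_{\expa_{\mathcal{P}}(\mathcal{X})}) \subseteq \img(\obs_{\expa_{\mathcal{P}}(\mathcal{Y})})$ from the first bullet to an \emph{equality}, so that the second bullet applies directly. But this equality does not hold in general: for an $\alpha$-closed $\mathcal{Y}$ we have $\img(\obs_{\expa_{\mathcal{P}}(\mathcal{Y})}) = \overline{\img(\obs^{\dag}_{\mathcal{Y}})}^{\textnormal{CDL}}$, and this CDL can be strictly larger than $\overline{\textnormal{Der}(\mathcal{L})}^{\textnormal{CDL}}$, the carrier of $\mathbb{M}$. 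Your sketched ``factorisation through $\mathbb{M}$'' goes the wrong way---minimality of $\mathbb{M}$ gives an embedding of $\mathbb{M}$ \emph{into} the image of $\mathcal{Y}$, not the reverse---so it cannot force the two closures to coincide.

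The paper does not attempt to obtain equality. Instead it works directly with the strict inclusion $\overline{\textnormal{Der}(\mathcal{L})}^{\textnormal{CDL}} \subseteq B := \overline{\img(\obs^{\dag}_{\mathcal{Y}})}^{\textnormal{CDL}}$ from the first bullet and invokes a lattice-theoretic monotonicity lemma (\Cref{join-irred-dlattice}): if $A$ is a sublattice of a finite distributive lattice $B$, then $\lvert J(A)\rvert \le \lvert J(B)\rvert$. This yields $\lvert J(\overline{\textnormal{Der}(\mathcal{L})}^{\textnormal{CDL}})\rvert \le \lvert J(B)\rvert$. Since $Y$ generates $B$ as a CSL, \Cref{joinirreducstateminimal} gives $\lvert J(B)\rvert \le \lvert Y\rvert$, and chaining the two inequalities finishes the proof. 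This is exactly the analogue of the \'atomaton argument, where the role of \Cref{join-irred-dlattice} is played by \Cref{atomscaba}; the key ingredient you are missing is this monotonicity of $\lvert J(-)\rvert$ under sublattice inclusion, which bypasses the need for equal images.
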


The size-minimality result for the newly discovered minimal xor-CABA automaton is analogous to the ones for the \'atomaton and the distromaton.

\begin{corollary}
\label[corollary]{corminimalxorcaba}
		The minimal xor-CABA automaton for $\mathcal{L}$ is size-minimal among $\mathbb{Z}_2$-weighted automata $\mathcal{Y}$ accepting $\mathcal{L}$ with $ \overline{\textnormal{im}(\obs^{\dag}_{\mathcal{Y}})}^{\mathbb{Z}_2\textnormal{-Vect}} = \overline{\textnormal{im}(\obs^{\dag}_{\mathcal{Y}})}^{\textnormal{CABA}}$.
\end{corollary}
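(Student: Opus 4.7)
The plan is to specialise \Cref{minimalitytheorem} to $S = \mathcal{H}$, $T = \mathcal{R}$, with the distributive law homomorphism $\alpha \colon \lambda^{\mathcal{H}} \to \lambda^{\mathcal{R}}$ of \Cref{alphaxorneighbourhooddistrlaw}, and then to waive the image-equality side-condition of the theorem's second bullet via a vector-space dimension argument. First I would verify the hypotheses: the minimal pointed $\lambda^{\mathcal{H}}$-bialgebra $\mathbb{M}$ accepting $\mathcal{L}$ exists by the same construction used for the \'atomaton in \Cref{atomatonexample}, and the $\mathcal{R}$-algebra underlying $\alpha(\mathbb{M})$ admits a size-minimal generator because every $\mathbb{Z}_2$-vector space has a basis and any basis is a size-minimal generator. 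The automaton $\mathcal{X}$ produced by \Cref{minimalitytheorem} is then, via \Cref{generatorbialgebrahom}, exactly the minimal xor-CABA automaton for $\mathcal{L}$.

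Next I would translate $\alpha$-closedness into the closure equality in the statement using \Cref{imgobssuccinctsem}. Writing $\langle \Omega, h, k \rangle$ for the final $\lambda^{\mathcal{H}}$-bialgebra, $h$ is the CABA structure on $\Omega$ and $h \circ \alpha_{\Omega}$ is the induced $\mathbb{Z}_2$-vector space structure, so for any $\mathcal{R}$-succinct automaton $\mathcal{Y}$ one has $\img(\obs_{\expa_T(\mathcal{Y})}) = \overline{\img(\obs^{\dag}_{\mathcal{Y}})}^{\mathbb{Z}_2\textnormal{-Vect}}$ and $\img(\obs_{\alpha(\expa_S(\ext(\mathcal{Y})))}) = \overline{\img(\obs^{\dag}_{\mathcal{Y}})}^{\textnormal{CABA}}$. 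Hence $\alpha$-closedness is precisely the closure equality in the statement, and the first bullet of \Cref{minimalitytheorem} yields $\mathbb{M} = \img(\obs_{\expa_T(\mathcal{X})}) \subseteq \img(\obs_{\expa_T(\mathcal{Y})})$ for every such $\mathcal{Y}$.

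Finally I would waive the image-equality hypothesis of the theorem's second bullet. The map $\obs_{\expa_T(\mathcal{Y})} \colon \mathcal{R}Y \to \Omega$ is $\mathbb{Z}_2$-linear and $\mathcal{R}Y$ is freely generated as a vector space by $Y$, so $\obs^{\dag}_{\mathcal{Y}}(Y)$ generates $\img(\obs_{\expa_T(\mathcal{Y})})$ and therefore $\vert Y \vert \geq \dim \img(\obs_{\expa_T(\mathcal{Y})})$. By closedness both $\img(\obs_{\expa_T(\mathcal{Y})})$ and $\mathbb{M}$ are finite CABAs inside $\Omega$ with the former containing the latter; since a finite CABA on $k$ atoms has cardinality $2^k$ and is a $\mathbb{Z}_2$-vector space of dimension $k$, the set-theoretic inclusion forces $\dim \img(\obs_{\expa_T(\mathcal{Y})}) \geq \dim \mathbb{M} = \vert X \vert$, and chaining gives $\vert Y \vert \geq \vert X \vert$. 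The main obstacle is this last step, which requires using both algebraic perspectives simultaneously on $\img(\obs_{\expa_T(\mathcal{Y})})$ --- the vector-space structure to bound $\vert Y \vert$ from below and the CABA structure to propagate the inclusion into a dimension inequality --- with closedness providing the essential bridge between them.
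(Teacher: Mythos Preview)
Your proposal is correct and follows essentially the paper's own argument: instantiate \Cref{minimalitytheorem} with $S=\mathcal{H}$, $T=\mathcal{R}$, and the $\alpha$ of \Cref{alphaxorneighbourhooddistrlaw}; translate $\alpha$-closedness into the closure equality via \Cref{imgobssuccinctsem}; and waive the image-equality hypothesis by bounding $\vert Y\vert \geq \dim\img(\obs_{\expa_T(\mathcal{Y})}) \geq \dim\mathbb{M} = \vert X\vert$. The only cosmetic difference is in the middle inequality, where you pass through CABA atom counts and cardinalities, whereas the paper observes more directly that the inclusion delivered by \Cref{minimalitytheorem} is a sub-vector-space inclusion and invokes dimension monotonicity.
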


We conclude with a size-comparison between acceptors that is parametric in the closure of derivatives.

\begin{corollary}
\label[corollary]{sizecomparison}
	\begin{itemize}
		\item If $\overline{\textnormal{Der}(\mathcal{L})}^{\mathbb{Z}_2\textnormal{-Vect}} = \overline{\textnormal{Der}(\mathcal{L})}^{\textnormal{CABA}}$, then the minimal xor automaton and the minimal xor-CABA automaton for $\mathcal{L}$ are of the same size.
		\item If $\overline{\textnormal{Der}(\mathcal{L})}^{\textnormal{CSL}} = \overline{\textnormal{Der}(\mathcal{L})}^{\textnormal{CDL}}$, then the canonical RFSA and the distromaton for $\mathcal{L}$ are of the same size.
		\item If $\overline{\textnormal{Der}(\mathcal{L})}^{\textnormal{CSL}} = \overline{\textnormal{Der}(\mathcal{L})}^{\textnormal{CABA}}$, then the canonical RFSA and the \'atomaton for $\mathcal{L}$ are of the same size. 
	\end{itemize}
\end{corollary}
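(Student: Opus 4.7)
The plan is to prove each bullet by mutual inclusion: under the stated closure hypothesis I show that each of the two acceptors compared lies in the minimisation class of the other as characterised by the preceding corollaries, and two-sided minimality then forces their sizes to agree. The common preliminary is the observation that for a reachable pointed $T$-succinct automaton $\mathcal{Y}$ accepting $\mathcal{L}$ one has $\overline{\textnormal{im}(\obs^{\dag}_{\mathcal{Y}})}^{T} = \overline{\textnormal{Der}(\mathcal{L})}^{T}$: reachability ensures every derivative occurs as an obs-value in the determinisation $\textnormal{exp}_T(\mathcal{Y})$, while each state language in $\textnormal{im}(\obs^{\dag}_{\mathcal{Y}})$ lies in the $T$-algebraic closure of the derivatives. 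Each canonical acceptor in this paper is reachable by construction of the underlying minimal pointed bialgebra, and for the minimal xor automaton in particular any $\mathbb{Z}_2$-weighted automaton admits an equivalent reachable one of no greater size.

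For the first bullet, \Cref{minimalxor} gives $|\textnormal{minimal xor}| \leq |\textnormal{minimal xor-CABA}|$ immediately. For the reverse, let $\mathcal{X}$ be the minimal xor automaton. The preliminary yields $\overline{\textnormal{im}(\obs^{\dag}_{\mathcal{X}})}^{\mathbb{Z}_2\textnormal{-Vect}} = \overline{\textnormal{Der}(\mathcal{L})}^{\mathbb{Z}_2\textnormal{-Vect}}$, which by the hypothesis equals $\overline{\textnormal{Der}(\mathcal{L})}^{\textnormal{CABA}}$. Since $\textnormal{im}(\obs^{\dag}_{\mathcal{X}}) \subseteq \overline{\textnormal{Der}(\mathcal{L})}^{\mathbb{Z}_2\textnormal{-Vect}} = \overline{\textnormal{Der}(\mathcal{L})}^{\textnormal{CABA}}$, taking CABA-closure gives $\overline{\textnormal{im}(\obs^{\dag}_{\mathcal{X}})}^{\textnormal{CABA}} \subseteq \overline{\textnormal{Der}(\mathcal{L})}^{\textnormal{CABA}} = \overline{\textnormal{im}(\obs^{\dag}_{\mathcal{X}})}^{\mathbb{Z}_2\textnormal{-Vect}} \subseteq \overline{\textnormal{im}(\obs^{\dag}_{\mathcal{X}})}^{\textnormal{CABA}}$, so the two closures of $\textnormal{im}(\obs^{\dag}_{\mathcal{X}})$ coincide. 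Thus $\mathcal{X}$ satisfies the closedness condition of \Cref{corminimalxorcaba}, giving $|\textnormal{minimal xor-CABA}| \leq |\textnormal{minimal xor}|$.

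Bullets two and three follow the same template mutatis mutandis, replacing the pair $(\mathbb{Z}_2\textnormal{-Vect}, \textnormal{CABA})$ with $(\textnormal{CSL}, \textnormal{CDL})$ and $(\textnormal{CSL}, \textnormal{CABA})$ respectively, and invoking \Cref{canonicalrfsaminimal,distromatonminimal,minimalityatomaton}. For the canonical RFSA $\mathcal{R}$ the preliminary gives $\overline{\textnormal{im}(\obs^{\dag}_{\mathcal{R}})}^{\textnormal{CSL}} = \overline{\textnormal{Der}(\mathcal{L})}^{\textnormal{CSL}}$; the hypothesis then lifts this equality to the larger closure by the same inclusion chain as above, placing $\mathcal{R}$ in the minimisation class of the distromaton (resp. \'atomaton). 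Conversely, the states of the distromaton (resp. \'atomaton) $\mathcal{D}$ are join-irreducibles (resp. atoms) of $\overline{\textnormal{Der}(\mathcal{L})}^{\textnormal{CDL}}$ (resp. $\overline{\textnormal{Der}(\mathcal{L})}^{\textnormal{CABA}}$), and the hypothesis collapses this to $\overline{\textnormal{Der}(\mathcal{L})}^{\textnormal{CSL}}$, so $\overline{\textnormal{im}(\obs^{\dag}_{\mathcal{D}})}^{\textnormal{CSL}} \subseteq \overline{\textnormal{Der}(\mathcal{L})}^{\textnormal{CSL}}$, which is the class of \Cref{canonicalrfsaminimal}.

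The main obstacle is the preliminary reachability identity $\overline{\textnormal{im}(\obs^{\dag}_{\mathcal{Y}})}^{T} = \overline{\textnormal{Der}(\mathcal{L})}^{T}$; once in hand, each bullet reduces to the elementary closure manipulations sketched above. A secondary subtlety is verifying that the canonical acceptors built in earlier sections are in fact reachable, which holds since they arise from minimal pointed bialgebras via \Cref{forgenerator-isharp-is-bialgebra-hom} and \Cref{generatorbialgebrahom}.
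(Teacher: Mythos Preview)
Your approach is the same as the paper's: establish a two-sided size comparison by showing each canonical acceptor lies in the minimisation class of the other via the corollaries \ref{canonicalrfsaminimal}, \ref{minimalxor}, \ref{minimalityatomaton}, \ref{distromatonminimal}, \ref{corminimalxorcaba}, and the closure manipulation you give is exactly the one the paper uses (it writes ``which can be used to show $\overline{\textnormal{im}(\textnormal{obs}^{\dagger}_{\mathcal{X}})}^{\mathbb{Z}_2\textnormal{-Vect}} = \overline{\textnormal{im}(\textnormal{obs}^{\dagger}_{\mathcal{X}})}^{\textnormal{CABA}}$'' and leaves the details to the reader).

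There is, however, a genuine gap in how you justify your preliminary identity. You state it for an arbitrary \emph{reachable} pointed $T$-succinct automaton and argue the inclusion $\textnormal{im}(\obs^{\dag}_{\mathcal{Y}}) \subseteq \overline{\textnormal{Der}(\mathcal{L})}^{T}$ as if it followed from reachability. It does not: an NFA state may accept a language that is not a union of residuals of $\mathcal{L}$, so reachability alone gives only the inclusion $\textnormal{Der}(\mathcal{L}) \subseteq \overline{\textnormal{im}(\obs^{\dag}_{\mathcal{Y}})}^{T}$. The reason the identity holds for the canonical acceptors is structural, not reachability: each is of the form $\gen(\alpha(\mathbb{M}),\mathbb{G})$ for the minimal $\lambda^S$-bialgebra $\mathbb{M}$, and \Cref{forgenerator-isharp-is-bialgebra-hom} then yields a surjective bialgebra homomorphism $\exp_T(\mathcal{X})\twoheadrightarrow\alpha(\mathbb{M})$, so that $\img(\obs_{\exp_T(\mathcal{X})})\simeq\alpha(\mathbb{M})$ since $\mathbb{M}$ is observable. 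Combined with \Cref{imgobssuccinctsem} this gives $\overline{\textnormal{im}(\obs^{\dag}_{\mathcal{X}})}^{T} = \overline{\textnormal{Der}(\mathcal{L})}^{S}$ directly (this is packaged in the paper as a separate lemma). You cite the right propositions at the end, but you attribute to them the wrong consequence; replace ``reachable'' throughout by ``constructed as a generator of the minimal bialgebra'' and your argument is complete and coincides with the paper's.
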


\section{Related work}

\label{relatedwork}

One of the main motivations for the present paper is provided by active learning algorithms for the derivation of succinct state-based models \cite{angluin1987learning}. A major challenge in learning non-deterministic models is the lack of a canonical target acceptor for a given language \cite{DenisLT02}. The problem has been independently approached for different variants of non-determinism, often with the idea of finding a subclass admitting a unique representative \cite{esposito2002learning,berndt2017learning} such as e.g. the canonical RFSA, the minimal xor automaton, or the \'atomaton. 

 A more general and unifying perspective on learning automata that may not have a canonical target was given by Van Heerdt \cite{van2017learning, van2016master, van2020phd}. One of the central notions in this work is the concept of a scoop, originally introduced by Arbib and Manes \cite{arbib1975fuzzy} and here referred to as a generator. The main contribution in \cite{van2017learning} is a general procedure to find irreducible sets of generators, which thus restricts the work to the category of sets. In the present paper we generally work over arbitrary categories, although we assume the existence of a minimal set-based generator in \Cref{minimalitytheorem}. Furthermore, the work of Van Heerdt has no size-minimality results.
 

Closely related to the present paper is the work of Myers et al.\ \cite{MyersAMU15}, who present a coalgebraic construction for canonical non-deterministic automata. They cover the canonical RFSA, the minimal xor automaton, the \'atomaton, and the distromaton. The underlying idea in \cite{MyersAMU15} for finding succinct representations is similar to ours: first they build the minimal DFA for a regular language in a locally finite variety, then they apply an equivalence between the category of finite algebras and a suitable category of finite structured sets and relations.
On the one hand, the category of finite algebras in a locally finite variety can be translated into our setting by considering a category of algebras over a monad preserving finite sets. In fact, modulo this translation, many of the categories considered here already appear in \cite{MyersAMU15}, e.g.\ vector spaces, Boolean algebras, complete join-semi lattices, and distributive lattices. On the other hand, their construction seems to be restricted to the category of sets and non-deterministic automata, while we work over arbitrary monads on arbitrary categories. Their work does not provide a general algorithm to construct a succinct automaton, i.e., the specifics vary with the equivalences considered, while we give a general definition and a soundness argument in \Cref{generatorbialgebrahom}. While Myers et al.\ give minimality results for a wide range of acceptors, each proof follows case-specific arguments. In \Cref{minimalitytheorem} we provide a unifying minimality result for succinct automata that generalises parts of \cite[Theorem 4.8]{MyersAMU15} and subsumes most of their results \cite[Theorem 4.9, Theorem 4.10, Corollary 4.11, Theorem 4.13]{MyersAMU15}.
\section{Discussion and future work}

\label{discussion}

We have presented a general categorical framework based on bialgebras and distributive law homomorphisms for the derivation of canonical automata. The framework instantiates to a wide range of well-known examples from the literature and allowed us to discover a previously unknown canonical acceptor for regular languages. Finally, we presented a theorem that subsumes previously independently proven minimality results for canonical acceptors, implied new characterisations, and allowed us to make size-comparisons between canonical automata. 

In the future, we would like to cover other examples, such as the canonical probabilistic RFSA \cite{esposito2002learning} and the canonical alternating RFSA \cite{berndt2017learning, angluin2015learning}.
Probabilistic automata of the type in \cite{esposito2002learning} are typically modelled as $TF$-coalgebras instead of $FT$-coalgebras \cite{jacobs2012trace}, and thus will need a shift in perspective.
For alternating RFSAs we expect a canonical form can be constructed in the spirit of this paper, from generators for algebras over the neighbourhood monad, by interpreting the join-dense atoms of a CABA as a full meet of ground elements. 

Generally, it would be valuable to have a more systematic treatment of the range of available monads and distributive law homomorphisms \cite{zwart2019no}, making use of the fact that distributive law homomorphisms compose.

Further generalisation in another direction could be achieved by distributive laws between monads and endofunctors on different categories. For instance, we expect that operations on automata as the product can be captured by homomorphisms between distributive laws of such more general type. 

Finally, we would like to lift existing double-reversal characterisations of the minimal DFA \cite{brzozowski1962canonical}, the \'atomaton \cite{BrzozowskiT14}, the distromaton \cite{MyersAMU15}, and the minimal xor automaton \cite{VuilleminG210} to general canonical automata. The work in \cite{bonchi2012brzozowski, bonchi2014algebra} gives a coalgebraic generalisation of Brzozowski's algorithm based on dualities between categories, but does not cover the cases we are interested in. The framework in \cite{adamek2012coalgebraic} recovers the \'atomaton as the result of a minimisation procedure, but does not consider other canonical acceptors. 
\bibliographystyle{eptcs}
\bibliography{generalisingrfsa}

\ifdefined\doappendix
\label{appendix}

\section{Definitions}

\begin{definition}
\label[definition]{monaddefs}
\begin{itemize}
	\item The \emph{powerset monad} $\langle \mathcal{P}, \eta^{\mathcal{P}}, \mu^{\mathcal{P}} \rangle$ on the category of sets is defined by 
	$\mathcal{P}X = 2^X$, $\mathcal{P}f(\varphi)(y) = \bigvee_{x \in f^{-1}(y)} \varphi(x)$, $\eta_X^{\mathcal{P}}(x)(y) = \lbrack x = y \rbrack$, and $\mu^{\mathcal{P}}_X(\Phi)(x) = \bigvee_{\varphi \in 2^X} \Phi(\varphi) \wedge \varphi(x)$.
	\item The \emph{neighbourhood monad} $\langle \mathcal{H}, \eta^{\mathcal{H}}, \mu^{\mathcal{H}} \rangle$ on the category of sets is defined by
	$\mathcal{H}X = 2^{2^X}$, $\mathcal{H}f(\Phi)(\varphi) = \Phi(\varphi \circ f)$, $\eta_X^{\mathcal{H}}(x)(\varphi) = \varphi(x)$, and $\mu^{\mathcal{H}}_X(\Psi)(\varphi) = \Psi(\eta^{\mathcal{H}}_{2^X}(\varphi))$.
	\item The \emph{monotone neighbourhood monad} $\langle \mathcal{A}, \eta^{\mathcal{A}}, \mu^{\mathcal{A}} \rangle$ on the category of sets is defined by $\mathcal{A}X =  \langle 2, \leq \rangle^{\langle 2^X, \subseteq \rangle}$ and otherwise coincides with the neighbourhood monad.
	\item The \emph{free vector space monad} over the unique two element field $\langle \mathcal{R},  \eta^{\mathcal{R}}, \mu^{\mathcal{R}} \rangle$ on the category of sets is defined by $\mathcal{R}X = 2^X$, $\mathcal{R}f(\varphi)(y) = \bigoplus_{x \in f^{-1}(y)} \varphi(x)$, where $a \oplus b := a + b \textnormal{ mod } 2$, $\eta_X^{\mathcal{R}}(x)(y) = \lbrack x = y \rbrack$, and $\mu^{\mathcal{R}}_X(\Phi)(x) = \bigoplus_{\varphi \in 2^X} \Phi(\varphi) \cdot \varphi(x)$.
\item The \emph{nominal powerset monad} $\langle \mathcal{P}_{\textnormal{n}}, \eta^{\mathcal{P}_{\textnormal{n}}}, \mu^{\mathcal{P}_{\textnormal{n}}} \rangle$ on the category of (finitely-supported) nominal sets is defined by $\mathcal{P}_{\textnormal{n}}X = \lbrace A \subseteq X \mid A \textnormal{ finitely supported} \rbrace$, $\pi.A := \lbrace \pi.a \mid a \in A \rbrace$, and otherwise coincides with $\mathcal{P}$ \cite{pitts2013nominal}. \end{itemize}
\end{definition}

\section{Proofs}

\begin{lemma}\tagcite{jacobs2005bialgebraic}
\label[lemma]{induceddistrlaw}
	Every algebra $h: TB \rightarrow B$ for a set monad $T$ induces a distributive law $\lambda^h$ between $T$ and $F$ with $FX = B \times X^A $ by $\lambda^h_X := (h \times \textnormal{st}) \circ \langle T \pi_1, T \pi_2 \rangle$.
\end{lemma}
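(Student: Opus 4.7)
The plan is to verify the three defining axioms of a distributive law for $\lambda^h_X := (h \times \textnormal{st}) \circ \langle T\pi_1, T\pi_2 \rangle$, namely naturality in $X$, the unit axiom $F\eta_X = \lambda^h_X \circ \eta_{FX}$, and the multiplication axiom $\lambda^h_X \circ \mu_{FX} = F\mu_X \circ \lambda^h_{TX} \circ T\lambda^h_X$. Since the base category is $\Set$ and since each constituent ($h$, strength $\textnormal{st}$, $T\pi_1$, $T\pi_2$) is elementary, the argument is a sequence of element chases that reduce to (i) naturality of $\eta, \mu$, (ii) the algebra axioms $h \circ \eta_B = \textnormal{id}_B$ and $h \circ \mu_B = h \circ Th$, and (iii) two coherence properties of the strength.

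First, I would establish naturality of $\lambda^h$. For $f: X \to Y$ the map $Ff = \textnormal{id}_B \times f^A$ does not touch the first coordinate where $h$ acts, so the first component of $Ff \circ \lambda^h_X = \lambda^h_Y \circ TFf$ follows from functoriality of $T$ applied to $\pi_1$; on the second component the identity follows from the well-known naturality square for strength, $f^A \circ \textnormal{st}_X = \textnormal{st}_Y \circ T(f^A)$. Next I would verify the unit axiom: applying $\lambda^h_X$ to $\eta_{B \times X^A}(b,g)$, naturality of $\eta$ with respect to $\pi_1,\pi_2$ yields $\eta_B(b)$ and $\eta_{X^A}(g)$ in the two components; the algebra unit gives $h(\eta_B(b)) = b$, while the unit coherence for strength $\textnormal{st} \circ \eta_{X^A} = \eta_X \circ (-)$ (immediate from $\textnormal{st}(U)(a) = T(\textnormal{ev}_a)(U)$ and naturality of $\eta$) delivers $\textnormal{st}(\eta_{X^A}(g)) = \eta_X \circ g$, exactly matching $F\eta_X(b,g)$.

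For the multiplication axiom, I would chase an arbitrary $u \in TT(B \times X^A)$ through both sides. On the first coordinate, naturality of $\mu$ against $\pi_1$ followed by the algebra multiplication axiom gives $h \circ T\pi_1 \circ \mu_{B\times X^A} = h \circ Th \circ TT\pi_1 = h \circ T(h \circ T\pi_1)$, which is precisely the first coordinate produced by $F\mu_X \circ \lambda^h_{TX} \circ T\lambda^h_X$. On the second coordinate, the key technical lemma to invoke is the strength–multiplication coherence
\[
\textnormal{st}_X \circ \mu_{X^A} \;=\; (\mu_X)^A \circ \textnormal{st}_{TX} \circ T\textnormal{st}_X,
\]
which follows by evaluating both sides at $a \in A$ and using $\textnormal{ev}_a \circ \textnormal{st}_X = T(\textnormal{ev}_a)$ together with naturality of $\mu$ against $T(\textnormal{ev}_a)$. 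Combined with naturality of $\mu$ against $\pi_2$, this identifies the second coordinates.

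The main obstacle is the multiplication axiom: it involves four nested applications of $T$ and requires juggling $\mu$, strength, and the algebra structure simultaneously. The clean way to handle this is to reduce it, via the coherence identity above, to a statement purely about strength and $\mu$ on one side and the $T$-algebra equation $h \circ \mu_B = h \circ Th$ on the other, so that the two coordinates are verified independently. Once the coherence for strength is isolated as a lemma, the remaining bookkeeping is routine naturality.
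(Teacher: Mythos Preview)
Your proposal is correct and follows essentially the same route as the paper: both verify naturality, the unit axiom, and the multiplication axiom by projecting onto the two factors of $B \times X^A$, using the $T$-algebra laws for $h$ on the first coordinate and the standard unit/multiplication coherences for strength on the second. The key identity you isolate, $\textnormal{st}_X \circ \mu_{X^A} = (\mu_X)^A \circ \textnormal{st}_{TX} \circ T\textnormal{st}_X$, is exactly the one the paper invokes for the $\pi_2$-component of the multiplication axiom.
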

\begin{proof}
The statement is well-known \cite{jacobs2005bialgebraic}, but a complete proof hard to find. The naturality of $\lambda^h$ essentially follows from the naturality of the strength function.
The equation involving the monad unit is a consequence of
\begin{align*}
	& \pi_1 \circ (h \times \textnormal{st}) \circ \langle T\pi_1, T\pi_2 \rangle \circ \eta_{B \times X^A} & \\
=\ & h \circ T\pi_1 \circ \eta_{B \times X^A} & \textnormal{(Def. } \pi_1) \\
=\ & h \circ \eta_B \circ \pi_1 & \textnormal{(Nat. } \eta) \\
=\ &  \pi_1 \circ (B \times \eta_X^A) & (h \circ \eta_B = \textnormal{id}_B, \textnormal{Def. } \pi_1)
\end{align*}
and 
\begin{align*}
	& \pi_2 \circ (h \times \textnormal{st}) \circ \langle T\pi_1, T\pi_2 \rangle \circ \eta_{B \times X^A} & \\
	=\ & \textnormal{st} \circ T\pi_2 \circ \eta_{B \times X^A} & \textnormal{(Def. } \pi_2) \\
	=\ & \textnormal{st} \circ \eta_{X^A} \circ \pi_2 & \textnormal{(Nat. } \eta) \\
	=\ & \eta_X^A \circ \pi_2 & (\textnormal{st} \circ \eta_{X^A} = \eta_X^A)  \\
	=\ & \pi_2 \circ (B \times \eta_X^A) & \textnormal{(Def. } \pi_2).
\end{align*}
	Similarly, the equation involving the monad multiplication is a consequence of
		\begin{align*}
		&\pi_1 \circ (B \times \mu_X^A) \circ (h \times \textnormal{st}) \circ \langle T \pi_1, T \pi_2 \rangle 
		 \circ T(h \times \textnormal{st}) \circ T\langle T\pi_1, T\pi_2 \rangle  \\
		 =\ & h \circ Th \circ T^2\pi_1 & \textnormal{(Def. } \pi_1) \\
		  =\ & h \circ \mu_B \circ T^2 \pi_1 & (h \circ Th = h \circ \mu_B) \\
		  =\ & h \circ T \pi_1 \circ \mu_{B \times X^A} & \textnormal{(Nat. } \mu) \\
		  =\ & \pi_1 \circ (h \times \textnormal{st}) \circ \langle T\pi_1, T\pi_2 \rangle \circ \mu_{B \times X^A} & \textnormal{(Def. } \pi_1)
	\end{align*}
	and
	\begin{align*}
		&\pi_2 \circ (B \times \mu_X^A) \circ (h \times \textnormal{st}) \circ \langle T \pi_1, T \pi_2 \rangle 
		 \circ T(h \times \textnormal{st}) \circ T\langle T\pi_1, T\pi_2 \rangle  \\
		=\ & \mu_X^A \circ \textnormal{st} \circ T(\textnormal{st}) \circ T^2\pi_2 & \textnormal{(Def. } \pi_2) \\
		=\ & \textnormal{st} \circ \mu_{X^A} \circ T^2\pi_2 & (\mu^A_X \circ \textnormal{st} \circ T(\textnormal{st}) = \textnormal{st} \circ \mu_{X^A}) \\
		=\ & \textnormal{st} \circ T\pi_2 \circ \mu_{B \times X^A} & \textnormal{(Nat. } \mu) \\
		=\ & \pi_2 \circ (h \times \textnormal{st}) \circ \langle T\pi_1, T\pi_2 \rangle \circ \mu_{B \times X^A} & \textnormal{(Def. } \pi_2).
	\end{align*}
\end{proof}

\begin{lemma}
\label[lemma]{powersetalgebra}
	The morphism $h^{\mathcal{P}}: \mathcal{P}2 \rightarrow 2$ satisfying $\varphi \mapsto \varphi(1)$ defines a $\mathcal{P}$-algebra.
\end{lemma}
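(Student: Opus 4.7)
The plan is to verify directly the two axioms of a $T$-algebra, namely $h^{\mathcal{P}} \circ \eta^{\mathcal{P}}_2 = \mathrm{id}_2$ and $h^{\mathcal{P}} \circ \mu^{\mathcal{P}}_2 = h^{\mathcal{P}} \circ \mathcal{P}h^{\mathcal{P}}$, by unfolding the explicit definitions of the powerset monad from \Cref{monaddefs}. Both checks are pointwise computations in $2$, so there is no categorical subtlety to worry about; the whole proof is a short calculation.

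First I would verify the unit axiom: for any $x \in 2$ one has $(h^{\mathcal{P}} \circ \eta^{\mathcal{P}}_2)(x) = \eta^{\mathcal{P}}_2(x)(1) = [x = 1] = x$, where the last equality holds because we are working in $2 = \lbrace 0, 1 \rbrace$. Hence $h^{\mathcal{P}} \circ \eta^{\mathcal{P}}_2 = \mathrm{id}_2$.

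Next I would verify the multiplication axiom. Unfolding, for any $\Phi \in \mathcal{P}^2(2)$ we have
\begin{align*}
(h^{\mathcal{P}} \circ \mu^{\mathcal{P}}_2)(\Phi) &= \mu^{\mathcal{P}}_2(\Phi)(1) = \bigvee_{\varphi \in 2^2} \Phi(\varphi) \wedge \varphi(1), \\
(h^{\mathcal{P}} \circ \mathcal{P}h^{\mathcal{P}})(\Phi) &= \mathcal{P}h^{\mathcal{P}}(\Phi)(1) = \bigvee_{\varphi \in (h^{\mathcal{P}})^{-1}(1)} \Phi(\varphi) = \bigvee_{\varphi \in 2^2 : \varphi(1)=1} \Phi(\varphi).
\end{align*}
Equality of the two joins is then immediate: summands with $\varphi(1)=0$ contribute $0$ on the left and are excluded on the right, while summands with $\varphi(1)=1$ contribute $\Phi(\varphi)$ to both sides. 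This concludes the argument.

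There is really no main obstacle — the result is a direct bookkeeping exercise once the definitions are unfolded. The only thing worth being mindful of is to keep the distinction between the two $\mathcal{P}$ applied at different levels, so that $\mathcal{P}h^{\mathcal{P}}$ is correctly computed via the contravariant preimage-style formula $\mathcal{P}f(\varphi)(y) = \bigvee_{x \in f^{-1}(y)} \varphi(x)$ from \Cref{monaddefs}.
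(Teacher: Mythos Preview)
Your proposal is correct and follows essentially the same approach as the paper's proof: both verify the two $T$-algebra axioms by direct pointwise computation using the explicit definitions of $\eta^{\mathcal{P}}$, $\mu^{\mathcal{P}}$, and $\mathcal{P}f$ from \Cref{monaddefs}. The only cosmetic differences are that the paper checks the multiplication axiom first and presents it as a single chain of equalities rather than computing both sides separately, but the underlying calculation is identical.
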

\begin{proof}
On the one hand we find
	\begin{align*}
		h^{\mathcal{P}} \circ \mu^{\mathcal{P}}_2(\Phi) &= \mu^{\mathcal{P}}_2(\Phi)(1) & \textnormal{(Def. } h^{\mathcal{P}}) \\
		&= \bigvee_{\varphi \in 2^2} \Phi(\varphi) \wedge \varphi(1) & \textnormal{(Def. } \mu^{\mathcal{P}}_2) \\
		&= \bigvee_{\varphi \in (h^{\mathcal{P}})^{-1}(1)} \Phi(\varphi) & \textnormal{(Def. } h^{\mathcal{P}}) \\
		&= \mathcal{P}(h^{\mathcal{P}})(\Phi)(1) & \textnormal{(Def. } \mathcal{P}(h^{\mathcal{P}})) \\
		&= h^{\mathcal{P}} \circ \mathcal{P}(h^{\mathcal{P}})(\Phi) & \textnormal{(Def. } h^{\mathcal{P}}),
	\end{align*}
	and on the other hand we can deduce
	\begin{align*}
		h^{\mathcal{P}} \circ \eta^{\mathcal{P}}_2(x) &= \eta^{\mathcal{P}}_2(x)(1) & \textnormal{(Def. } h^{\mathcal{P}}) \\
		&= \lbrack x = 1 \rbrack & \textnormal{(Def. } \eta^{\mathcal{P}}_2 ) \\
		&= x & \textnormal{(Def. } \lbrack \cdot \rbrack).
	\end{align*}
\end{proof}

\begin{lemma}
\label[lemma]{neighbourhoodalgebra}
	The morphism $h^{\mathcal{H}}: \mathcal{H}2 \rightarrow 2$ assigning $\Phi \mapsto \Phi(\textnormal{id}_2)$ defines a $\mathcal{H}$-algebra.
\end{lemma}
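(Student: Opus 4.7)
The plan is to verify the two algebra laws directly from the definitions of $\eta^{\mathcal{H}}$, $\mu^{\mathcal{H}}$, and $\mathcal{H}$ on morphisms, just as in \Cref{powersetalgebra}. No deep structural reasoning is needed; each equation reduces to evaluating $\Phi(\textnormal{id}_2)$ after unfolding the relevant formula.

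For the unit law, I would compute $h^{\mathcal{H}} \circ \eta_2^{\mathcal{H}}(x) = \eta_2^{\mathcal{H}}(x)(\textnormal{id}_2) = \textnormal{id}_2(x) = x$, which is just the definition of $\eta^{\mathcal{H}}$ applied at the identity predicate.

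For the multiplication law, I would show both sides equal $\Psi(h^{\mathcal{H}})$. On one hand,
\begin{align*}
  h^{\mathcal{H}} \circ \mu_2^{\mathcal{H}}(\Psi) &= \mu_2^{\mathcal{H}}(\Psi)(\textnormal{id}_2) & \textnormal{(Def. } h^{\mathcal{H}}) \\
  &= \Psi(\eta_{2^2}^{\mathcal{H}}(\textnormal{id}_2)) & \textnormal{(Def. } \mu_2^{\mathcal{H}}).
\end{align*}
On the other hand,
\begin{align*}
  h^{\mathcal{H}} \circ \mathcal{H}(h^{\mathcal{H}})(\Psi) &= \mathcal{H}(h^{\mathcal{H}})(\Psi)(\textnormal{id}_2) & \textnormal{(Def. } h^{\mathcal{H}}) \\
  &= \Psi(\textnormal{id}_2 \circ h^{\mathcal{H}}) & \textnormal{(Def. } \mathcal{H}(h^{\mathcal{H}})) \\
  &= \Psi(h^{\mathcal{H}}).
\end{align*}
The missing link is then $\eta_{2^2}^{\mathcal{H}}(\textnormal{id}_2) = h^{\mathcal{H}}$, which follows by extensionality from $\eta_{2^2}^{\mathcal{H}}(\textnormal{id}_2)(\varphi) = \varphi(\textnormal{id}_2) = h^{\mathcal{H}}(\varphi)$ for every $\varphi \in 2^{2^2} = \mathcal{H}2$.

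There is no real obstacle here: both laws follow by evaluating the natural transformations at the specific point $\textnormal{id}_2 \in 2^2$ and exploiting that $h^{\mathcal{H}}$ is itself defined as ``evaluate at $\textnormal{id}_2$''. The only mildly subtle step is recognising that $\eta^{\mathcal{H}}_{2^2}(\textnormal{id}_2)$ and $h^{\mathcal{H}}$ coincide as elements of $\mathcal{H}(\mathcal{H}2) = 2^{\mathcal{H}2}$, which is purely a matter of applying the definition of $\eta^{\mathcal{H}}$.
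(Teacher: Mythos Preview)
Your proof is correct and follows essentially the same approach as the paper: both verify the unit and multiplication laws by direct unfolding, with the key observation $\eta^{\mathcal{H}}_{2^2}(\textnormal{id}_2) = h^{\mathcal{H}}$ (the paper states this upfront and runs a single chain of equalities, whereas you compute the two sides separately and meet in the middle, but the content is identical). One small notational slip: the set in which $\eta^{\mathcal{H}}_{2^2}(\textnormal{id}_2)$ and $h^{\mathcal{H}}$ live is $\mathcal{H}(2^2) = 2^{\mathcal{H}2}$, not $\mathcal{H}(\mathcal{H}2)$; this does not affect the argument.
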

\begin{proof}
	Since $\eta^{\mathcal{H}}_{2^2}(\textnormal{id}_2)(\Phi) = \Phi(\textnormal{id}_2) = h^{\mathcal{H}}(\Phi)$ we find
	\begin{align*}
		h^{\mathcal{H}} \circ \mu^{\mathcal{H}}_2(\Psi) &= \mu_2^{\mathcal{H}}(\Psi)(\textnormal{id}_2) & \textnormal{(Def. } h^{\mathcal{H}}) \\
		&= \Psi(\eta_{2^2}^{\mathcal{H}}(\textnormal{id}_2)) & (\textnormal{Def. } \mu^{\mathcal{H}}_2) \\
		&= \Psi(\textnormal{id}_2 \circ h^{\mathcal{H}}) & (\eta^{\mathcal{H}}_{2^2}(\textnormal{id}_2) = h^{\mathcal{H}})  \\
		&= \mathcal{H}(h^{\mathcal{H}})(\Psi)(\textnormal{id}_2) & \textnormal{(Def. } \mathcal{H}(h^{\mathcal{H}})) \\
		&= h^{\mathcal{H}} \circ \mathcal{H}(h^{\mathcal{H}})(\Psi) & (\textnormal{Def. } h^{\mathcal{H}}).
	\end{align*}
	We further can deduce
	\begin{align*}
		h^{\mathcal{H}} \circ \eta_X^{\mathcal{H}}(x) &= \eta_X^{\mathcal{H}}(x)(\textnormal{id}_2) & \textnormal{(Def. } h^{\mathcal{H}}) \\
		&= \textnormal{id}_2(x) & \textnormal{(Def. } \eta^{\mathcal{H}}_X) \\
		&= x & \textnormal{(Def. } \textnormal{id}_2).
	\end{align*}
\end{proof}

\begin{lemma}
\label[lemma]{monotoneneighbourhoodoutputalgebra}
	The morphism $h^{\mathcal{A}}: \mathcal{A}2 \rightarrow 2$ assigning $\Phi \mapsto \Phi(\textnormal{id}_2)$ defines a $\mathcal{A}$-algebra.
\end{lemma}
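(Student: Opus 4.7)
The proof proceeds in direct analogy to \Cref{neighbourhoodalgebra}: because the unit $\eta^{\mathcal{A}}$, multiplication $\mu^{\mathcal{A}}$, and functorial action of $\mathcal{A}$ are given by the very same pointwise formulas as those of $\mathcal{H}$ (only restricted to monotone elements), the algebraic calculations for $\mathcal{H}$ carry over essentially verbatim. The only genuinely new obligation is to check, at every step where a function is treated as an element of some $\mathcal{A}(-)$, that it is in fact monotone with respect to subset inclusion, so that the computations take place inside the monotone neighbourhood monad rather than merely inside $\mathcal{H}$.

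\textbf{Well-definedness.} The first thing I would note is that $\textnormal{id}_2 \colon 2 \to 2$ is monotone, hence belongs to $2^2$ and may be fed to any $\Phi \in \mathcal{A}2$; so $h^{\mathcal{A}}(\Phi) = \Phi(\textnormal{id}_2)$ is a well-defined element of $2$. One further checks that $\eta^{\mathcal{A}}_{2^2}(\textnormal{id}_2) \in \mathcal{A}(2^2)$ coincides with $h^{\mathcal{A}}$ on $\mathcal{A}2$, since $\eta^{\mathcal{A}}_{2^2}(\textnormal{id}_2)(\Phi) = \Phi(\textnormal{id}_2) = h^{\mathcal{A}}(\Phi)$; monotonicity of this assignment in $\Phi$ is immediate.

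\textbf{Axioms.} For the multiplication axiom I would reproduce the chain of equalities used in \Cref{neighbourhoodalgebra}:
\[
  h^{\mathcal{A}}(\mu^{\mathcal{A}}_2(\Psi)) \;=\; \mu^{\mathcal{A}}_2(\Psi)(\textnormal{id}_2) \;=\; \Psi\bigl(\eta^{\mathcal{A}}_{2^2}(\textnormal{id}_2)\bigr) \;=\; \Psi(\textnormal{id}_2 \circ h^{\mathcal{A}}) \;=\; \mathcal{A}(h^{\mathcal{A}})(\Psi)(\textnormal{id}_2) \;=\; h^{\mathcal{A}}\bigl(\mathcal{A}(h^{\mathcal{A}})(\Psi)\bigr),
\]
justifying each equality by the corresponding definition and by the identification of $\eta^{\mathcal{A}}_{2^2}(\textnormal{id}_2)$ with $h^{\mathcal{A}}$. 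Along the way, $\mathcal{A}(h^{\mathcal{A}})(\Psi)$ must itself lie in $\mathcal{A}2$, i.e.\ be monotone in $\varphi \in 2^2$; this follows because $\varphi \mapsto \varphi \circ h^{\mathcal{A}}$ preserves inclusion and $\Psi$ is monotone by hypothesis. The unit axiom reduces to $h^{\mathcal{A}}(\eta^{\mathcal{A}}_2(x)) = \eta^{\mathcal{A}}_2(x)(\textnormal{id}_2) = \textnormal{id}_2(x) = x$, exactly as in the neighbourhood case.

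\textbf{Main obstacle.} There is no substantial obstacle; the lemma is a direct restriction of \Cref{neighbourhoodalgebra}. The only discipline required is to confirm at each stage that the auxiliary functions entering the formulas are monotone, so that the derivation stays inside the subfunctor $\mathcal{A} \subseteq \mathcal{H}$. Since every formula merely composes or evaluates monotone ingredients with $\textnormal{id}_2$ and $h^{\mathcal{A}}$, these side conditions are routine.
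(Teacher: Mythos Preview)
Your proposal is correct and follows the same approach as the paper, whose proof is simply the one-liner ``Analogous to the proof of \Cref{neighbourhoodalgebra}.'' You have additionally made the monotonicity side conditions explicit, which the paper leaves implicit, but the underlying argument is identical.
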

 \begin{proof}
 	Analogous to the proof of \Cref{neighbourhoodalgebra}.
 \end{proof}
 
 \begin{lemma}
\label[lemma]{xoroutputalgebra}
	The morphism $h^{\mathcal{R}}: \mathcal{R}2 \rightarrow 2$ satisfying $\varphi \mapsto \varphi(1)$ defines a $\mathcal{R}$-algebra.
\end{lemma}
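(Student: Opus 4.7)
The plan is to follow the exact template used in the proof of \Cref{powersetalgebra} for the powerset monad, since $\mathcal{R}$ has the same underlying endofunctor $2^{(-)}$ and unit as $\mathcal{P}$, with the only difference being that the multiplication replaces the join $\bigvee$ by the XOR-sum $\bigoplus$. Since the output algebra map $h^{\mathcal{R}}(\varphi) = \varphi(1)$ is formally identical to $h^{\mathcal{P}}$, we just need to re-check that the two monad-algebra equations still go through with $\bigoplus$ and $\cdot$ in place of $\bigvee$ and $\wedge$.

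For the multiplication square, I would unfold $h^{\mathcal{R}} \circ \mu^{\mathcal{R}}_2(\Phi) = \mu^{\mathcal{R}}_2(\Phi)(1) = \bigoplus_{\varphi \in 2^2} \Phi(\varphi) \cdot \varphi(1)$ using the definitions of $h^{\mathcal{R}}$ and $\mu^{\mathcal{R}}_2$. The key observation is that the factor $\varphi(1) \in \lbrace 0,1 \rbrace$ acts as a selector, so terms with $\varphi(1)=0$ vanish and the expression equals $\bigoplus_{\varphi \in (h^{\mathcal{R}})^{-1}(1)} \Phi(\varphi)$, which by definition of the functorial action $\mathcal{R}(h^{\mathcal{R}})$ is exactly $\mathcal{R}(h^{\mathcal{R}})(\Phi)(1) = h^{\mathcal{R}} \circ \mathcal{R}(h^{\mathcal{R}})(\Phi)$.

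For the unit triangle, I would compute $h^{\mathcal{R}} \circ \eta^{\mathcal{R}}_2(x) = \eta^{\mathcal{R}}_2(x)(1) = \lbrack x = 1 \rbrack = x$, using the definitions of $h^{\mathcal{R}}$, $\eta^{\mathcal{R}}_2$, and the Iverson bracket on the two-element set, giving the identity on $2$.

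I do not anticipate any real obstacle: the entire argument is a line-by-line translation of the proof of \Cref{powersetalgebra}, and the only substantive point is the trivial fact that multiplication by a Boolean value filters out the zero summands from a XOR-sum in exactly the same way it does from a join. The structure of the proof will be two short aligned computations, each mirroring one of the two blocks in \Cref{powersetalgebra}'s proof.
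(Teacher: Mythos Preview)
Your proposal is correct and is exactly the approach the paper takes: the paper's proof is the single line ``Analogous to the proof of \Cref{powersetalgebra}'', and you have simply spelled out that analogy in detail by replacing $\bigvee,\wedge$ with $\bigoplus,\cdot$.
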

\begin{proof}
	Analogous to the proof of \Cref{powersetalgebra}.
\end{proof}

\begin{oneshot}{proposition}{forgenerator-isharp-is-bialgebra-hom}
	Let $\langle X, h, k\rangle$ be a $\lambda$-bialgebra and let $\langle Y, i, d \rangle$ be a generator for the $T$-algebra $\langle X,h \rangle$.
	Then $h \circ Ti \colon \textnormal{exp}_T(\langle Y, Fd \circ k \circ i\rangle ) \rightarrow \langle X, h, k \rangle$ is a $\lambda$-bialgebra homomorphism.
\end{oneshot}
\begin{proof}
By definition we have $\textnormal{exp}_T(\langle Y, Fd \circ k \circ i \rangle) = \langle TY, \mu_Y, F \mu_Y \circ \lambda_{TY} \circ T(Fd \circ k \circ i) \rangle$. It is well-known that $h \circ Ti$ is a  homomorphism between the underlying $T$-algebra structures. It thus remains to show that it is a $F$-coalgebra homomorphism. The latter follows from the commutativity of the diagram below:
\begin{equation*}
	\begin{tikzcd}		TY \arrow{r}{Ti} \arrow{d}[left]{Ti} & TX \arrow{dd}{Tk} \arrow{rr}{h} & & X \arrow{ddddd}{k} \\
		TX \arrow{d}[left]{Tk} \\
		TFX \arrow{d}[left]{TFd} \arrow{r}{\textnormal{id}_{TFX}} & TFX \arrow{ddr}{\lambda_X} \\
		TFTY \arrow{d}[left]{\lambda_{TY}} \arrow{r}{TFTi} & TFTX \arrow{u}{TFh} \\
		FT^2Y \arrow{r}{FT^2i} \arrow{d}[left]{F\mu_Y} & FT^2X \arrow{r}{FTh} \arrow{d}{F \mu_X} & FTX \arrow{d}{Fh} \\
		FTY \arrow{r}{FTi} & FTX \arrow{r}{Fh} & FX \arrow{r}{\textnormal{id}_{FX}} & FX
	\end{tikzcd}.
\end{equation*}
\end{proof}

\begin{oneshot}{proposition}{forbasis-isharp-is-bialgebra-iso}
	Let $\langle X, h, k\rangle$ be a $\lambda$-bialgebra and let $\langle Y, i, d \rangle$ be a basis for the $T$-algebra $\langle X,h \rangle$.
	Then $h \circ Ti \colon \textnormal{exp}_T(\langle Y, Fd \circ k \circ i\rangle ) \rightarrow \langle X, h, k \rangle$ is a $\lambda$-bialgebra isomorphism.
\end{oneshot}
\begin{proof}
By definition we have $\textnormal{exp}_T(\langle Y, Fd \circ k \circ i \rangle) = \langle TY, \mu_Y, F \mu_Y \circ \lambda_{TY} \circ T(Fd \circ k \circ i) \rangle$. From \Cref{forgenerator-isharp-is-bialgebra-hom} we know that $h\circ Ti$ is a $\lambda$-bialgebra homomorphism. By the definition of a basis, $d$ is a two-sided inverse to $h\circ Ti$ as ordinary morphism. It thus remains to show that $d$ is a $\lambda$-bialgebra homomorphism. The diagram below on the left shows that it is a $T$-algebra homomorphism, and the diagram on the right below shows that it commutes with $F$-coalgebra structures: 
\begin{equation*}
\begin{tikzcd}[column sep=small]
				TX \arrow{rr}{Td} \arrow{d}[right]{\textnormal{id}_{TX}} & & T^2Y \arrow{dl}{T^2i} \arrow{dd}[left]{\mu_Y} \\
				TX \arrow{dd}[right]{h} & T^2X \arrow{l}{Th} \arrow{d}{\mu_X} \\
				& TX \arrow{dl}{h} & TY \arrow{l}{Ti} \arrow{d}[left]{\textnormal{id}_{TY}} \\
				X \arrow{rr}{d} & & TY
			\end{tikzcd}
			\quad
		\begin{tikzcd}
			X \arrow{rrrrr}{k} \arrow{dd}[right]{d} \arrow{rd}{\textnormal{id}_X} & & & & & FX \arrow{dd}[left]{Fd} \\
			& X \arrow{rrrru}{k} & & & FTX \arrow{d}{FTd} \arrow{ur}{Fh} & \\
			TY \arrow{r}[below]{Ti} & TX \arrow{u}{h} \arrow{r}[below]{Tk} & TFX \arrow{r}[below]{TFd} \arrow{rru}{\lambda_X} & TFTY \arrow{r}[below]{\lambda_{TY}} & FT^2Y \arrow{r}[below]{F\mu_Y} & FTY
		\end{tikzcd}.
\end{equation*}
\end{proof}

 \begin{lemma}
 \label[lemma]{equivariantstrength}
 	The strength function $\textnormal{st}: \mathcal{P}_{\textnormal{n}}(X^A) \rightarrow (\mathcal{P}_{\textnormal{n}}X)^A$ satisfying $ \textnormal{st}(\Phi)(a) = \mathcal{P}_{\textnormal{n}}(\textnormal{ev}_a)(\Phi)$ is equivariant.
 \end{lemma}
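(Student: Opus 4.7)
The plan is to unpack the definition of equivariance and verify the required identity by a direct calculation with the nominal actions. Recall that a function between nominal sets is equivariant when $\pi \cdot f(x) = f(\pi \cdot x)$ holds for every finite permutation $\pi \in \textnormal{Perm}(A)$ and every $x$ in the domain. I would therefore show that $(\pi \cdot \textnormal{st}(\Phi))(a) = \textnormal{st}(\pi \cdot \Phi)(a)$ for every $\pi$, every $\Phi \in \mathcal{P}_{\textnormal{n}}(X^A)$, and every $a \in A$.

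First I would fix conventions for the three nominal actions in play: on $\mathcal{P}_{\textnormal{n}}(-)$ the action is pointwise, $\pi \cdot \Phi = \lbrace \pi \cdot f \mid f \in \Phi \rbrace$; on the exponential $X^A$ it is conjugation, $(\pi \cdot f)(b) = \pi \cdot f(\pi^{-1} \cdot b)$; and on $(\mathcal{P}_{\textnormal{n}} X)^A$ analogously $(\pi \cdot g)(a) = \pi \cdot g(\pi^{-1} \cdot a)$. Then I would unfold the left-hand side using these conventions as $(\pi \cdot \textnormal{st}(\Phi))(a) = \pi \cdot \textnormal{st}(\Phi)(\pi^{-1} \cdot a) = \pi \cdot \mathcal{P}_{\textnormal{n}}(\textnormal{ev}_{\pi^{-1} \cdot a})(\Phi) = \pi \cdot \lbrace f(\pi^{-1} \cdot a) \mid f \in \Phi \rbrace$, and the right-hand side as $\textnormal{st}(\pi \cdot \Phi)(a) = \mathcal{P}_{\textnormal{n}}(\textnormal{ev}_a)(\pi \cdot \Phi) = \lbrace (\pi \cdot f)(a) \mid f \in \Phi \rbrace = \lbrace \pi \cdot f(\pi^{-1} \cdot a) \mid f \in \Phi \rbrace$, which again equals $\pi \cdot \lbrace f(\pi^{-1} \cdot a) \mid f \in \Phi \rbrace$ by pulling the action out. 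The two sides match, completing the equivariance check.

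Separately I would verify that $\textnormal{st}$ is well-typed, i.e. that $\textnormal{st}(\Phi)(a)$ lies in $\mathcal{P}_{\textnormal{n}} X$ and that $\textnormal{st}(\Phi)$ lies in $(\mathcal{P}_{\textnormal{n}} X)^A$. Both reduce to finite-support arguments: if $S$ is a finite support of $\Phi$, then $S \cup \lbrace a \rbrace$ supports $\textnormal{st}(\Phi)(a) = \lbrace f(a) \mid f \in \Phi \rbrace$, while the equivariance identity just established shows that $S$ itself supports the function $\textnormal{st}(\Phi)$, since any $\pi$ fixing $S$ pointwise fixes $\Phi$ and hence fixes $\textnormal{st}(\Phi)$.

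There is no real obstacle here beyond bookkeeping: the only subtle point is keeping the three function-space actions distinct and remembering that the action on $X^A$ involves $\pi^{-1}$ on the input, which is precisely what makes the two evaluations in the computation align. I expect the full proof to be a short display of definition-chasing with no clever step.
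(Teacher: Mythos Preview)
Your proof is correct and follows the same overall strategy as the paper: unfold the actions on both sides and match them. The only difference is representational. The paper identifies $\mathcal{P}_{\textnormal{n}}X$ with characteristic functions $X \to 2$ (with $2$ carrying the trivial action), so it evaluates $(\pi.\textnormal{st}(\Phi))(a)$ at an additional point $x \in X$ and must first record a bijection $\lbrace \varphi \mid \varphi(a)=x \rbrace \cong \lbrace \varphi \mid \varphi(\pi^{-1}.a)=\pi^{-1}.x \rbrace$ via $\varphi \mapsto \pi^{-1}.\varphi$ in order to reindex the join $\bigvee_{\varphi \in \textnormal{ev}_{\pi^{-1}.a}^{-1}(\pi^{-1}.x)} \Phi(\varphi)$. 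Your subset-based presentation sidesteps that auxiliary bijection and is a bit more direct; the paper, on the other hand, omits the finite-support well-typedness check you include.
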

 \begin{proof}
 We first observe that for any $a \in A, x \in X$ and $\pi \in \textnormal{Perm}(A)$ the mapping
 \begin{equation}
 	\label{eq:strengthequivariant}
 \lbrace \varphi \in X^A \mid \varphi(a) = x \rbrace \rightarrow \lbrace \varphi \in X^A \mid \varphi(\pi^{-1}.a) = \pi^{-1}.x \rbrace \qquad \pi \mapsto \pi^{-1}.\varphi
 \end{equation}
 defines a bijection with inverse assignment $\varphi \mapsto \pi.\varphi$. Note that the set $2$ is equipped with the trivial action.
 	The statement thus follows from
 	\begin{align*}
 		(\pi.\textnormal{st}(\Phi))(a)(x) 
 		&= \pi.(\textnormal{st}(\Phi)(\pi^{-1}.a))(x) & \textnormal{(Def. } \pi.\textnormal{st}(\Phi)) \\
 		&= \textnormal{st}(\Phi)(\pi^{-1}.a)(\pi^{-1}.x) & \textnormal{(Def. } \pi.(\textnormal{st}(\Phi)(\pi^{-1}.a)) ) \\
 		&= \mathcal{P}_{\textnormal{n}}(\textnormal{ev}_{\pi^{-1}.a})(\Phi)(\pi^{-1}.x) & \textnormal{(Def. st)} \\
 		&= \bigvee_{\varphi \in \textnormal{ev}_{\pi^{-1}.a}^{-1}(\pi^{-1}.x)} \Phi(\varphi) & \textnormal{(Def. } \mathcal{P}_{\textnormal{n}}(\textnormal{ev}_{\pi^{-1}.a})) \\
 		&= \bigvee_{\varphi \in \textnormal{ev}_{a}^{-1}(x)} \Phi(\pi^{-1}.\varphi) & \textnormal{\eqref{eq:strengthequivariant}} \\
 		&= \bigvee_{\varphi \in \textnormal{ev}_a^{-1}(x)} (\pi.\Phi)(\varphi) & \textnormal{(Def. } \pi.\Phi) \\
 		&= \mathcal{P}_{\textnormal{n}}(\textnormal{ev}_a)(\pi.\Phi)(x) & \textnormal{(Def. } \mathcal{P}_{\textnormal{n}}(\textnormal{ev}_a)) \\
 		&= \textnormal{st}(\pi.\Phi)(a)(x) & \textnormal{(Def. st)}.
 	\end{align*}
 \end{proof}

\begin{lemma}
\label[lemma]{freepowersetbialgebrastructure}
		Let $\langle \mathcal{P}X, \mu^{\mathcal{P}}_X, \langle \overline{\varepsilon}, \overline{\delta} \rangle \rangle := \textnormal{free}^{\lambda^{\mathcal{P}}}(\langle X, \langle \varepsilon, \delta \rangle \rangle)$. Then $\overline{\varepsilon}(\varphi) = \bigvee_{y \in \varepsilon^{-1}(1)} \varphi(y)$ and $\overline{\delta}_a(\varphi)(x) = \bigvee_{y \in \delta_a^{-1}(x)} \varphi(y)$.
\end{lemma}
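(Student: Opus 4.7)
The plan is to directly unfold the definitions in play. By the definition of $\textnormal{free}_{\mathcal{P}}$ (the corollary following \Cref{expfunctor}), we have
$$\langle \overline{\varepsilon}, \overline{\delta} \rangle = \lambda^{\mathcal{P}}_X \circ \mathcal{P}\langle \varepsilon, \delta\rangle,$$
so I compute each projection separately and then read off the claimed formulas.

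For the first component, I will use the explicit form of $\lambda^{\mathcal{P}}$ provided by \eqref{induceddistrlaweq} together with the definition of $h^{\mathcal{P}}$. Specifically,
$$\overline{\varepsilon}(\varphi) = \pi_1 \circ (h^{\mathcal{P}} \times \textnormal{st}) \circ \langle \mathcal{P}\pi_1, \mathcal{P}\pi_2 \rangle \circ \mathcal{P}\langle \varepsilon, \delta\rangle(\varphi) = h^{\mathcal{P}}\bigl(\mathcal{P}(\pi_1 \circ \langle \varepsilon, \delta\rangle)(\varphi)\bigr) = h^{\mathcal{P}}(\mathcal{P}\varepsilon(\varphi)).$$
Unfolding the definition of $\mathcal{P}\varepsilon$ and of $h^{\mathcal{P}}$ (\Cref{monaddefs}) then yields $\overline{\varepsilon}(\varphi) = \mathcal{P}\varepsilon(\varphi)(1) = \bigvee_{y \in \varepsilon^{-1}(1)} \varphi(y)$, as desired.

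For the second component, an analogous computation gives
$$\overline{\delta}(\varphi) = \pi_2 \circ (h^{\mathcal{P}} \times \textnormal{st}) \circ \langle \mathcal{P}\pi_1, \mathcal{P}\pi_2 \rangle \circ \mathcal{P}\langle \varepsilon, \delta\rangle(\varphi) = \textnormal{st}(\mathcal{P}\delta(\varphi)).$$
Evaluating at $a \in A$ and using the definition of the strength function $\textnormal{st}(U)(a) = \mathcal{P}(\textnormal{ev}_a)(U)$ together with functoriality of $\mathcal{P}$, this becomes $\overline{\delta}_a(\varphi) = \mathcal{P}(\textnormal{ev}_a \circ \delta)(\varphi) = \mathcal{P}\delta_a(\varphi)$, where $\delta_a(y) := \delta(y)(a)$. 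Applying the definition of $\mathcal{P}$ on morphisms yields $\overline{\delta}_a(\varphi)(x) = \bigvee_{y \in \delta_a^{-1}(x)} \varphi(y)$.

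There is no real obstacle here: the proof is a routine unfolding of the general recipe of \Cref{determinisationexample} specialised to $T = \mathcal{P}$, $B = 2$, and $h = h^{\mathcal{P}}$. The only care needed is to keep the naming of $\varphi \in \mathcal{P}X$ and the intermediate object $\mathcal{P}\langle \varepsilon,\delta\rangle(\varphi) \in \mathcal{P}(2 \times X^A)$ separate when invoking the action of $\mathcal{P}$ on the projections.
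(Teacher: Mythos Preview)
Your proof is correct and follows essentially the same approach as the paper's own proof: both unfold $\lambda^{\mathcal{P}}_X \circ \mathcal{P}\langle \varepsilon, \delta\rangle$ component-wise, reduce the first projection to $h^{\mathcal{P}}(\mathcal{P}\varepsilon(\varphi))$ and the second to $\textnormal{st}(\mathcal{P}\delta(\varphi))$, and then expand the definitions of $h^{\mathcal{P}}$, strength, and $\mathcal{P}$ on morphisms to obtain the stated formulas.
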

\begin{proof}
	The first equality is a consequence of 
	\begin{align*}
		\overline{\varepsilon}(\varphi) 
		&= \pi_1 \circ (h^{\mathcal{P}} \times \textnormal{st}) \circ \langle \mathcal{P}\pi_1, \mathcal{P}\pi_2 \rangle \circ \mathcal{P}(\langle \varepsilon, \delta \rangle)(\varphi) & \textnormal{(Def. } \overline{\varepsilon}) \\
		&= h^{\mathcal{P}} \circ \mathcal{P}(\varepsilon)(\varphi) & \textnormal{(Def. } \pi_1) \\
		&= \mathcal{P}(\varepsilon)(\varphi)(1) & \textnormal{(Def. } h^{\mathcal{P}}) \\
		&= \bigvee_{y \in \varepsilon^{-1}(1)} \varphi(y) & \textnormal{(Def. } \mathcal{P}(\varepsilon)).
	\end{align*}
	For the second equality we observe
	\begin{align*}
		\overline{\delta}_a(\varphi)(x) &= 
		\overline{\delta}(\varphi)(a)(x) & \textnormal{(Def. } \overline{\delta}_a) \\
		&= \pi_2 \circ (h^{\mathcal{P}} \times \textnormal{st} ) \circ \langle \mathcal{P}\pi_1, \mathcal{P}\pi_2 \rangle \circ \mathcal{P}(\langle \varepsilon, \delta \rangle)(\varphi)(a)(x) & \textnormal{(Def. } \overline{\delta}) \\
		&= \textnormal{st} \circ \mathcal{P}(\delta)(\varphi)(a)(x) & \textnormal{(Def. } \pi_2) \\
		&= \mathcal{P}(\textnormal{ev}_a)(\mathcal{P}(\delta)(\varphi))(x) &\textnormal{(Def. st)} \\
		&= \mathcal{P}(\delta_a)(\varphi)(x) & (\textnormal{Def. } \delta_a) \\
		&=  \bigvee_{y \in \delta_a^{-1}(x)} \varphi(y) & \textnormal{(Def. } \mathcal{P}(\delta_a)).
	\end{align*}
\end{proof}

\begin{lemma}
\label[lemma]{freeneighbourhoodbialgebrastructure}
	Let $\langle \mathcal{H}X, \mu^{\mathcal{H}}_X, \langle  \overline{\varepsilon}, \overline{\delta} \rangle \rangle := \textnormal{free}^{\lambda^{\mathcal{H}}}(\langle X, \langle \varepsilon, \delta \rangle \rangle)$. Then $\overline{\varepsilon}(\Phi) = \Phi(\varepsilon)$ and $\overline{\delta}_a(\Phi)(\varphi) = \Phi(\varphi \circ \delta_a)$.
\end{lemma}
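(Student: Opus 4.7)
The plan is to unfold the definition of $\textnormal{free}_{\mathcal{H}}$ and chase the two projections, mirroring the proof of \Cref{freepowersetbialgebrastructure} but with the neighbourhood monad in place of the powerset monad. First I would note that, by definition of $\textnormal{free}_{\mathcal{H}}$, the induced $F$-coalgebra structure on $\mathcal{H}X$ is $\langle \overline{\varepsilon}, \overline{\delta} \rangle = \lambda^{\mathcal{H}}_X \circ \mathcal{H}(\langle \varepsilon, \delta \rangle)$, and, by \eqref{induceddistrlaweq} applied to $h^{\mathcal{H}}$, that $\lambda^{\mathcal{H}}_X = (h^{\mathcal{H}} \times \textnormal{st}) \circ \langle \mathcal{H}\pi_1, \mathcal{H}\pi_2 \rangle$.

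Next, I would compute the accepting component by composing with $\pi_1$, which collapses the pairing and the product to give $\overline{\varepsilon} = h^{\mathcal{H}} \circ \mathcal{H}(\varepsilon)$. Unfolding the definition of $\mathcal{H}(\varepsilon)$ from \Cref{monaddefs}, we get $\mathcal{H}(\varepsilon)(\Phi)(\varphi) = \Phi(\varphi \circ \varepsilon)$, and then by the definition of $h^{\mathcal{H}}$ from \Cref{neighbourhoodalgebra} we evaluate at $\varphi = \textnormal{id}_2$, yielding $\overline{\varepsilon}(\Phi) = \Phi(\textnormal{id}_2 \circ \varepsilon) = \Phi(\varepsilon)$ as desired.

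For the transition component, composing with $\pi_2$ gives $\overline{\delta} = \textnormal{st} \circ \mathcal{H}(\delta)$. I would then evaluate pointwise in $a \in A$ using the definition of the strength, $\textnormal{st}(\Phi)(a) = \mathcal{H}(\textnormal{ev}_a)(\Phi)$, so that $\overline{\delta}_a = \mathcal{H}(\textnormal{ev}_a) \circ \mathcal{H}(\delta) = \mathcal{H}(\textnormal{ev}_a \circ \delta) = \mathcal{H}(\delta_a)$ by functoriality of $\mathcal{H}$, since $\textnormal{ev}_a \circ \delta = \delta_a$ by definition. Unfolding $\mathcal{H}(\delta_a)$ one last time gives $\overline{\delta}_a(\Phi)(\varphi) = \Phi(\varphi \circ \delta_a)$, completing the calculation.

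I do not anticipate a serious obstacle: the proof is a routine definition chase, entirely parallel to the powerset case in \Cref{freepowersetbialgebrastructure}. The only place one has to be slightly careful is in keeping straight that $\mathcal{H}f$ acts by \emph{precomposition} (turning $\Phi$ into $\varphi \mapsto \Phi(\varphi \circ f)$) rather than by taking a join over a preimage, and that $h^{\mathcal{H}}$ evaluates at $\textnormal{id}_2$ rather than at $1$; once those two substitutions are made in the template of \Cref{freepowersetbialgebrastructure}, the two identities drop out immediately.
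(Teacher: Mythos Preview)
Your proposal is correct and follows essentially the same approach as the paper's proof: both reduce to $\overline{\varepsilon} = h^{\mathcal{H}} \circ \mathcal{H}(\varepsilon)$ and $\overline{\delta}_a = \mathcal{H}(\delta_a)$ via the template of \Cref{freepowersetbialgebrastructure}, and then unfold the definitions of $\mathcal{H}f$ and $h^{\mathcal{H}}$. The paper simply abbreviates the projection-chasing by citing the earlier proof, whereas you spell those steps out explicitly.
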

\begin{proof}
The proof is analogous to the one of \Cref{freepowersetbialgebrastructure}.
	The first equality is a consequence of 
	\begin{align*}
		\overline{\varepsilon}(\Phi) 
		&= \mathcal{H}(\varepsilon)(\Phi)(\textnormal{id}_2)  & \textnormal{(Cf. proof of \Cref{freepowersetbialgebrastructure})} \\
		&= \Phi(	\textnormal{id}_2 \circ \varepsilon) & \textnormal{(Def. } \mathcal{H}(\varepsilon)) \\
		&= \Phi(\varepsilon) & (\textnormal{id}_2 \circ \varepsilon = \varepsilon).
	\end{align*}
		For the second equality we observe
	\begin{align*}
		\overline{\delta}_a(\Phi)(\varphi) 
		&= \mathcal{H}(\delta_a)(\Phi)(\varphi) & \textnormal{(Cf. proof of \Cref{freepowersetbialgebrastructure})} \\
		&= \Phi(\varphi \circ \delta_a) & \textnormal{(Def. } \mathcal{H}(\delta_a)).
	\end{align*}
\end{proof}

\begin{lemma}
\label[lemma]{freealternatingbialgebrastructure}
	Let $\langle \mathcal{A}X, \mu^{\mathcal{A}}_X, \langle \overline{\varepsilon}, \overline{\delta}  \rangle \rangle := \textnormal{free}^{\lambda^{\mathcal{A}}}(\langle X, \langle \varepsilon, \delta  \rangle \rangle)$. Then $\overline{\varepsilon}(\Phi) = \Phi(\varepsilon)$ and $\overline{\delta}_a(\Phi)(\varphi) = \Phi(\varphi \circ \delta_a)$.
\end{lemma}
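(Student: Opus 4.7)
The plan is to mirror the proof of \Cref{freeneighbourhoodbialgebrastructure} essentially verbatim, since the monotone neighbourhood monad $\mathcal{A}$ shares the same underlying formulas for unit, multiplication and functor action as the neighbourhood monad $\mathcal{H}$, differing only by the restriction to monotone predicates. First I would unfold the definition of $\textnormal{free}^{\lambda^{\mathcal{A}}}$ supplied in the corollary following \Cref{expfunctor}: it gives $\langle \overline{\varepsilon}, \overline{\delta} \rangle = \lambda^{\mathcal{A}}_X \circ \mathcal{A}(\langle \varepsilon, \delta \rangle)$, where by \Cref{induceddistrlaw} and \Cref{monotoneneighbourhoodoutputalgebra} the distributive law $\lambda^{\mathcal{A}}_X$ factors as $(h^{\mathcal{A}} \times \textnormal{st}) \circ \langle \mathcal{A}\pi_1, \mathcal{A}\pi_2 \rangle$ with $h^{\mathcal{A}}(\Psi) = \Psi(\textnormal{id}_2)$.

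For the first equality, project onto the first component. Naturality of the pairing and $\pi_1 \circ \langle \mathcal{A}\pi_1, \mathcal{A}\pi_2 \rangle = \mathcal{A}\pi_1$ yield $\overline{\varepsilon} = h^{\mathcal{A}} \circ \mathcal{A}(\varepsilon)$, and then one computes
\[
\overline{\varepsilon}(\Phi) \;=\; \mathcal{A}(\varepsilon)(\Phi)(\textnormal{id}_2) \;=\; \Phi(\textnormal{id}_2 \circ \varepsilon) \;=\; \Phi(\varepsilon),
\]
using the definition of $\mathcal{A}$ on morphisms (which is the same as $\mathcal{H}$ on morphisms) and $\textnormal{id}_2 \circ \varepsilon = \varepsilon$.

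For the second equality, project onto the second component, which gives $\overline{\delta} = \textnormal{st} \circ \mathcal{A}(\delta)$; evaluating at $a \in A$ rewrites this, via the definition of the strength function $\textnormal{st}(\Phi)(a) = \mathcal{A}(\textnormal{ev}_a)(\Phi)$, as $\overline{\delta}_a = \mathcal{A}(\textnormal{ev}_a \circ \delta) = \mathcal{A}(\delta_a)$, whence $\overline{\delta}_a(\Phi)(\varphi) = \Phi(\varphi \circ \delta_a)$.

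The only point requiring a remark is well-definedness: one has to check that the intermediate objects really live in $\mathcal{A}$ rather than only in $\mathcal{H}$, i.e.\ that the resulting predicates remain monotone. This holds because $\mathcal{A}$ is a submonad of $\mathcal{H}$ whose structure maps are inherited, and the functorial action $\Phi \mapsto \Phi \circ (-)$ sends monotone predicates to monotone predicates. I therefore do not anticipate any real obstacle; the argument is a routine transfer of the neighbourhood case.
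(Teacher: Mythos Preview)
Your proposal is correct and follows precisely the approach the paper takes: the paper's proof reads in full ``Analogous to the proof of \Cref{freeneighbourhoodbialgebrastructure}'', and you have simply spelled out that analogy. The extra remark on monotonicity is a harmless sanity check not present in the paper but entirely in keeping with its spirit.
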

\begin{proof}
	Analogous to the proof of \Cref{freeneighbourhoodbialgebrastructure}.
\end{proof}

\begin{lemma}
\label[lemma]{freexorbialgebrastructure}
		Let $\langle \mathcal{R}X, \mu^{\mathcal{R}}_X, \langle \overline{\varepsilon}, \overline{\delta} \rangle \rangle := \textnormal{free}^{\lambda^{\mathcal{R}}}(\langle X, \langle \varepsilon, \delta \rangle \rangle)$. Then $\overline{\varepsilon}(\varphi) = \bigoplus_{y \in \varepsilon^{-1}(1)} \varphi(y)$ and $\overline{\delta}_a(\varphi)(x) = \bigoplus_{y \in \delta_a^{-1}(x)} \varphi(y)$.
\end{lemma}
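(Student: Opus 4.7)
The plan is to mirror the proof of \Cref{freepowersetbialgebrastructure} essentially line by line, substituting the powerset monad $\mathcal{P}$ by the free vector space monad $\mathcal{R}$ throughout. The crucial input is that $\lambda^{\mathcal{R}}$ is the canonical distributive law induced by the $\mathcal{R}$-algebra $h^{\mathcal{R}}\colon \mathcal{R}2\to 2$ with $h^{\mathcal{R}}(\varphi)=\varphi(1)$ (\Cref{xoroutputalgebra}), namely $\lambda^{\mathcal{R}}_X = (h^{\mathcal{R}}\times \mathrm{st})\circ \langle \mathcal{R}\pi_1,\mathcal{R}\pi_2\rangle$ (\Cref{induceddistrlaw}). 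Combined with the definition $\mathrm{free}^{\lambda^{\mathcal{R}}}(\langle X,\langle \varepsilon,\delta\rangle\rangle)=\langle \mathcal{R}X,\mu^{\mathcal{R}}_X, \lambda^{\mathcal{R}}_X\circ \mathcal{R}\langle \varepsilon,\delta\rangle\rangle$, the entire claim reduces to projecting and unfolding.

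First I would treat $\overline{\varepsilon}$. Projecting onto the first component, $\overline{\varepsilon}(\varphi) = \pi_1\circ(h^{\mathcal{R}}\times \mathrm{st})\circ\langle \mathcal{R}\pi_1,\mathcal{R}\pi_2\rangle\circ \mathcal{R}\langle \varepsilon,\delta\rangle(\varphi) = h^{\mathcal{R}}\circ \mathcal{R}(\varepsilon)(\varphi) = \mathcal{R}(\varepsilon)(\varphi)(1)$. Applying the definition of $\mathcal{R}(\varepsilon)$ from \Cref{monaddefs}, this equals $\bigoplus_{y\in \varepsilon^{-1}(1)}\varphi(y)$, as required.

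Next I would handle $\overline{\delta}$. Projecting onto the second component and evaluating at $a$, we have $\overline{\delta}_a(\varphi)(x) = \mathrm{st}\circ \mathcal{R}(\delta)(\varphi)(a)(x) = \mathcal{R}(\mathrm{ev}_a)(\mathcal{R}(\delta)(\varphi))(x)$ using the definition of strength. By functoriality of $\mathcal{R}$ and $\mathrm{ev}_a\circ \delta = \delta_a$, this rewrites to $\mathcal{R}(\delta_a)(\varphi)(x) = \bigoplus_{y\in \delta_a^{-1}(x)}\varphi(y)$ by definition of $\mathcal{R}$ on morphisms.

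There is no genuine obstacle here: every step is a direct transcription of the corresponding step in the proof of \Cref{freepowersetbialgebrastructure}, with the join $\bigvee$ replaced throughout by the $\mathbb{Z}_2$-sum $\bigoplus$. The only thing to double-check is that the defining formulas for $\mathcal{R}$ on morphisms, for $h^{\mathcal{R}}$, and for the induced distributive law agree with those used for $\mathcal{P}$ up to this symbolic change, which is immediate from \Cref{monaddefs,xoroutputalgebra,induceddistrlaw}. Accordingly, the proof can be written very compactly as \emph{analogous to the proof of \Cref{freepowersetbialgebrastructure}}.
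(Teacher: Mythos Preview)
Your proposal is correct and matches the paper's own proof exactly: the paper simply states that the argument is analogous to the proof of \Cref{freepowersetbialgebrastructure}, and you have spelled out precisely that analogy, replacing $\mathcal{P}$ by $\mathcal{R}$, $h^{\mathcal{P}}$ by $h^{\mathcal{R}}$, and $\bigvee$ by $\bigoplus$ throughout.
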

\begin{proof}
	Analogous to the proof of \Cref{freepowersetbialgebrastructure}.
\end{proof}

\begin{oneshot}{lemma}{inducedbialgebra}
\tagcite{klin2015presenting, bonsangue2013presenting}
Let $\alpha: \lambda^S \rightarrow \lambda^T$ be a distributive law homomorphism. Then $\alpha \langle X, h, k\rangle := \langle X, h \circ \alpha_X, k \rangle$ and  $\alpha(f) := f$ defines a functor $\alpha: \textnormal{Bialg}(\lambda^S) \rightarrow \textnormal{Bialg}(\lambda^T)$.
\end{oneshot}
\begin{proof}
The statement is well-known \cite{klin2015presenting, bonsangue2013presenting}, but a complete proof difficult to find. We first show that the definition is well-defined on objects.
The commutativity of the two diagrams below shows that $\langle X, h \circ \alpha_X \rangle$ is a $T$-algebra:
	\begin{equation*}
		\begin{tikzcd}
			T^2X \arrow{rr}{\mu^T_X} \arrow{d}[left]{T\alpha_X} & & TX \arrow{d}{\alpha_X} \\
			TSX \arrow{d}[left]{Th} \arrow{r}{\alpha_{SX}} & S^2X \arrow{r}{\mu^S_X} \arrow{d}{Sh} & SX  \arrow{d}{h}\\
			TX \arrow{r}[below]{\alpha_X} & SX \arrow{r}[below]{h} & X
		\end{tikzcd}
\qquad
	\begin{tikzcd}
		X \arrow{rr}{1} \arrow{dr}{\eta_X^S} \arrow{dd}[left]{\eta^T_X} & & X \\
		& SX \arrow{ur}[right]{h} & \\
		TX \arrow{ur}[right]{\alpha_X} & & 
	\end{tikzcd}.
	\end{equation*}
	To establish that $\langle X, h\circ \alpha_X, k \rangle$ is a $\lambda^T$-bialgebra it thus remains to observe the commutativity of the diagram on the left below:
	\begin{equation*}
		\begin{tikzcd}
		TX \arrow{rr}{Tk} \arrow{dd}[left]{\alpha_X} & & TFX \arrow{d}{\lambda^T_X} \arrow{ddl}[left]{\alpha_{FX}} \\
		& & FTX \arrow{d}{F\alpha_X} \\
		SX \arrow{d}[left]{h} \arrow{r}{Sk} & SFX \arrow{r}{\lambda^S_X} & FSX \arrow{d}{Fh} \\
		X \arrow{rr}[below]{k} & & FX	
		\end{tikzcd}
		\qquad
				\begin{tikzcd}
		TX \arrow{d}[left]{\alpha_X} \arrow{r}{Tf} &TY  \arrow{d}{\alpha_Y}	\\
		SX \arrow{d}[left]{h_X} \arrow{r}{Sf} & SY \arrow{d}{h_Y} \\
		X \arrow{r}[below]{f} &Y
		\end{tikzcd}.
	\end{equation*}
	 Well-definedness on morphisms follows from the naturality of $\alpha$, as seen on the right above.
	Compositionality follows immediately from the definition of $\alpha$ on morphisms.
\end{proof}

\begin{lemma}
\label[lemma]{alphapreservesfinal}
	Let $\alpha: \lambda^S \rightarrow \lambda^T$ be a distributive law homomorphism. If $\langle \Omega, h, k \rangle$ is the final $\lambda^S$-bialgebra, then $\langle \Omega, h \circ \alpha_{\Omega}, k \rangle$ is the final $\lambda^T$-bialgebra.
\end{lemma}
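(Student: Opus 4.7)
The plan is to reduce the claim to a uniqueness argument based on the Jacobs lemma quoted earlier in the excerpt. Since $\langle \Omega, h, k \rangle$ is the final $\lambda^S$-bialgebra, the converse direction of that lemma yields that $\langle \Omega, k \rangle$ is the final $F$-coalgebra; applying the forward direction to the monad $T$ then produces a final $\lambda^T$-bialgebra $\langle \Omega, h^T, k \rangle$ on the same carrier and coalgebra, where $h^T = \textnormal{obs}_{\langle T\Omega, \lambda^T_{\Omega} \circ Tk \rangle}$ is characterised as the unique $F$-coalgebra homomorphism from $\langle T\Omega, \lambda^T_{\Omega} \circ Tk \rangle$ to $\langle \Omega, k \rangle$. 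Independently, \Cref{inducedbialgebra} provides a $\lambda^T$-bialgebra $\alpha \langle \Omega, h, k \rangle = \langle \Omega, h \circ \alpha_{\Omega}, k \rangle$. It therefore suffices to verify $h \circ \alpha_{\Omega} = h^T$, which by the universal property of $h^T$ reduces to showing that $h \circ \alpha_{\Omega}: T\Omega \rightarrow \Omega$ is an $F$-coalgebra homomorphism from $\langle T\Omega, \lambda^T_{\Omega} \circ Tk \rangle$ to $\langle \Omega, k \rangle$.

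The required identity $k \circ h \circ \alpha_{\Omega} = F(h \circ \alpha_{\Omega}) \circ \lambda^T_{\Omega} \circ Tk$ will then follow from a short diagram chase. First, I would rewrite $k \circ h$ using the $\lambda^S$-bialgebra law for $\langle \Omega, h, k \rangle$ to obtain $k \circ h \circ \alpha_{\Omega} = Fh \circ \lambda^S_{\Omega} \circ Sk \circ \alpha_{\Omega}$. Next, apply naturality of $\alpha: T \Rightarrow S$ at $k$ to rewrite $Sk \circ \alpha_{\Omega}$ as $\alpha_{F\Omega} \circ Tk$. Finally, use the distributive law homomorphism axiom $\lambda^S_{\Omega} \circ \alpha_{F\Omega} = F\alpha_{\Omega} \circ \lambda^T_{\Omega}$ to reach $Fh \circ F\alpha_{\Omega} \circ \lambda^T_{\Omega} \circ Tk = F(h \circ \alpha_{\Omega}) \circ \lambda^T_{\Omega} \circ Tk$, which is the desired right-hand side.

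Conceptually, the argument expresses that the forgetful functors $\textnormal{Bialg}(\lambda^S) \rightarrow \textnormal{Coalg}(F)$ and $\textnormal{Bialg}(\lambda^T) \rightarrow \textnormal{Coalg}(F)$ both create the final object, and that the functor $\alpha$ of \Cref{inducedbialgebra} commutes with them on the nose; the diagram chase simply confirms that the canonical $\lambda^T$-lift of the final $F$-coalgebra $\langle \Omega, k \rangle$ coincides with the one obtained by applying $\alpha$. There is no real obstacle beyond keeping track of which monad's morphisms appear at each step, since $\alpha$ has the reversed direction $T \Rightarrow S$ when viewed on underlying natural transformations, while inducing a functor $\textnormal{Bialg}(\lambda^S) \rightarrow \textnormal{Bialg}(\lambda^T)$ on bialgebras.
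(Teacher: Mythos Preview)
Your proposal is correct and follows essentially the same route as the paper: both arguments extract the final $F$-coalgebra $\langle \Omega, k \rangle$ from the final $\lambda^S$-bialgebra, characterise the $T$-algebra part of the final $\lambda^T$-bialgebra as the unique $F$-coalgebra map $\langle T\Omega, \lambda^T_{\Omega} \circ Tk \rangle \to \langle \Omega, k \rangle$, and then verify via the same three-step diagram chase (bialgebra law for $h$, naturality of $\alpha$ at $k$, compatibility axiom $\lambda^S \circ \alpha_F = F\alpha \circ \lambda^T$) that $h \circ \alpha_{\Omega}$ satisfies this property. Your explicit appeal to \Cref{inducedbialgebra} is a slight redundancy but harmless.
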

\begin{proof}
	It is well-known that if $\langle \Omega, h, k \rangle$ is the final $\lambda^S$-bialgebra, then $\langle \Omega, k \rangle$ is the final $F$-coalgebra and $h: S\Omega \rightarrow \Omega$ is the unique homomorphism satisfying $k \circ h = Fh \circ \lambda^S_{\Omega} \circ Sk$. Similarly, it is well-known that $\langle \Omega, \overline{h}, k \rangle$ is the final $\lambda^T$-bialgebra, where $\overline{h}: T\Omega \rightarrow \Omega$ is the unique homomorphism satisfying $k \circ \overline{h} = F\overline{h} \circ \lambda^T_{\Omega} \circ Tk$. The statement thus follows from uniqueness:
	\[
	\begin{tikzcd}
		T\Omega \arrow{d}[left]{Tk} \arrow{r}{\alpha_{\Omega}} & S\Omega \arrow{d}{Sk} \arrow{r}{h} & \Omega \arrow{dd}{k} \\
		TF\Omega \arrow{r}{\alpha_{F\Omega}} \arrow{d}[left]{\lambda^T_{\Omega}} & SF\Omega \arrow{d}{\lambda^S_{\Omega}} & \\
		FT\Omega \arrow{r}[below]{F\alpha_{\Omega}} & FS\Omega \arrow{r}[below]{Fh} & F\Omega
	\end{tikzcd}.
	\]
\end{proof}

\begin{oneshot}{corollary}{generatorbialgebrahom}
	Let $\alpha: \lambda^S \rightarrow \lambda^T$ be a homomorphism between distributive laws and $\langle X,h,k \rangle$ a $\lambda^S$-bialgebra. If $\langle Y, i, d \rangle$ is a generator for the $T$-algebra $\langle X, h \circ \alpha_X \rangle$, then $
	(h \circ \alpha_X) \circ Ti: \textnormal{exp}_T(\langle  Y, Fd \circ k \circ i \rangle) \rightarrow \langle X, h \circ \alpha_X, k \rangle
	$ is a $\lambda^T$-bialgebra homomorphism.
\end{oneshot}
\begin{proof}
	By \Cref{inducedbialgebra} the tuple $\langle X, h \circ \alpha_X, k \rangle$ constitutes a $\lambda^T$-bialgebra.
	The statement thus follows from \Cref{forgenerator-isharp-is-bialgebra-hom}.
\end{proof}

\begin{oneshot}{lemma}{distributivelawaxiomeasier}
	Let $\alpha: T \Rightarrow S$ be a natural transformation satisfying $h^{S} \circ \alpha_B = h^{T}$, then $\lambda^{S} \circ \alpha_F = F \alpha \circ \lambda^{T}$.
\end{oneshot}
\begin{proof}
	We need to establish the commutativity of the following diagram:
	\begin{equation*}
	\begin{tikzcd}[row sep=1.5em, column sep = 4em]
		T(X^A \times B) \arrow{r}{\alpha_{X^A \times B}} \arrow{d}[left]{\langle T \pi_1, T \pi_2 \rangle} & S(X^A \times B) \arrow{d}{\langle S \pi_1, S \pi_2 \rangle} \\
		T(X^A) \times TB \arrow{r}{\alpha_{X^A} \times \alpha_B} \arrow{d}[left]{\textnormal{st} \times h^{T}} & S(X^A) \times SB \arrow{d}{\textnormal{st} \times h^S}  \\
		(TX)^A \times B \arrow{r}[below]{(\alpha_X)^A \times B} & (SX)^A \times B
	\end{tikzcd}.
	\end{equation*}
The commutativity of the top square is a consequence of the naturality of $\alpha$. Similarly, the commutativity of the bottom square follows from the assumption and the naturality of $\alpha$,
\begin{align*}
	\textnormal{st} \circ \alpha_{X^A}(U)(a) 
	&= S(\textnormal{ev}_a) \circ \alpha_{X^A}(U) & \textnormal{(Def. st)} \\
	&= \alpha_{X} \circ T(\textnormal{ev}_a)(U) & \textnormal{(Nat. } \alpha) \\
	&= \alpha_X^A \circ \textnormal{st}(U)(a) & \textnormal{(Def. st)}.
\end{align*}
\end{proof}

\begin{oneshot}{proposition}{algebrainduceddistributivellawhom}
	Any algebra $h: T2 \rightarrow 2$ over a set monad $T$ induces a homomorphism $\alpha^{h}: \lambda^{\mathcal{H}} \rightarrow \lambda^{h}$ between distributive laws by $\alpha^{h}_X := h^{2^X} \circ \textnormal{st} \circ T(\eta^{\mathcal{H}}_X)$.\end{oneshot}
\begin{proof}
It is well-known that the strength operation is natural and satisfies the equalities
$
	(\eta_A^T)^B = \textnormal{st} \circ \eta^T_{A^B}$ and
	$\textnormal{st} \circ \mu_{A^B} = \mu_{A}^B \circ \textnormal{st} \circ T\textnormal{st}$.
It is also not hard to see that for functions $f: A \rightarrow B$ and $g: C \rightarrow D$ it holds $f^D \circ A^{g} = B^g \circ f^C$. We write $f^*$ for $A^f$ and $f_*$ for $f^A$, and omit components of natural transformations for readability. 
The naturality of $\alpha^{h^{T}}$ is a consequence of:
\[
			\begin{tikzcd}
				TX \arrow{d}[left]{Tf} \arrow{r}{T\eta^{\mathcal{H}}} & T(2^{2^X}) \arrow{d}{T((f^*)^*)} \arrow{r}{\textnormal{st}} & T(2)^{2^X} \arrow{r}{h_*} \arrow{d}{(f^*)^*}  & 2^{2^X} \arrow{d}{(f^*)^*} \\
				TY \arrow{r}[below]{T\eta^{\mathcal{H}}} & T(2^{2^Y}) \arrow{r}[below]{\textnormal{st}}  & T(2)^{2^Y} \arrow{r}[below]{h_*} & 2^{2^Y}
			\end{tikzcd}.
			\]
	Using the equality $2^{\eta^{\mathcal{H}}_{2^X}} \circ \eta^{\mathcal{H}}_{2^{2^X}} = \textnormal{id}_{2^{2^X}}$, the equation involving the monad multiplications is seen from:
		\begin{equation*}
			\begin{tikzcd}
				 T(2^{2^X}) \arrow{r}{T(\eta^{\mathcal{H}})}  \arrow{dr}{1} & T(2^{2^{2^{2^X}}}) \arrow{r}{\textnormal{st}} \arrow{d}{T((\eta^{\mathcal{H}})^*)} & T(2)^{2^{2^{2^X}}} \arrow{r}{h_*} \arrow{dd}{(\eta^{\mathcal{H}})^*} & 2^{2^{2^{2^X}}} \arrow{ddddd}[right]{(\eta^{\mathcal{H}})^*} \\
				& T(2^{2^X}) \arrow{dr}{\textnormal{st}} & & \\
				T(T(2)^{2^X}) \arrow{uu}[left]{T(h_{*})} \arrow{r}[below]{\textnormal{st}} & T^2(2)^{2^X} \arrow{dddr}{\mu^T_*} \arrow{r}[below]{T(h)_*} & T(2)^{2^X} \arrow{dddr}{h_*} & \\
				T^2(2^{2^X}) \arrow{ddr}{\mu^T} \arrow{u}[left]{T(\textnormal{st})} & & & \\
				T^2(X) \arrow{u}[left]{T^2(\eta^{\mathcal{H}})} \arrow{d}[left]{\mu^T} & & & \\
				T(X) \arrow{r}[below]{T(\eta^{\mathcal{H}})} & T(2^{2^X}) \arrow{r}[below]{\textnormal{st}} & T(2)^{2^X} \arrow{r}[below]{h_*} & 2^{2^X} 
			\end{tikzcd}.
		\end{equation*}
	The equation involving the monad units is established by:
	\begin{equation*}
			\begin{tikzcd}[row sep=2em, column sep = 2em]
				X \arrow{rrr}{\eta^T} \arrow{ddd}[left]{\eta^{\mathcal{H}}} \arrow{dr}{{\eta^{\mathcal{H}}}} & & & TX \arrow{dl}{T\eta^{\mathcal{H}}} \\
				& 2^{2^X} \arrow{d}{\eta^{T}_*} \arrow{r}{\eta^T} \arrow{ddl}[left]{1} & T(2^{2^X}) \arrow{dl}{\textnormal{st}} & \\
				& T(2)^{2^X} \arrow{dl}{h_*} & & \\
				2^{2^{X}} & & & 
			\end{tikzcd}.
	\end{equation*}
	To show that the equation involving the distributive laws holds, we use \Cref{distributivelawaxiomeasier}.  That is, we note that for any $f$ it holds $f \circ \textnormal{ev}_a = \textnormal{ev}_a \circ f_*$, and moreover, $h^{\mathcal{H}} = \textnormal{ev}_{\textnormal{id}_2}$, before establishing:
	\begin{equation*}
					\begin{tikzcd}[row sep=2em, column sep = 5em]
				T(2) \arrow{ddr}[left]{h} \arrow{r}{T(\eta^{\mathcal{H}})} \arrow{dr}{1} & T(2^{2^{2}}) \arrow{d}{T(\textnormal{ev}_{\textnormal{id}_2})} \arrow{r}{\textnormal{st}} & T(2)^{2^{2}} \arrow{r}{h_*} \arrow{dl}{\textnormal{ev}_{\textnormal{id}_2}} & 2^{2^2} \arrow{ddll}{h^{\mathcal{H}}} \\
				& T(2) \arrow{d}{h} & & \\
				& 2 & & 		
					\end{tikzcd}.
	\end{equation*}
\end{proof}

\begin{oneshot}{corollary}{alphapowersetneighbourhooddistrlaw}
	Let $\alpha_X: \mathcal{P}X \rightarrow \mathcal{H}X$ satisfy $\alpha_X(\varphi)(\psi) = \bigvee_{x \in X} \varphi(x) \wedge \psi(x)$, then $\alpha$ constitutes a distributive law homomorphism $\alpha: \lambda^{\mathcal{H}} \rightarrow \lambda^{\mathcal{P}}$.
\end{oneshot}
\begin{proof}
	We show that $\alpha^{h^{\mathcal{P}}} = \alpha$, the statement then follows from \Cref{algebrainduceddistributivellawhom}. We calculate
	\begin{align*}
		\alpha^{h^{\mathcal{P}}}_X(\varphi)(\psi) &= (h^{\mathcal{P}})^{2^X} \circ \textnormal{st}  \circ \mathcal{P}(\eta^{\mathcal{H}}_X)(\varphi)(\psi) & \textnormal{(Def. } \alpha^{h^{\mathcal{P}}}_X)  \\
		&= \textnormal{st} \circ \mathcal{P}(\eta^{\mathcal{H}}_X)(\varphi)(\psi)(1) & \textnormal{(Def. } h^{\mathcal{P}}) \\
		&= \mathcal{P}(\textnormal{ev}_{\psi})(\mathcal{P}(\eta^{\mathcal{H}}_X)(\varphi))(1) & \textnormal{(Def. st)} \\
		&= \mathcal{P}(\textnormal{ev}_{\psi} \circ \eta^{\mathcal{H}}_X)(\varphi)(1) & (\mathcal{P}(f) \circ \mathcal{P}(g) = \mathcal{P}(f \circ g)) \\
		&= \mathcal{P}(\psi)(\varphi)(1) & (\textnormal{Def. ev, }\eta^{\mathcal{H}}_X) \\
		&= \bigvee_{x \in \psi^{-1}(1)} \varphi(x) & (\textnormal{Def. } \mathcal{P}(\psi)) \\
		&= \bigvee_{x \in X} \varphi(x) \wedge \psi(x) & (x \in \psi^{-1}(1))  \\
		&= \alpha_X(\varphi)(\psi) & \textnormal{(Def. } \alpha_X) .
	\end{align*}
\end{proof}

\begin{oneshot}{lemma}{basisshiftecabapowerset}
	Let $\alpha_X: \mathcal{P}X \rightarrow \mathcal{H}X$ satisfy $\alpha_X(\varphi)(\psi) = \bigvee_{x \in X} \varphi(x) \wedge \psi(x)$. If $B = \langle X, h \rangle$ is a $\mathcal{H}$-algebra, then $\langle \textnormal{At}(B), i, d \rangle$ with $i(a) = a$ and $d(x) = \lbrace a \in \textnormal{At}(B) \mid a \leq x \rbrace$ is a basis for the $\mathcal{P}$-algebra $\langle X, h \circ \alpha_X \rangle$.
\end{oneshot}
\begin{proof}
	Let $K: \textnormal{Set}^{\textnormal{op}} \rightarrow \textnormal{Alg}(\mathcal{H})$ denote the comparison functor with $K(X) = \langle \mathcal{P}X, 2^{\eta^{\mathcal{H}}_X} \rangle$ induced by the self-dual contravariant powerset adjunction. It is well-known that $K$ has a quasi-inverse, namely the functor $\textnormal{At}: \textnormal{Alg}(\mathcal{H}) \rightarrow \textnormal{Set}^{\textnormal{op}}$ assigning to a complete atomic Boolean algebra $B$ its atoms $\textnormal{At}(B)$ \cite{taylor2002subspaces}. The equivalence $d: B \simeq K \circ \textnormal{At}(B)$ is given by $d(x) = \lbrace a \in \textnormal{At}(B) \mid a \leq x \rbrace$.
	The calculation below 
	\begin{align*}
		2^{\eta^{\mathcal{H}}_X} \circ \alpha_{\mathcal{P}X}(\Phi)(x) &=
		\alpha_{\mathcal{P}X}(\Phi)(\eta^{\mathcal{H}}_X(x)) & \textnormal{(Def. } 2^{\eta^{\mathcal{H}}_X}) \\
		&= \bigvee_{\varphi \in 2^X} \Phi(\varphi) \wedge \eta^{\mathcal{H}}_X(x)(\varphi) &\textnormal{(Def. } \alpha_{\mathcal{P}X}) \\
		&= \bigvee_{\varphi \in 2^X} \Phi(\varphi) \wedge \varphi(x) & \textnormal{(Def. } \eta_X^{\mathcal{H}} ) \\
		&= \mu_{X}^{\mathcal{P}}(\Phi)(x) & \textnormal{(Def. } \mu^{\mathcal{P}}_X)
	\end{align*}
	shows that $2^{\eta^{\mathcal{H}}_X} \circ \alpha_{\mathcal{P}X} = \mu^{\mathcal{P}}_X$. By \Cref{alphapowersetneighbourhooddistrlaw} the definition $\alpha(\langle X, h \rangle) = \langle X, h \circ \alpha_X \rangle$ yields a functor $\alpha: \textnormal{Alg}(\mathcal{H}) \rightarrow \textnormal{Alg}(\mathcal{P})$. 
	We can thus deduce the following equivalence of $\mathcal{P}$-algebras
	\begin{align*}
		\langle X, h \circ \alpha_X \rangle &= \alpha(B) & \textnormal{(Def. } \alpha) \\
		&\simeq \alpha \circ K \circ \textnormal{At}(B) & (\textnormal{id} \simeq K \circ \textnormal{At}) \\
		&= \langle \mathcal{P}(\textnormal{At}(B)), 2^{\eta^{\mathcal{H}}_{\textnormal{At}(B)}} \circ \alpha_{\mathcal{P}(\textnormal{At}(B))}  \rangle & \textnormal{(Def. } \alpha \circ K \circ \textnormal{At}) \\
		&=  \langle \mathcal{P}(\textnormal{At}(B)), \mu^{\mathcal{P}}_{\textnormal{At}(B)} \rangle & (2^{\eta^{\mathcal{H}}_X} \circ \alpha_{\mathcal{P}X} = \mu^{\mathcal{P}}_X).
	\end{align*}
	Using the definition of a basis, the former immediately implies the claim.
	\end{proof}

\begin{oneshot}{corollary}{neighbourhoodpowersetmorphism}
	Let $\alpha_X: \mathcal{P}X \rightarrow \mathcal{A}X$ satisfy $\alpha_X(\varphi)(\psi) = \bigvee_{x \in X} \varphi(x) \wedge \psi(x)$, then $\alpha$ constitutes a distributive law homomorphism $\alpha: \lambda^{\mathcal{A}} \rightarrow \lambda^{\mathcal{P}}$.
\end{oneshot}
\begin{proof}
	We observe that $\alpha_X(\varphi): \langle 2^X, \subseteq \rangle \rightarrow \langle 2, \leq \rangle$ is monotone for all $\varphi \in 2^X$. Since the monotone neighbourhood monad $\mathcal{A}$ and the neighbourhood monad $\mathcal{H}$ only differ on objects, the result follows from \Cref{alphapowersetneighbourhooddistrlaw}.  
\end{proof}

\begin{oneshot}{corollary}{alphaxorneighbourhooddistrlaw}
	Let $\alpha_X: \mathcal{R}X \rightarrow \mathcal{H}X$ satisfy $\alpha_X(\varphi)(\psi) = \bigoplus_{x \in X} \varphi(x) \cdot \psi(x)$, then $\alpha$ constitutes a distributive law homomorphism $\alpha: \lambda^{\mathcal{H}} \rightarrow \lambda^{\mathcal{R}}$.
\end{oneshot}
\begin{proof}
	Analogous to the proof of \Cref{alphapowersetneighbourhooddistrlaw}.
\end{proof}

\begin{oneshot}{proposition}{alphaunderlies}
Let $\alpha: \lambda^{S} \rightarrow \lambda^T$ be a distributive law homomorphism. Then $\alpha_X: TX \rightarrow SX$ underlies a natural transformation $\alpha: \textnormal{exp}_T \Rightarrow \alpha \circ \textnormal{exp}_S \circ \textnormal{ext}$ between functors of type $\textnormal{Coalg}(FT) \rightarrow \textnormal{Bialg}(\lambda^T)$.
\end{oneshot}
\begin{proof}
	Given a $T$-succinct automaton $\mathcal{X} = \langle X, k \rangle$ the definitions imply 
	\begin{align*}
		\textnormal{exp}_T(\mathcal{X}) &= \langle TX, \mu^T_X, F \mu^T_X \circ \lambda^T_{TX} \circ Tk \rangle \\
		\alpha \circ \textnormal{exp}_S \circ \textnormal{ext}(\mathcal{X}) &= \langle SX, \mu^S_X \circ \alpha_{SX}, F\mu^S_X \circ \lambda^S_{SX} \circ SF\alpha_X \circ Sk \rangle.
	\end{align*}
	By the definition of distributive law homomorphisms, the morphism $\alpha_X$ commutes with the underlying $T$-algebra structures. Its commutativity with the underlying $F$-coalgebra structures follows from:
	\[
	\begin{tikzcd}
		TX \arrow{dd}[left]{\alpha_X} \arrow{r}{Tk} & TFTX \arrow{dd}{\alpha_{FTX}} \arrow{rr}{\lambda^T_{TX}} \arrow{dr}{TF\alpha_X} & & FT^2X \arrow{r}{F\mu^T_X} \arrow{d}{FT\alpha_X} & FTX \arrow{dd}{F\alpha_X} \\
		&& TFSX \arrow{r}{\lambda^T_{SX}} \arrow{d}{\alpha_{FSX}} & FTSX \arrow{d}{F\alpha_{SX}} &  \\
		SX \arrow{r}{Sk} & SFTX \arrow{r}{SF \alpha_X} & SFSX \arrow{r}{\lambda^S_{SX}} & FS^2X \arrow{r}{F\mu^S_X} & FSX
	\end{tikzcd}.
	\]
	For above we use the naturality of $\alpha$ and $\lambda^T$, and the definition of a distributivity law homomorphism. The naturality of $\alpha$ as natural transformation $\alpha: \textnormal{exp}_T \Rightarrow \alpha \circ \textnormal{exp}_S \circ \textnormal{ext}$ follows immediately from the naturality of $\alpha$ as natural transformation $\alpha: T \Rightarrow S$.
\end{proof}

\begin{oneshot}{lemma}{generatorclosed}
	Let $\alpha: \lambda^{S} \rightarrow \lambda^T$ be a distributive law homomorphism and $\langle X, h, k \rangle$ a $\lambda^S$-bialgebra.
	If $\langle Y, i, d \rangle$ is a generator for $\langle X, h \circ \alpha_X \rangle$, then $\langle Y,  Fd \circ k \circ i\rangle$ is $\alpha$-closed.
\end{oneshot}
\begin{proof}
	We write $\mathbb{X} := \langle X, h, k \rangle$, $\mathbb{G} := \langle Y, i, d \rangle$, and $\gen(\alpha(\mathbb{X}), \mathbb{G}) := \langle Y,Fd \circ k \circ i\rangle$.  The definitions imply
	\begin{align*}
		\textnormal{exp}_T(\gen(\alpha(\mathbb{X}), \mathbb{G})) &= \langle TY, \mu^T_Y, (Fd \circ k \circ i)^{\sharp} \rangle \\
		\alpha \circ \textnormal{exp}_S \circ \textnormal{ext}(\gen(\alpha(\mathbb{X}), \mathbb{G})) &= \langle SY, \mu^S_Y \circ \alpha_{SY}, (F(\alpha_Y \circ d) \circ k \circ i)^{\sharp} \rangle.
	\end{align*}
	Since $\mathbb{G}$ is a generator for $\langle X, h \circ \alpha_X \rangle$, \Cref{forgenerator-isharp-is-bialgebra-hom} implies that $(h \circ \alpha_X) \circ Ti: \textnormal{exp}_T(\gen(\alpha(\mathbb{X}), \mathbb{G})) \rightarrow \alpha(\mathbb{X})$ is a surjective $\lambda^T$-bialgebra homomorphism. 
	Naturality of $\alpha$ shows that $\overline{G} = \langle Y, i, \alpha_Y \circ d \rangle$ is a generator for $\langle X, h \rangle$. Thus \Cref{forgenerator-isharp-is-bialgebra-hom} implies that $h \circ Si: \textnormal{exp}_S(\gen(\mathbb{X}, \overline{\mathbb{G}})) \rightarrow \mathbb{X}$ is a surjective $\lambda^S$-bialgebra homomorphism. Applying $\alpha$ to the former shows that $h \circ Si: \alpha \circ \textnormal{exp}_S \circ \textnormal{ext}(\gen(\alpha(\mathbb{X}), \mathbb{G})) \rightarrow \alpha(\mathbb{X})$ is a surjective $\lambda^T$-bialgebra homomorphism.
	The statement follows from the uniqueness of epi-mono factorisations:
	\[
	\begin{tikzcd}
		\textnormal{exp}_T(\gen(\alpha(\mathbb{X}), \mathbb{G})) \arrow[twoheadrightarrow]{rrr}{\obs} \arrow{d}[left]{\alpha_Y} \arrow[twoheadrightarrow]{dr}{(h \circ \alpha_X) \circ Ti} & & & \img(\obs_{\textnormal{exp}_T(\gen(\alpha(\mathbb{X}), \mathbb{G}))}) \arrow[hookrightarrow]{dd}{} \arrow[dashed]{dl}{\simeq} \\
		\alpha \circ \textnormal{exp}_S \circ \textnormal{ext}(\gen(\alpha(\mathbb{X}), \mathbb{G})) \arrow[twoheadrightarrow]{d}[left]{\obs}\arrow[twoheadrightarrow]{r}{h \circ Si} & \alpha(\mathbb{X}) \arrow[twoheadrightarrow]{r}{\obs} &\img(\obs_{\alpha(\mathbb{X})}) \arrow[dashed]{dll}{\simeq} \arrow[hookrightarrow]{dr}{} & \\
		\img(\obs_{\alpha \circ \textnormal{exp}_S \circ \textnormal{ext}(\gen(\alpha(\mathbb{X}), \mathbb{G}))}) \arrow[hookrightarrow]{rrr}{} & & & \Omega
	\end{tikzcd}.
	\]
\end{proof}

\begin{oneshot}{theorem}{minimalitytheorem}
	Given a language $\mathcal{L} \in \Omega$ such that there exists a minimal pointed $\lambda^S$-bialgebra $\mathbb{M}$ accepting $\mathcal{L}$ and the underlying algebra of $\alpha(\mathbb{M})$ admits a size-minimal generator, there exists a pointed $\alpha$-closed $T$-succinct automaton $\mathcal{X}$ accepting $\mathcal{L}$ such that: \begin{itemize}
		\item for any pointed $\alpha$-closed $T$-succinct automaton $\mathcal{Y}$ accepting $\mathcal{L}$ we have that $\img(\obs_{\expa_T(\mathcal{X})}) \subseteq \img(\obs_{\expa_T(\mathcal{Y})})$;
		\item if $\img(\obs_{\expa_T(\mathcal{X})}) = \img(\obs_{\expa_T(\mathcal{Y})})$, then $\vert X \vert \leq \vert Y \vert$, where $X$ and $Y$ are the carriers of $\mathcal{X}$ and $\mathcal{Y}$, respectively.
	\end{itemize} 
 \end{oneshot}
\begin{proof}
We use a similar notation as in the proof of \Cref{generatorclosed}.
	Let $\mathbb{G} = \langle X, i, d \rangle$ be the size-minimal generator for the underlying algebra of $\alpha(\mathbb{M})$, which we assume to be $x$-pointed.
	We define a $d(x)$-pointed $T$-succinct automaton $\mathcal{X} := \gen(\alpha(\mathbb{M}), \mathbb{G})$ , which by the existence of the pointed $\lambda^T$-algebra homomorphism $i^\sharp \colon \expa(\mathcal{X}) \to \alpha(\mathbb{M})$ accepts the language accepted by $\alpha(\mathbb{M})$.
	Because $\alpha$ only modifies the algebraic part of a bialgebra and the final bialgebra homomorphism is induced by the underlying final coalgebra homomorphism, the language accepted by $\alpha(\mathbb{M})$ is the language $\mathcal{L}$ accepted by $\mathbb{M}$.
	From \Cref{generatorclosed} it follows that $\mathcal{X}$ is $\alpha$-closed.

	Consider any pointed $\alpha$-closed $T$-succinct automaton $\mathcal{Y}$ accepting $\mathcal{L}$.
	Then by minimality of $\mathbb{M}$ there exists an injective $\lambda^S$-bialgebra homomorphism $j \colon \mathbb{M} \to \img(\obs_{\expa_S(\ext(\mathcal{Y}))})$, which is also a $\lambda^T$-bialgebra homomorphism $j \colon \alpha(\mathbb{M}) \to \img(\obs_{\alpha(\expa_S(\ext(\mathcal{Y}))}))$, because the functor $\alpha$ is the identity on morphisms, and only modifies the algebraic part of a bialgebra.
	Since $\mathcal{Y}$ is $\alpha$-closed, the codomain of $j$ is isomorphic to $\img(\obs_{\expa_T(\mathcal{Y})})$.
	Moreover, $\alpha(\mathbb{M})$ is isomorphic to $\img(\obs_{\expa_T(\mathcal{X})})$ by the existence of a surjective homomorphism $\expa_T(\mathcal{X}) \to \alpha(\mathbb{M})$.
	The resulting $\lambda^T$-bialgebra homomorphism $\img(\obs_{\expa_T(\mathcal{X})}) \to \img(\obs_{\expa_T(\mathcal{Y})})$ commutes with observability maps and thus must be an inclusion map, so $\img(\obs_{\expa_T(\mathcal{X})}) \subseteq \img(\obs_{\expa_T(\mathcal{Y})})$.

	Suppose $\img(\obs_{\expa_T(\mathcal{X})}) = \img(\obs_{\expa_T(\mathcal{Y})})$, which implies that $j$ is an isomorphism.
	Then there exists a surjective $\lambda^T$-bialgebra homomorphism $\expa_T(\mathcal{Y}) \to \alpha(\mathbb{M})$, which means that $Y$ forms the carrier of a generator for the underlying algebra of $\alpha(\mathbb{M})$.
	By the size-minimality of $\mathbb{G}$ we thus obtain $\vert X \vert \leq \vert Y \vert$.
\end{proof}

\begin{lemma}
\label[lemma]{xorbasisstateminimal}
	Any basis for a $\mathcal{R}$-algebra is a size-minimal generator.
\end{lemma}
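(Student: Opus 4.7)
The plan is to reduce the statement to the standard fact that surjective linear maps between vector spaces never decrease dimension. First I would unfold the algebraic content of the two notions. For a basis $\langle Y, i, d\rangle$ of an $\mathcal{R}$-algebra $V = \langle X, h\rangle$, the morphism $h \circ \mathcal{R}i \colon \mathcal{R}Y \to X$ is always an $\mathcal{R}$-algebra homomorphism (this is part of the general theory of algebras over monads), and the basis condition $(h\circ \mathcal{R}i)\circ d = \id_X$ together with $d\circ(h\circ \mathcal{R}i) = \id_{\mathcal{R}Y}$ makes it a set-theoretic bijection with two-sided inverse $d$. A bijective $\mathcal{R}$-algebra homomorphism is an $\mathcal{R}$-algebra isomorphism, so $V$ is isomorphic, as a $\mathbb{Z}_2$-vector space, to the free vector space $\mathcal{R}Y$, which has $\eta^{\mathcal{R}}_Y(Y)$ as a linear basis; in particular $\dim V = \vert Y \vert$.

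Next I would consider an arbitrary competing generator $\langle Y', i', d'\rangle$ of $V$. By the generator axiom $(h\circ \mathcal{R}i')\circ d' = \id_X$, the $\mathcal{R}$-algebra homomorphism $h\circ \mathcal{R}i' \colon \mathcal{R}Y' \to X$ is set-theoretically surjective. Post-composing with the isomorphism $d \colon X \xrightarrow{\sim} \mathcal{R}Y$ from the previous paragraph yields a surjective $\mathbb{Z}_2$-linear map
\[
d \circ h \circ \mathcal{R}i' \colon \mathcal{R}Y' \twoheadrightarrow \mathcal{R}Y.
\]

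The conclusion then follows from a standard cardinal-invariance argument valid for vector spaces over any field and in arbitrary dimension: given a surjective linear map $f \colon U \twoheadrightarrow W$, pick a linear basis $\{w_j\}_{j\in J}$ of $W$ and choose pre-images $v_j \in f^{-1}(w_j)$ using the axiom of choice; the family $\{v_j\}_{j\in J}$ is linearly independent in $U$ (applying $f$ to any vanishing combination forces the coefficients to vanish), so $\dim U \geq \vert J \vert = \dim W$. Applying this to our surjection gives $\vert Y'\vert = \dim \mathcal{R}Y' \geq \dim \mathcal{R}Y = \vert Y \vert$, establishing size-minimality of the basis among all generators.

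The main obstacle is conceptually minor rather than technical: the only non-trivial point to verify carefully is that $d$, being the set-theoretic two-sided inverse of an $\mathcal{R}$-algebra homomorphism, is itself automatically an $\mathcal{R}$-algebra homomorphism (which is where we use that we are in a concrete category). Once this is in hand, the argument is an invocation of the uniform cardinality result for dimensions, and no separate case analysis is required for finite versus infinite $Y$.
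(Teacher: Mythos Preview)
Your proof is correct and follows the same skeleton as the paper's: both compose the bijection $d\colon X\to\mathcal{R}Y$ coming from the basis with the surjection $h\circ\mathcal{R}i'\colon\mathcal{R}Y'\to X$ coming from the competing generator to obtain a surjection $\mathcal{R}Y'\twoheadrightarrow\mathcal{R}Y$. The only difference is in the last step: you invoke the dimension inequality for surjective linear maps (and therefore need $d$ to be linear), whereas the paper argues purely set-theoretically, using that a surjection gives $\lvert\mathcal{R}Y\rvert\le\lvert\mathcal{R}Y'\rvert$ and then that $\mathcal{R}(-)=2^{(-)}$ forces $\lvert Y\rvert\le\lvert Y'\rvert$; this sidesteps your ``main obstacle'' entirely but tacitly relies on finiteness, while your dimension argument is field- and cardinality-agnostic.
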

\begin{proof}
	Let $\langle Y, i, d \rangle$ be a basis for a $\mathcal{R}$-algebra $\langle X, h \rangle$, then $d: X \rightarrow \mathcal{R}Y$ is a bijection with inverse $h \circ \mathcal{R}i$. Let $\langle Y', i', d' \rangle$ be any other generator for $\langle X, h \rangle$. Then $h \circ \mathcal{R}i': \mathcal{R}Y' \rightarrow X$ is a surjection, which shows that $d \circ h \circ \mathcal{R}i': \mathcal{R}Y' \rightarrow \mathcal{R}Y$ is a surjection. In consequence, $\vert \mathcal{R}Y \vert \leq \vert \mathcal{R} Y' \vert$, which implies $\vert Y \vert \leq \vert Y' \vert$, since $\mathcal{R}- = 2^{-}$.
\end{proof}

\begin{lemma}
\label[lemma]{basisstateminimalpowerset}
	Any basis for a $\mathcal{P}$-algebra is a size-minimal generator.
\end{lemma}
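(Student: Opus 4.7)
The plan is to mirror the argument used for \Cref{xorbasisstateminimal}, since the two monads $\mathcal{P}$ and $\mathcal{R}$ agree on objects (both have underlying set functor $X \mapsto 2^X$) and the argument is purely about cardinalities of carriers. The key observation is that the notion of basis gives us a \emph{bijection}, whereas the notion of generator gives us only a surjection; chasing these two facts through the functor $\mathcal{P}$ yields the desired size comparison.

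Concretely, I would start from a basis $\langle Y, i, d \rangle$ for a $\mathcal{P}$-algebra $\langle X, h \rangle$. By definition of a basis, the map $d \colon X \to \mathcal{P}Y$ is a two-sided inverse of $h \circ \mathcal{P}i \colon \mathcal{P}Y \to X$, so in particular $d$ is a bijection. Now let $\langle Y', i', d' \rangle$ be any other generator for $\langle X, h \rangle$. Then $d'$ is a right inverse of $h \circ \mathcal{P}i' \colon \mathcal{P}Y' \to X$, so $h \circ \mathcal{P}i'$ is a surjection. Composing with the bijection $d$ produces a surjection $d \circ h \circ \mathcal{P}i' \colon \mathcal{P}Y' \twoheadrightarrow \mathcal{P}Y$, and therefore $\vert \mathcal{P}Y \vert \leq \vert \mathcal{P}Y' \vert$.

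The final step is to conclude $\vert Y \vert \leq \vert Y' \vert$ from $\vert \mathcal{P}Y \vert \leq \vert \mathcal{P}Y' \vert$, using that the underlying set of $\mathcal{P}Z$ is $2^Z$ for every $Z$, exactly as in \Cref{xorbasisstateminimal}. I don't anticipate any real obstacle: the argument goes through verbatim, and the only mild subtlety is whether the implication $\vert 2^Y \vert \leq \vert 2^{Y'} \vert \Rightarrow \vert Y \vert \leq \vert Y' \vert$ needs justification, but this is the same step already taken in the analogous lemma for $\mathcal{R}$, so it is consistent with the paper's conventions (and in any case is immediate for the finite carriers arising in the examples of interest).
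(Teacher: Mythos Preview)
Your proposal is correct and follows exactly the same approach as the paper, which simply states that the proof is analogous to that of \Cref{xorbasisstateminimal}. Your write-up even spells out the analogous argument in full, matching the paper's proof of \Cref{xorbasisstateminimal} step for step with $\mathcal{P}$ in place of $\mathcal{R}$.
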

\begin{proof}
	Analogous to the proof of \Cref{xorbasisstateminimal}. 
\end{proof}

\begin{corollary}
\label[corollary]{atomsstateminimalgenerator}
	Let $\alpha_X: \mathcal{P}X \rightarrow \mathcal{H}X$ satisfy $\alpha_X(\varphi)(\psi) = \bigvee_{x \in X} \varphi(x) \wedge \psi(x)$. If $B = \langle X, h \rangle$ is a $\mathcal{H}$-algebra, then $\langle \textnormal{At}(B), i, d \rangle$ with $i(a) = a$ and $d(x) = \lbrace a \in \textnormal{At}(B) \mid a \leq x \rbrace$ is a size-minimal generator for $\langle X, h \circ \alpha_X \rangle$. 
	\end{corollary}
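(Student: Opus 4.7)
The plan is to combine two results that have already been established in the excerpt. Specifically, Lemma~\ref{basisshiftecabapowerset} shows that under the given definitions of $i$ and $d$, the tuple $\langle \textnormal{At}(B), i, d \rangle$ is in fact a \emph{basis} for the $\mathcal{P}$-algebra $\langle X, h \circ \alpha_X \rangle$, not merely a generator. This is the heavy lifting, and it rests on the Stone-type duality between complete atomic Boolean algebras and sets.

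Given this, the corollary reduces to observing that every basis for a $\mathcal{P}$-algebra is a size-minimal generator, which is exactly the content of Lemma~\ref{basisstateminimalpowerset}. Concretely, I would invoke Lemma~\ref{basisshiftecabapowerset} to produce the basis and then apply Lemma~\ref{basisstateminimalpowerset} to conclude the size-minimality claim. No additional calculations are required.

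The only conceptual point worth spelling out is that the notion of ``size-minimal generator'' for a $\mathcal{P}$-algebra is comparing cardinalities of carrier sets $Y$ over all generators $\langle Y, i', d' \rangle$, and since Lemma~\ref{basisstateminimalpowerset} gives $|\textnormal{At}(B)| \leq |Y'|$ for every such competing generator $\langle Y', i', d' \rangle$, the tuple $\langle \textnormal{At}(B), i, d \rangle$ indeed realises the minimum. There is no main obstacle, since the heavy work has been absorbed into the preceding lemmas; the corollary is essentially a two-line deduction.
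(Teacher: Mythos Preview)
Your proposal is correct and matches the paper's own proof essentially verbatim: invoke Lemma~\ref{basisshiftecabapowerset} to obtain that $\langle \textnormal{At}(B), i, d \rangle$ is a basis for $\langle X, h \circ \alpha_X \rangle$, then apply Lemma~\ref{basisstateminimalpowerset} to conclude size-minimality.
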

\begin{proof}
	By \Cref{basisshiftecabapowerset} $\langle \textnormal{At}(B), i, d \rangle$ is a basis for $\langle X, h \circ \alpha_X \rangle$, which by \Cref{basisstateminimalpowerset} implies size-minimality.
\end{proof}

\begin{lemma}
\label[lemma]{joinirreducstateminimal}
	For any finite $\mathcal{P}$-algebra $L = \langle X, h \rangle$ the join-irreducibles $\langle J(L), i, d \rangle$ with $i(y) = y$ and $d(x) = \lbrace y \in J(L) \mid y \leq x \rbrace$ constitute a size-minimal generator.
	 \end{lemma}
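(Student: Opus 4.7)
The plan is to verify the claim in two stages: first that $\langle J(L), i, d \rangle$ really is a generator for $L$, and then that any other generator has at least as many elements as $J(L)$.

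For the first stage I would check the defining equation $x = \bigvee^h \lbrace y \in J(L) \mid y \leq x \rbrace$ for every $x \in X$. Since $X$ is finite, the induced partial order on $L$ satisfies the descending chain condition, so I proceed by well-founded induction on $x$. If $x = 0$ or $x \in J(L)$ the equation is immediate (in the second case, $x$ itself belongs to $d(x)$, and every element of $d(x)$ is bounded above by $x$). Otherwise there exist $a,b \in L$ with $a,b < x$ and $x = a \vee b$, and applying the induction hypothesis to $a$ and $b$ expresses each as a join of join-irreducibles below itself; taking the join gives $x$ as a join of join-irreducibles $\leq x$, as required.

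For size-minimality, let $\langle Y', i', d' \rangle$ be any generator for $L$ and fix $j \in J(L)$. The generator property yields $j = \bigvee^h_{y \in d'(j)} i'(y)$, a join over a finite subset of $Y'$. Since $j$ is non-zero, $d'(j)$ cannot be empty; and since $j$ is join-irreducible in the binary sense of the footnote, a short induction on $\vert d'(j) \vert$ shows that $j = i'(y_j)$ for some $y_j \in d'(j)$. The resulting assignment $f : J(L) \to Y'$, $j \mapsto y_j$, is injective, because $i' \circ f = \textnormal{id}_{J(L)}$; hence $\vert J(L) \vert \leq \vert Y' \vert$.

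The main subtlety lies in the second stage, specifically in bootstrapping join-irreducibility from binary joins (as phrased in the footnote) to the finite joins arising from the decomposition $d'(j)$; this is precisely where non-zero-ness of $j$ is used to exclude the degenerate empty-join case.
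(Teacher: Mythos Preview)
Your proof is correct and follows essentially the same approach as the paper: both parts rely on the descending chain condition for the generator claim and on the injection $j \mapsto y_j$ (with $i'(y_j) = j$) for size-minimality. The only cosmetic difference is that you phrase the first part as a direct well-founded induction, whereas the paper argues by contradiction, building an infinite descending chain from a hypothetical $x \neq \bigvee d(x)$; these are of course equivalent packagings of the same DCC argument, and your explicit remark on bootstrapping binary join-irreducibility to finite joins is a point the paper leaves implicit.
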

\begin{proof}
	Since $L$ is finite, it satisfies the descending chain condition (DCC), which in turn can be used to show that the join-irreducibles constitute a generator as follows.

Assume there exists some $x \in X$ with $x \not= i^{\sharp}(d(x))$. We build an infinite sequence $(a_n)$ with $a_i > a_{i+1}$ and $a_i \not = i^{\sharp}(d(a_i))$, which contradicts the DCC. For the base case we define $a_0 := x$. For any $x \in X$, the property $x \in J(L)$ immediately implies $x = i^{\sharp}(d(x))$. Thus we can assume $a_i \not \in J(L)$.
In consequence we have $a_i = y \vee z$ for $y,z \not= a_i$, i.e $a_i > y$ and $a_i > z$. 
Assume $y = i^{\sharp}(d(y))$ and $z = i^{\sharp}(d(z))$. 
Then 
\[ i^{\sharp}(d(a_i)) \leq a_i = y \vee z = i^{\sharp}(d(y)) \vee i^{\sharp}(d(z)) = i^{\sharp}(d(y) \vee d(z)) \leq i^{\sharp}(d(a_i)). \]
It thus follows $a_i = i^{\sharp}(d(a_i))$, which is a contradiction. Hence, w.l.o.g. assume $y \not = i^{\sharp}(d(y))$ and define $a_{i+1} := y$.
	
	Let $\langle Y, i', d' \rangle$ be an arbitrary generator for $L$. For any $a \in J(L)$ we have $a = \bigvee^h_{y \in d'(a)} i'(y)$. By the definition of join-irreducibles there exists at least one $y_a \in d'(a)$ such that $i'(y_a) = a$. One can thus define a function $f: J(L) \rightarrow Y$ with $f(a) = y_a$. It is not hard to see that $f$ is injective, which implies $\vert J(L) \vert \leq \vert Y \vert$.
\end{proof}

\begin{lemma}
\label[lemma]{obsgenerated}
	Let $\mathbb{X} = \langle X, h, k \rangle$ be an observable $\lambda^S$-bialgebra and $\mathbb{G}$ a generator for $\langle X, h \circ \alpha_X \rangle$, then $\img(\obs_{\textnormal{exp}_T(\textnormal{gen}(\alpha(\mathbb{X}), \mathbb{G}))}) \simeq X$.
\end{lemma}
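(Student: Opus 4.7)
The plan is to exploit the surjection $(h \circ \alpha_X) \circ Ti \colon \textnormal{exp}_T(\textnormal{gen}(\alpha(\mathbb{X}), \mathbb{G})) \twoheadrightarrow \alpha(\mathbb{X})$ from Corollary~\ref{generatorbialgebrahom} together with observability of $\mathbb{X}$, and to argue that the image of the unique final bialgebra homomorphism factors through $X$.

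First I would observe that since $\alpha$ leaves the underlying coalgebra untouched and the final bialgebra homomorphism is determined by the underlying final coalgebra homomorphism (by the lemma of \cite{jacobs2012trace} recalled in Section~\ref{preliminaries}), we have $\obs_{\alpha(\mathbb{X})} = \obs_{\mathbb{X}}$. In particular, observability is preserved under $\alpha$, so $\obs_{\alpha(\mathbb{X})} \colon X \rightarrow \Omega$ is injective.

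Next I would invoke Corollary~\ref{generatorbialgebrahom}: because $\mathbb{G} = \langle Y, i, d \rangle$ is a generator for $\langle X, h \circ \alpha_X \rangle$, the map $(h \circ \alpha_X) \circ Ti \colon \textnormal{exp}_T(\textnormal{gen}(\alpha(\mathbb{X}), \mathbb{G})) \rightarrow \alpha(\mathbb{X})$ is a $\lambda^T$-bialgebra homomorphism; it is moreover surjective since $d$ is a right inverse to it by the generator axiom. Uniqueness of the final bialgebra homomorphism then gives the factorisation $\obs_{\textnormal{exp}_T(\textnormal{gen}(\alpha(\mathbb{X}), \mathbb{G}))} = \obs_{\alpha(\mathbb{X})} \circ (h \circ \alpha_X) \circ Ti$.

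Finally, because the first factor $(h \circ \alpha_X) \circ Ti$ is surjective, the image of the composite agrees with the image of $\obs_{\alpha(\mathbb{X})}$; since the latter is injective we conclude $\img(\obs_{\textnormal{exp}_T(\textnormal{gen}(\alpha(\mathbb{X}), \mathbb{G}))}) = \img(\obs_{\alpha(\mathbb{X})}) \simeq X$, as desired. The only subtle point is the transfer of observability across $\alpha$, which is the main ingredient I want to make explicit; everything else is a direct consequence of results already established in the paper.
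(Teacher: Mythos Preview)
Your proposal is correct and follows essentially the same route as the paper: the paper also uses the surjective $\lambda^T$-bialgebra homomorphism $(h \circ \alpha_X) \circ Ti$ onto $\alpha(\mathbb{X})$, the identity $\obs_{\alpha(\mathbb{X})} = \obs_{\mathbb{X}}$ (hence injectivity from observability), and then concludes via uniqueness of epi--mono factorisations---which is precisely your ``image of a composite with surjective first factor'' argument, phrased diagrammatically. The only cosmetic difference is that the paper cites \Cref{forgenerator-isharp-is-bialgebra-hom} directly rather than its corollary.
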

\begin{proof}
By \Cref{forgenerator-isharp-is-bialgebra-hom} there exists a surjective $\lambda^T$-bialgebra homomorphism $\textnormal{exp}_T(\textnormal{gen}(\alpha(\mathbb{X}), \mathbb{G})) \rightarrow \alpha(\mathbb{X})$. Since the final $\lambda^T$-bialgebra homomorphism is induced by the underlying final $F$-coalgebra homomorphism and  $\alpha(\mathbb{X}) = \langle X, h \circ \alpha_X, k \rangle$, it holds $\obs_{\alpha(\mathbb{X})} = \obs_{\mathbb{X}}$. The statement follows from the uniqueness of epi-mono factorizations and the definition of $\alpha(\mathbb{X})$:
	\[
	\begin{tikzcd}
		\textnormal{exp}_T(\textnormal{gen}(\alpha(\mathbb{X}), \mathbb{G})) \arrow[twoheadrightarrow]{r}{} \arrow[twoheadrightarrow]{d}{}  & \alpha(\mathbb{X}) \arrow[hookrightarrow]{d}{\obs_{\alpha(\mathbb{X})} = \obs_{\mathbb{X}}} \arrow[dashed]{dl}{\simeq} \\
		\img(\obs_{\textnormal{exp}_T(\textnormal{gen}(\alpha(\mathbb{X}), \mathbb{G}))}) \arrow[hookrightarrow]{r}{} & \Omega
	\end{tikzcd}.
	\]
\end{proof}

\begin{lemma}
\label[lemma]{obsdagext}
Let $\mathcal{X}$ be a $T$-succinct automaton, then $\obs^{\dag}_{\mathcal{X}} = \obs^{\dag}_{\ext(\mathcal{X})}$.
\end{lemma}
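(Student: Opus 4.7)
The plan is to chain together three facts: (i) $\alpha$ gives rise to a $\lambda^T$-bialgebra homomorphism out of $\exp_T(\mathcal{X})$ by \Cref{alphaunderlies}; (ii) applying the functor $\alpha \colon \textnormal{Bialg}(\lambda^S) \to \textnormal{Bialg}(\lambda^T)$ preserves observability, in the sense that $\obs_{\alpha(\mathbb{Y})} = \obs_{\mathbb{Y}}$ for every $\lambda^S$-bialgebra $\mathbb{Y}$; and (iii) the distributive law homomorphism axiom $\alpha \circ \eta^T = \eta^S$.

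Concretely, first I would apply \Cref{alphaunderlies} to get that $\alpha_X \colon \exp_T(\mathcal{X}) \to \alpha(\exp_S(\ext(\mathcal{X})))$ is a $\lambda^T$-bialgebra homomorphism. Since the final $\lambda^T$-bialgebra homomorphism is unique, precomposing it with $\alpha_X$ yields the equality
\[
\obs_{\exp_T(\mathcal{X})} = \obs_{\alpha(\exp_S(\ext(\mathcal{X})))} \circ \alpha_X.
\]

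Next, because the functor $\alpha$ of \Cref{inducedbialgebra} modifies only the algebraic part of a bialgebra while leaving the underlying $F$-coalgebra and the morphisms untouched, and because the final $\lambda^T$-bialgebra is carried by the final $F$-coalgebra (cf.\ \Cref{alphapreservesfinal} in the appendix, or the lemma just after \Cref{determinisationexample}), the final $\lambda^T$-bialgebra homomorphism out of $\alpha(\exp_S(\ext(\mathcal{X})))$ is given by the final $F$-coalgebra homomorphism out of the underlying coalgebra of $\exp_S(\ext(\mathcal{X}))$, which in turn is $\obs_{\exp_S(\ext(\mathcal{X}))}$. Hence $\obs_{\alpha(\exp_S(\ext(\mathcal{X})))} = \obs_{\exp_S(\ext(\mathcal{X}))}$.

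Combining the two displayed identities and using $\alpha_X \circ \eta^T_X = \eta^S_X$, I compute
\[
\obs^{\dag}_{\mathcal{X}} = \obs_{\exp_T(\mathcal{X})} \circ \eta^T_X = \obs_{\exp_S(\ext(\mathcal{X}))} \circ \alpha_X \circ \eta^T_X = \obs_{\exp_S(\ext(\mathcal{X}))} \circ \eta^S_X = \obs^{\dag}_{\ext(\mathcal{X})},
\]
as required. There is no real obstacle here; the only subtlety is making the step $\obs_{\alpha(\mathbb{Y})} = \obs_{\mathbb{Y}}$ fully explicit, which is immediate from finality once one notes that $\alpha$ acts as the identity on the underlying coalgebra and on morphisms.
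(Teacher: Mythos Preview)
Your proposal is correct and follows essentially the same approach as the paper's proof: both use \Cref{alphaunderlies} to obtain $\obs_{\exp_T(\mathcal{X})} = \obs_{\alpha(\exp_S(\ext(\mathcal{X})))} \circ \alpha_X$ by uniqueness, then identify $\obs_{\alpha(\exp_S(\ext(\mathcal{X})))}$ with $\obs_{\exp_S(\ext(\mathcal{X}))}$ via the fact that final bialgebra homomorphisms are induced by final coalgebra homomorphisms, and finally compose with the units using $\alpha_X \circ \eta^T_X = \eta^S_X$.
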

\begin{proof}
	Since by \Cref{alphaunderlies} the morphism $\alpha_X: \textnormal{exp}_T(\mathcal{X}) \rightarrow \alpha(\textnormal{exp}_S(\textnormal{ext}(\mathcal{X})))$ is a $\lambda^T$-bialgebra homomorphism, we have by uniqueness $\obs_{\alpha(\exp_S(\ext(\mathcal{X})))} \circ \alpha_X = \obs_{\exp_T(\mathcal{X})}$. Since any final bialgebra homomorphism is induced by the underlying final $F$-coalgebra homomorphism it holds $\obs_{\alpha(\exp_S(\ext(\mathcal{X})))} = \obs_{\exp(\ext_S(\mathcal{X}))}$, which thus implies 
	\begin{equation}
	\label{obsdagproofeq}
		\obs_{\exp_S(\ext(\mathcal{X}))} \circ \alpha_X = \obs_{\exp_T(\mathcal{X})}.
	\end{equation}
	 The statement follows from
	\begin{align*}
		\obs^{\dag}_{\mathcal{X}} &= \obs_{\exp_T(\mathcal{X})} \circ \eta^T_X & \textnormal{(Def. } \obs^{\dag}_{\mathcal{X}}) \\
		&= \obs_{\exp_S(\ext(\mathcal{X}))} \circ \alpha_X \circ \eta^T_X & \eqref{obsdagproofeq} \\
		&= \obs_{\exp_S(\ext(\mathcal{X}))} \circ \eta^S_X & \textnormal{(Def. distr. law hom.)}\\
		&= \obs^{\dag}_{\ext(\mathcal{X})} & \textnormal{(Def. } \obs^{\dag}_{\ext(\mathcal{X})}).
	\end{align*}
\end{proof}

\begin{oneshot}{lemma}{imgobssuccinctsem}
Let $\alpha: \lambda^{S} \rightarrow \lambda^T$ be a distributive law homomorphism. For any $T$-succinct automaton $\mathcal{X}$ it holds that $\img(\obs_{\exp_T(\mathcal{X})}) = \img(h \circ \alpha_{\Omega} \circ T(\obs^{\dag}_{\mathcal{X}}))$ and 
		$\img(\obs_{\alpha(\exp_S(\ext(\mathcal{X})))}) = \img(h \circ S(\obs^{\dag}_{\mathcal{X}}))$, where $\langle \Omega, h, k \rangle$ is the final $\lambda^S$-bialgebra.
	\end{oneshot}
\begin{proof}
	By \Cref{alphapreservesfinal} $\langle \Omega, h \circ \alpha_{\Omega}, k \rangle$ is the final $\lambda^T$-bialgebra. The first statement follows from
	\begin{align*}
		\obs_{\exp_T(\mathcal{X})} 
		&= \obs_{\exp_T(\mathcal{X})} \circ \mu^T_X  \circ T(\eta^T_X) & \textnormal{(Def. monad } T) \\
		&= h \circ \alpha_{\Omega} \circ T(\obs_{\exp_T(\mathcal{X})}) \circ T(\eta^T_X) & \textnormal{(Algebra hom. } \obs_{\exp_T(\mathcal{X})}) \\
		&= h \circ \alpha_{\Omega} \circ T(\obs^{\dag}_{\mathcal{X}}) & \textnormal{(Def. } \obs^{\dag}_{\mathcal{X}}).
	\end{align*}
	Similarly one shows that $	\obs_{\exp_S(\ext(\mathcal{X}))} =  h \circ S(\obs^{\dag}_{\ext(\mathcal{X})})$.
 Since any final bialgebra homomorphism is induced by the underlying final $F$-coalgebra homomorphism, it thus follows
	\begin{align*}
		\img(\obs_{\alpha(\exp_S(\ext(\mathcal{X})))}) &= \img(\obs_{\exp_S(\ext(\mathcal{X}))}) & (\obs_{\alpha(\exp_S(\ext(\mathcal{X})))} = \obs_{\exp_S(\ext(\mathcal{X}))}) \\
		&= \img(h \circ S(\obs^{\dag}_{\ext(\mathcal{X})})) & (\obs_{\exp_S(\ext(\mathcal{X}))} =  h \circ S(\obs^{\dag}_{\ext(\mathcal{X})})) \\
		&= \img(h \circ S(\obs^{\dag}_{\mathcal{X}})) & \textnormal{(\Cref{obsdagext})}.
	\end{align*}
\end{proof}

\begin{corollary}
\label[corollary]{imgcabanfa}
	Let $\alpha: \mathcal{P}X \rightarrow \mathcal{H}X$ satisfy $\alpha_X(\varphi)(\psi) = \bigvee_{x \in X} \varphi(x) \wedge \psi(x)$. For any unpointed non-deterministic automaton $\mathcal{X}$ it holds:
	\begin{itemize}
		\item $\img(\obs_{\exp_{\mathcal{P}}(\mathcal{X})}) = \overline{\textnormal{im}(\obs^{\dag}_{\mathcal{X}})}^{\textnormal{CSL}}$;
		\item $\img(\obs_{\alpha(\exp_{\mathcal{H}}(\ext(\mathcal{X})))}) = \overline{\textnormal{im}(\obs^{\dag}_{\mathcal{X}})}^{\textnormal{CABA}}$.
	\end{itemize}
\end{corollary}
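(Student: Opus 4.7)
The plan is to derive both identities as direct instances of \Cref{imgobssuccinctsem}, combined with the footnote definition of the algebraic closure $\overline{Y}$ as $\img(h \circ Tf)$ whenever $Y = \img(f)$ with $f$ a morphism into a $T$-algebra $\langle X,h \rangle$. The key observation is that \Cref{imgobssuccinctsem} already rewrites the relevant image sets in the form $\img(h \circ U(\obs^{\dag}_{\mathcal{X}}))$ for $U \in \lbrace S, T \rbrace$, and such images are by the footnote exactly the ambient algebraic closure of $\img(\obs^{\dag}_{\mathcal{X}})$.

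For the first identity, I would instantiate \Cref{imgobssuccinctsem} with $S = T = \mathcal{P}$ and $\alpha = \textnormal{id}$, which trivially defines a distributive law homomorphism $\lambda^{\mathcal{P}} \to \lambda^{\mathcal{P}}$ so that $\alpha_{\Omega}$ disappears from the first equation. The final $\lambda^{\mathcal{P}}$-bialgebra carries $\Omega = \mathcal{P}(A^*)$ equipped with the union-of-languages algebra $h \colon \mathcal{P}\Omega \to \Omega$, which is precisely the free CSL structure on $\Omega$, as already noted for the powerset construction following \Cref{expfunctor}. \Cref{imgobssuccinctsem} then gives $\img(\obs_{\exp_{\mathcal{P}}(\mathcal{X})}) = \img(h \circ \mathcal{P}(\obs^{\dag}_{\mathcal{X}}))$, and invoking the footnote definition of closure with representing morphism $f = \obs^{\dag}_{\mathcal{X}} \colon X \to \Omega$ identifies the right-hand side with $\overline{\img(\obs^{\dag}_{\mathcal{X}})}^{\textnormal{CSL}}$.

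For the second identity, I would apply the second equation of \Cref{imgobssuccinctsem} with $S = \mathcal{H}$, $T = \mathcal{P}$, and $\alpha \colon \lambda^{\mathcal{H}} \to \lambda^{\mathcal{P}}$ the distributive law homomorphism of \Cref{alphapowersetneighbourhooddistrlaw}. Now the final $\lambda^{\mathcal{H}}$-bialgebra endows $\Omega$ with the free CABA structure $h \colon \mathcal{H}\Omega \to \Omega$, and the lemma yields $\img(\obs_{\alpha(\exp_{\mathcal{H}}(\ext(\mathcal{X})))}) = \img(h \circ \mathcal{H}(\obs^{\dag}_{\mathcal{X}}))$, which by the footnote is exactly $\overline{\img(\obs^{\dag}_{\mathcal{X}})}^{\textnormal{CABA}}$.

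No real technical obstacle is expected, as the argument is a short bookkeeping exercise once \Cref{imgobssuccinctsem} is available. The only point that requires care is verifying that the algebra parts of the final $\lambda^{\mathcal{P}}$- and $\lambda^{\mathcal{H}}$-bialgebras on $\Omega$ really coincide with the canonical free CSL and CABA structures used to define the closures $\overline{(-)}^{\textnormal{CSL}}$ and $\overline{(-)}^{\textnormal{CABA}}$; this follows from the standard fact, recalled before \Cref{expfunctor}, that the algebra on the final coalgebra is uniquely determined by compatibility with the distributive law.
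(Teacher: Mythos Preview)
Your proposal is correct and reaches the same conclusion as the paper, but the route differs in two places worth noting.

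For the first identity, the paper instantiates \Cref{imgobssuccinctsem} with the \emph{same} $\alpha: \lambda^{\mathcal{H}} \to \lambda^{\mathcal{P}}$ appearing in the statement, obtaining $\img(2^{\eta^{\mathcal{H}}_{A^*}} \circ \alpha_{2^{A^*}} \circ \mathcal{P}(\obs^{\dag}_{\mathcal{X}}))$, and then invokes the auxiliary identity $2^{\eta^{\mathcal{H}}_X} \circ \alpha_{\mathcal{P}X} = \mu^{\mathcal{P}}_X$ (established in the proof of \Cref{basisshiftecabapowerset}) to collapse this to $\img(\mu^{\mathcal{P}}_{A^*} \circ \mathcal{P}(\obs^{\dag}_{\mathcal{X}}))$. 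Your choice $S = T = \mathcal{P}$, $\alpha = \textnormal{id}$ sidesteps that auxiliary identity entirely and lands on the same expression directly. For the second identity both you and the paper instantiate identically; the paper then unwinds $2^{\eta^{\mathcal{H}}_{A^*}} \circ \mathcal{H}(\obs^{\dag}_{\mathcal{X}})$ explicitly into the full-DNF description $\bigcup_{\varphi \in \Phi} \bigcap_{x \in \varphi} \obs^{\dag}_{\mathcal{X}}(x) \cap \bigcap_{x \notin \varphi} \obs^{\dag}_{\mathcal{X}}(x)^c$, whereas you appeal to the closure footnote abstractly. Your route is shorter and more conceptual; the paper's explicit computation has the compensating benefit of making visible what the CSL and CABA closures actually look like as sets of languages.

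One terminological slip: the $\mathcal{H}$-algebra $2^{\eta^{\mathcal{H}}_{A^*}}$ on $\Omega = 2^{A^*}$ is \emph{not} a free $\mathcal{H}$-algebra (a free CABA on $Y$ has carrier $2^{2^{Y}}$). It is rather the image of $A^*$ under the comparison functor $K: \textnormal{Set}^{\textnormal{op}} \to \textnormal{Alg}(\mathcal{H})$, i.e.\ the ordinary powerset Boolean algebra. This does not affect your argument, since all you need is that this is the ambient $\mathcal{H}$-algebra relative to which $\overline{(-)}^{\textnormal{CABA}}$ is taken, and the footnote then gives the identification immediately. Your closing remark about verifying the algebra part of the final bialgebra is well placed: the paper simply asserts that the final $\lambda^{\mathcal{H}}$-bialgebra is $\langle 2^{A^*}, 2^{\eta^{\mathcal{H}}_{A^*}}, \langle \varepsilon, \delta \rangle \rangle$, so you are on equal footing there.
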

\begin{proof}
	The final $\lambda^{\mathcal{H}}$-bialgebra is given by $\langle 2^{A^*}, 2^{\eta^{\mathcal{H}}_{A^*}}, \langle \varepsilon, \delta \rangle \rangle$. In the proof of \Cref{basisshiftecabapowerset} it was shown that $2^{\eta^{\mathcal{H}}_X} \circ \alpha_{\mathcal{P}X} = \mu^{\mathcal{P}}_X$. Thus it follows
	\begin{align*}
		\img(\obs_{\exp_{\mathcal{P}}(\mathcal{X})}) &= \img(2^{\eta^{\mathcal{H}}_{A^*}} \circ \alpha_{2^{A^*}} \circ \mathcal{P}(\obs^{\dag}_{\mathcal{X}})) & \textnormal{(\Cref{imgobssuccinctsem})} \\
		&= \img(\mu^{\mathcal{P}}_{A^*} \circ \mathcal{P}(\obs^{\dag}_{\mathcal{X}})) & (2^{\eta^{\mathcal{H}}_X} \circ \alpha_{\mathcal{P}X} = \mu^{\mathcal{P}}_X) \\
		&= \lbrace \bigcup_{u \in U} \obs^{\dag}_{\mathcal{X}}(u) \mid U \subseteq X \rbrace & \textnormal{(Def. } \mathcal{P}(-), \mu^{\mathcal{P}}) \\
		&= \overline{\lbrace \obs^{\dag}_{\mathcal{X}}(x) \mid x \in X \rbrace}^{\textnormal{CSL}} & \textnormal{(Def. } \overline{(-)}^{\textnormal{CSL}}).
	\end{align*}
	Similarly one computes
	\begin{align*}
\img(\obs_{\alpha(\exp_{\mathcal{H}}(\ext(\mathcal{X})))}) &= \img(2^{\eta^{\mathcal{H}}_{A^*}} \circ \mathcal{H}(\obs^{\dag}_{\mathcal{X}})) & \textnormal{(\Cref{imgobssuccinctsem})} \\
&= \lbrace \lbrace w \in A^* \mid \lbrace x \in X \mid \textnormal{obs}^{\dag}_{\mathcal{X}}(x)(w) = 1 \rbrace \in \Phi \rbrace \mid \Phi \subseteq 2^X \rbrace & \textnormal{(Def. } 2^{\eta^{\mathcal{H}}_{A^*}}, \mathcal{H}(-)) \\
		&= \lbrace \bigcup_{\varphi \in \Phi} \bigcap_{x \in \varphi} \obs^{\dag}_{\mathcal{X}}(x) \cap \bigcap_{x \not \in \varphi} \obs^{\dag}_{\mathcal{X}}(x)^c  \mid \Phi \subseteq 2^X \rbrace & \textnormal{(Set equality)}  \\
		&= \overline{\lbrace \obs^{\dag}_{\mathcal{X}}(x) \mid x \in X \rbrace}^{\textnormal{CABA}}  & \textnormal{(Def. } \overline{(-)}^{\textnormal{CABA}}).
	\end{align*}
\end{proof}

\begin{corollary}
\label[corollary]{imgdistromaton}
	Let $\alpha: \mathcal{P}X \rightarrow \mathcal{A}X$ satisfy $\alpha_X(\varphi)(\psi) = \bigvee_{x \in X} \varphi(x) \wedge \psi(x)$. For any unpointed non-deterministic automaton $\mathcal{X}$ it holds:
	\begin{itemize}
		\item $\img(\obs_{\exp_{\mathcal{P}}(\mathcal{X})}) = \overline{\textnormal{im}(\obs^{\dag}_{\mathcal{X}})}^{\textnormal{CSL}}$;
		\item $\img(\obs_{\alpha(\exp_{\mathcal{A}}(\ext(\mathcal{X})))}) = \overline{\textnormal{im}(\obs^{\dag}_{\mathcal{X}})}^{\textnormal{CDL}}$.
	\end{itemize}
\end{corollary}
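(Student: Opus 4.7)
The plan is to mirror the proof of Corollary~\ref{imgcabanfa} almost verbatim, exploiting the fact that $\mathcal{A}$ and $\mathcal{H}$ agree on morphisms and on the algebra structure carried by the final bialgebra, while $\mathcal{A}X \subseteq \mathcal{H}X$ just restricts elements to monotone maps $\langle 2^X, \subseteq\rangle \to \langle 2, \leq\rangle$. For the first bullet, \Cref{imgobssuccinctsem} with $S = \mathcal{P}$ reduces the left-hand side to $\img(2^{\eta^{\mathcal{H}}_{A^*}} \circ \alpha_{2^{A^*}} \circ \mathcal{P}(\obs^{\dag}_{\mathcal{X}}))$; the image of $\alpha$ is automatically monotone, so the identity $2^{\eta^{\mathcal{H}}_X} \circ \alpha_{\mathcal{P}X} = \mu^{\mathcal{P}}_X$ derived in the proof of \Cref{basisshiftecabapowerset} applies unchanged and the remaining chain of equalities from the proof of \Cref{imgcabanfa} yields $\overline{\textnormal{im}(\obs^{\dag}_{\mathcal{X}})}^{\textnormal{CSL}}$.

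For the second bullet, I would again invoke \Cref{imgobssuccinctsem}, this time with $S = \mathcal{A}$, to obtain
\[
  \img(\obs_{\alpha(\exp_{\mathcal{A}}(\ext(\mathcal{X})))}) = \img(2^{\eta^{\mathcal{H}}_{A^*}} \circ \mathcal{A}(\obs^{\dag}_{\mathcal{X}})).
\]
Next, I would identify a monotone $\Phi \in \mathcal{A}X$ with the upward-closed subset $\{\psi \in 2^X \mid \Phi(\psi) = 1\}$ and unfold definitions to rewrite each generated language as
\[
  \{w \in A^* \mid \{x \in X \mid w \in \obs^{\dag}_{\mathcal{X}}(x)\} \in \Phi\}.
\]
By upward-closedness, a word $w$ belongs to this set exactly when some $\psi \in \Phi$ is contained in $\{x \mid w \in \obs^{\dag}_{\mathcal{X}}(x)\}$, i.e.\ when $w \in \bigcap_{x \in \psi} \obs^{\dag}_{\mathcal{X}}(x)$ for some $\psi \in \Phi$, so the language equals $\bigcup_{\psi \in \Phi} \bigcap_{x \in \psi} \obs^{\dag}_{\mathcal{X}}(x)$.

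Conversely, every family $\Psi \subseteq 2^X$ generates an upward-closed $\Phi = \{\psi \mid \exists \psi' \in \Psi.\, \psi' \subseteq \psi\}$ that produces the same union of intersections, so the image of $2^{\eta^{\mathcal{H}}_{A^*}} \circ \mathcal{A}(\obs^{\dag}_{\mathcal{X}})$ is precisely the set of arbitrary joins of arbitrary meets over $\textnormal{im}(\obs^{\dag}_{\mathcal{X}})$, which by definition of the CDL closure gives $\overline{\textnormal{im}(\obs^{\dag}_{\mathcal{X}})}^{\textnormal{CDL}}$. The main delicate point will be verifying that ``arbitrary joins of arbitrary meets'' is indeed the right normal form for the CDL closure in $2^{A^*}$ (which relies on complete distributivity) and tracking the correspondence between monotone $\Phi \in \mathcal{A}X$, upward-closed families, and indexing sets $\Psi$ for the union-of-intersections expression; once this correspondence is cleanly set up, the rest is a routine unfolding analogous to the computation already carried out for \Cref{imgcabanfa}.
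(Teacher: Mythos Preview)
Your proposal is correct and follows essentially the same approach as the paper, whose proof simply reads ``Analogous to the proof of \Cref{imgcabanfa}''; you have merely spelled out the details that the paper leaves implicit. One small slip: when you invoke \Cref{imgobssuccinctsem} for the first bullet you write ``with $S = \mathcal{P}$'', but in the present instantiation the roles are $S = \mathcal{A}$ and $T = \mathcal{P}$ --- the formula you actually write down is nonetheless the correct one.
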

\begin{proof}
	Analogous to the proof of \Cref{imgcabanfa}.
\end{proof}

\begin{corollary}
\label[corollary]{imgxorcaba}
	Let $\alpha: \mathcal{R}X \rightarrow \mathcal{H}X$ satisfy $\alpha_X(\varphi)(\psi) = \bigoplus_{x \in X} \varphi(x) \cdot \psi(x)$. For any unpointed $\mathbb{Z}_2$-weighted automaton $\mathcal{X}$ it holds:
	\begin{itemize}
		\item $\img(\obs_{\exp_{\mathcal{R}}(\mathcal{X})}) = \overline{\textnormal{im}(\obs^{\dag}_{\mathcal{X}})}^{\mathbb{Z}_2\textnormal{-Vect}}$;
		\item $\img(\obs_{\alpha(\exp_{\mathcal{H}}(\ext(\mathcal{X})))}) = \overline{\textnormal{im}(\obs^{\dag}_{\mathcal{X}})}^{\textnormal{CABA}}$.
	\end{itemize}
\end{corollary}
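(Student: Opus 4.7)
The proof will proceed entirely in parallel to the proof of \Cref{imgcabanfa}, replacing the role of $\mathcal{P}$ by $\mathcal{R}$ and the Boolean disjunction by the $\mathbb{Z}_2$-addition $\oplus$. The final $\lambda^{\mathcal{H}}$-bialgebra is carried by $\langle 2^{A^*}, 2^{\eta^{\mathcal{H}}_{A^*}}, \langle \varepsilon, \delta\rangle\rangle$, so by \Cref{imgobssuccinctsem} both images can be rewritten as images of maps out of $\mathcal{R}X$ respectively $\mathcal{H}X$.

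For the first equality, the key intermediate step is the analogue of the identity $2^{\eta^{\mathcal{H}}_X} \circ \alpha_{\mathcal{P}X} = \mu^{\mathcal{P}}_X$ used inside \Cref{imgcabanfa}. I would directly verify
\[
2^{\eta^{\mathcal{H}}_{X}} \circ \alpha_{\mathcal{R}X}(\varphi)(x) \;=\; \alpha_{\mathcal{R}X}(\varphi)(\eta^{\mathcal{H}}_X(x)) \;=\; \bigoplus_{\psi \in 2^{X}} \varphi(\psi)\cdot \eta^{\mathcal{H}}_X(x)(\psi) \;=\; \bigoplus_{\psi \in 2^X} \varphi(\psi)\cdot\psi(x),
\]
which is exactly $\mu^{\mathcal{R}}_X(\varphi)(x)$ by definition of $\mu^{\mathcal{R}}$ in \Cref{monaddefs}. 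Instantiating at $X = A^*$ and composing with $\mathcal{R}(\obs^{\dag}_{\mathcal{X}})$ then gives
\[
\img(\obs_{\exp_{\mathcal{R}}(\mathcal{X})}) \;=\; \img(\mu^{\mathcal{R}}_{A^*} \circ \mathcal{R}(\obs^{\dag}_{\mathcal{X}})) \;=\; \Bigl\{\bigoplus_{x \in X} \varphi(x) \cdot \obs^{\dag}_{\mathcal{X}}(x) \;\Big|\; \varphi \in 2^X\Bigr\},
\]
which by definition is the $\mathbb{Z}_2$-linear span of $\textnormal{im}(\obs^{\dag}_{\mathcal{X}})$, i.e.\ $\overline{\textnormal{im}(\obs^{\dag}_{\mathcal{X}})}^{\mathbb{Z}_2\textnormal{-Vect}}$.

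For the second equality, observe that the calculation is independent of $T$: since $S = \mathcal{H}$ and $\exp_S(\ext(\mathcal{X}))$ involves only the $S$-side of the picture, the second bullet of \Cref{imgobssuccinctsem} yields $\img(\obs_{\alpha(\exp_{\mathcal{H}}(\ext(\mathcal{X})))}) = \img(2^{\eta^{\mathcal{H}}_{A^*}} \circ \mathcal{H}(\obs^{\dag}_{\mathcal{X}}))$, which is precisely the computation already carried out in the proof of \Cref{imgcabanfa}. Expanding $2^{\eta^{\mathcal{H}}_{A^*}}$ and $\mathcal{H}(\obs^{\dag}_{\mathcal{X}})$ against an arbitrary $\Phi \subseteq 2^X$ produces the union over $\Phi$ of full conjunctions of $\obs^{\dag}_{\mathcal{X}}(x)$ and complements thereof, i.e.\ the full disjunctive-normal-form closure, which is exactly $\overline{\textnormal{im}(\obs^{\dag}_{\mathcal{X}})}^{\textnormal{CABA}}$.

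The only subtle point is the verification of the monad-multiplication identity for $\mathcal{R}$; this is a one-line check from the definitions but is the genuine substance of the proof, since everything else is either a direct appeal to \Cref{imgobssuccinctsem} or a recycling of the argument for \Cref{imgcabanfa}.

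\fi

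\end{document}
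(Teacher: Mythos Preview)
Your proposal is correct and follows exactly the approach the paper intends: the paper's own proof reads ``Analogous to the proof of \Cref{imgcabanfa}'', and you have spelled out precisely that analogy, including the $\mathcal{R}$-version of the key identity $2^{\eta^{\mathcal{H}}_{X}} \circ \alpha_{\mathcal{R}X} = \mu^{\mathcal{R}}_X$ and the observation that the second bullet reuses the $\mathcal{H}$-side computation verbatim.
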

\begin{proof}
	Analogous to the proof of \Cref{imgcabanfa}.
\end{proof}

\begin{lemma}\tagcite{stackex}
\label[lemma]{join-irred-dlattice}
	Let $A$ be a sub-lattice of a finite distributive lattice $B$, then $\vert J(A) \vert \leq \vert J(B) \vert$.   
\end{lemma}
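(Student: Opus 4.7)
The plan is to construct an injection $f \colon J(A) \hookrightarrow J(B)$, from which the size inequality follows immediately. For each $a \in J(A)$, I will use the standard fact that in any finite lattice a join-irreducible element has a unique lower cover: setting $a_* := \bigvee^A\{a' \in A : a' < a\}$, one shows $a_* < a$ (else $a$ would be the join of strictly smaller elements, contradicting join-irreducibility) and that every $a' \in A$ with $a' < a$ satisfies $a' \leq a_*$. Since $A$ is a sublattice of $B$, the element $a_*$ lies in $A$, and the order and meet in $A$ agree with those in $B$.

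Next I will exploit that every element of the finite lattice $B$ decomposes as the join of the elements of $J(B)$ lying below it. In particular, applying this to $a$ and using $a_* < a$ in $B$, there must exist some $b \in J(B)$ with $b \leq a$ but $b \not\leq a_*$, for otherwise the decomposition $a = \bigvee^B\{b' \in J(B) : b' \leq a\}$ would force $a \leq a_*$. I define $f(a)$ to be a choice of such a $b$.

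To verify injectivity, I will suppose $f(a_1) = f(a_2) = b$ for distinct $a_1, a_2 \in J(A)$; without loss of generality $a_1 \not\leq a_2$. Then $a_1 \wedge a_2 < a_1$ (the meet taken in $A$, equivalently in $B$), so by the defining property of $(a_1)_*$ one has $a_1 \wedge a_2 \leq (a_1)_*$. Since $b \leq a_1$ and $b \leq a_2$, it follows that $b \leq a_1 \wedge a_2 \leq (a_1)_*$, contradicting $f(a_1) \not\leq (a_1)_*$.

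The main subtlety will be to pin down cleanly the unique-lower-cover characterisation of join-irreducibles in finite lattices and to verify that $A$ inherits this structure coherently with the ambient order and meet; once these points are secured the argument is purely combinatorial. Distributivity of $B$ plays no essential role in the injection argument itself, but ensures that $A$ is again a (finite distributive) lattice and that the Birkhoff-style decomposition $a = \bigvee^B\{b' \in J(B) : b' \leq a\}$ behaves as expected.
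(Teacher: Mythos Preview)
Your proof is correct, and it takes a genuinely different route from the paper's. The paper works in the opposite direction: for each $x \in J(B)$ it forms $\hat{x} := \bigwedge\{y \in A \mid x \leq y\}$, uses distributivity of $B$ to show that $\hat{x}$ is join-irreducible in $A$, and then argues that every $z \in J(A)$ arises as $\hat{x}$ for some $x \in J(B)$; this yields a surjection $J(B) \twoheadrightarrow J(A)$. You instead build an injection $J(A) \hookrightarrow J(B)$ via the lower cover $a_*$ in $A$ and a choice of $b \in J(B)$ separating $a$ from $a_*$. The key difference is that the paper's argument genuinely needs distributivity (it is used in the step showing $\hat{x} \in J(A)$), whereas your argument does not: join-density of join-irreducibles holds in every finite lattice, and your injectivity step only uses that $A$ is a sublattice so that $a_1 \wedge a_2$ computed in $A$ agrees with the meet in $B$. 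So your proof in fact establishes the stronger statement that $\lvert J(A)\rvert \leq \lvert J(B)\rvert$ for any sublattice $A$ of any finite lattice $B$.

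One small correction to your closing remark: distributivity is not needed even for the decomposition $a = \bigvee^B\{b' \in J(B) \mid b' \leq a\}$; this is the standard fact that join-irreducibles are join-dense in any finite lattice. Your argument therefore uses no distributivity hypothesis at all, which is a pleasant strengthening over the paper's version.
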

\begin{proof}
	For $x \in J(B)$ define $\hat{x} := \bigwedge \lbrace y \in A \mid x \leq y \rbrace \geq x$. To see that $\hat{x} \in J(A)$, assume $\hat{x} = y \vee z$ for $y,z \in A$. By distributivity we have	$
	x = \hat{x} \wedge x = (y \vee z) \wedge x = (y \wedge x ) \vee (z \wedge x)$.
	Since $x \in J(B)$, it thus follows w.l.o.g. $x = y \wedge x$, which implies $x \leq y$. Consequently $\hat{x} \leq y \leq \hat{x}$, i.e. $\hat{x} = y$. Let $z \in J(A)$, then the join-density of join-irreducibles implies
	\[
	z = \bigvee \lbrace x \in J(B) \mid x \leq z \rbrace = \bigvee \lbrace \hat{x} \in J(A) \mid x \in J(B) : x \leq z \rbrace .
	\]
	Since $z$ is join-irreducible it follows $z = \widehat{x_z}$ for some $x_z \in J(B)$ with $x_z \leq z$. We thus find
	$
	J(A) = \lbrace \hat{x} \mid x \in J(B) \rbrace$, which implies the claim $\vert J(A) \vert \leq \vert J(B) \vert$.
\end{proof}

\begin{corollary}
\label[corollary]{atomscaba}
	Let $A$ be a sub-algebra of a finite atomic Boolean algebra $B$. Then $\vert \textnormal{At}(A) \vert \leq  \vert \textnormal{At}(B) \vert$.
\end{corollary}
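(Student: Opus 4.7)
The plan is to reduce the statement directly to Lemma \ref{join-irred-dlattice} by observing that in any finite Boolean algebra, the set of atoms coincides with the set of join-irreducible elements, and that a sub-algebra of a finite atomic Boolean algebra $B$ is itself a finite atomic Boolean algebra and, in particular, a sub-lattice of $B$.

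First I would establish the coincidence $\textnormal{At}(C) = J(C)$ for an arbitrary finite Boolean algebra $C$. The inclusion $\textnormal{At}(C) \subseteq J(C)$ is immediate: if $a$ is an atom and $a = y \vee z$, then $y,z \leq a$, so each is either $0$ or $a$, and both being $0$ would force $a = 0$. Conversely, suppose $x \in J(C)$ and $x$ is not an atom, so that there exists $0 < y < x$. Then in the Boolean algebra $C$ one has $x = y \vee (x \wedge \neg y)$, and the term $x \wedge \neg y$ cannot equal $x$ (else $y \leq x \leq \neg y$, forcing $y = 0$), while $y \neq x$ by assumption; this contradicts join-irreducibility.

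Next I would note that a sub-algebra $A \subseteq B$ in the sense of $\mathcal{H}$-algebras (i.e.\ closed under all meets, all joins and complementation) is itself a Boolean algebra with the induced operations, finite because $B$ is, and hence automatically atomic. In particular $A$ is a sub-lattice of the finite distributive lattice $B$, so Lemma \ref{join-irred-dlattice} applies and yields $|J(A)| \leq |J(B)|$. Combining this with the coincidence from the previous paragraph gives
\[
|\textnormal{At}(A)| = |J(A)| \leq |J(B)| = |\textnormal{At}(B)|,
\]
which is the desired inequality.

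The argument is essentially a bookkeeping exercise on top of Lemma \ref{join-irred-dlattice}, so I do not expect any real obstacle; the only point requiring a moment of care is the equivalence $\textnormal{At} = J$ for finite Boolean algebras, which relies crucially on the availability of complements and hence does not hold at the level of distributive lattices (which is precisely why atoms must be replaced by join-irreducibles in the distromaton example of \Cref{distromatonexample}).
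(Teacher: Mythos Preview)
Your proposal is correct and follows exactly the same approach as the paper: the paper's proof simply states that atoms and join-irreducibles coincide for atomic Boolean algebras, that every Boolean algebra is a distributive lattice, and then invokes Lemma~\ref{join-irred-dlattice}. You have merely unpacked the first of these observations in more detail (and noted that $A$ is a sub-lattice of $B$), so there is nothing to add.
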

\begin{proof}
	For atomic Boolean algebras, join-irreducibles and atoms coincide. Every Boolean algebra is in particular a distributive lattice. The claim thus follows from \Cref{join-irred-dlattice}.
\end{proof}

\begin{oneshot}{corollary}{canonicalrfsaminimal}
	The canonical RFSA for $\mathcal{L}$ is size-minimal among non-deterministic automata $\mathcal{Y}$ accepting $\mathcal{L}$ with $\overline{\textnormal{im}(\textnormal{obs}^{\dagger}_{\mathcal{Y}})}^{\textnormal{CSL}} \subseteq \overline{\textnormal{Der}(\mathcal{L})}^{\textnormal{CSL}}$. 
\end{oneshot}
\begin{proof}
	By \Cref{powersetalgebra} the morphism $h^{\mathcal{P}}: \mathcal{P}2 \rightarrow 2$ with $h^{\mathcal{P}}(\varphi) = \varphi(1)$ is a $\mathcal{P}$-algebra. As shown in \Cref{induceddistrlaw}, it can used to derive a canonical distributive law $\lambda^{\mathcal{P}}$. It is not hard to see that the minimal pointed $\lambda^{\mathcal{P}}$-bialgebra $\mathbb{M}$ accepting $\mathcal{L}$ exists and that its underlying state space is given by the finite complete join-semi lattice $\overline{\textnormal{Der}(\mathcal{L})}^{\textnormal{CSL}}$. By \Cref{joinirreducstateminimal} the join-irreducibles for $\mathbb{M}$ constitute a size-minimal generator $\mathbb{G}$. By definition, the canonical RFSA for $\mathcal{L}$ is given by $\mathcal{X} := \textnormal{gen}(\mathbb{M}, \mathbb{G})$. From \Cref{obsgenerated} it follows that $\img(\obs_{\exp_{\mathcal{P}}(\mathcal{X})}) \simeq \overline{\textnormal{Der}(\mathcal{L})}^{\textnormal{CSL}}$. As seen in e.g. \Cref{imgcabanfa}, one has $\img(\obs_{\exp_{\mathcal{P}}(\mathcal{Y})}) =\overline{\textnormal{im}(\textnormal{obs}^{\dagger}_{\mathcal{Y}})}^{\textnormal{CSL}}$ for any NFA $\mathcal{Y}$. By choosing $\alpha$ as the identity, which implies $\alpha$-closedness for any NFA, the statement thus follows from \Cref{minimalitytheorem}.
\end{proof}

\begin{oneshot}{corollary}{minimalxor}
		The minimal xor automaton for $\mathcal{L}$ is size-minimal among $\mathbb{Z}_2$-weighted automata accepting $\mathcal{L}$.
\end{oneshot}
\begin{proof}
	Analogous to \Cref{canonicalrfsaminimal} one can show that the minimal xor automaton for $\mathcal{L} \subseteq A^*$ is size-minimal among all $\mathbb{Z}_2$-weighted automata $\mathcal{Y}$ accepting $\mathcal{L}$ such that
$ \overline{\textnormal{im}(\textnormal{obs}^{\dagger}_{\mathcal{Y}})}^{\mathbb{Z}_2\textnormal{-Vect}} \subseteq \overline{\textnormal{Der}(\mathcal{L})}^{\mathbb{Z}_2\textnormal{-Vect}}$.
	Specific to this case are \Cref{xoroutputalgebra} and \Cref{xorbasisstateminimal}. It remains to observe that for any $\mathbb{Z}_2$-weighted automaton $\mathcal{X}$, one can find an equivalent $\mathbb{Z}_2$-weighted automaton $\mathcal{Y}$ with a state space of size not greater than the one of $\mathcal{X}$, such that above inclusion holds. The state space of $\mathcal{Y}$ can be chosen as a basis for the underlying vector space of the epi-mono factorisation of the reachability map $\mathcal{X}(A^*) \rightarrow \overline{\textnormal{im}(\textnormal{obs}^{\dagger}_{\mathcal{X}})}^{\mathbb{Z}_2\textnormal{-Vect}}$.
	\end{proof}

\begin{oneshot}{corollary}{minimalityatomaton}
		The \'atomaton for $\mathcal{L}$ is size-minimal among non-deterministic automata $\mathcal{Y}$ accepting $\mathcal{L}$ with $\overline{\textnormal{im}(\obs^{\dag}_{\mathcal{Y}})}^{\textnormal{CSL}} = \overline{\textnormal{im}(\obs^{\dag}_{\mathcal{Y}})}^{\textnormal{CABA}}$.
\end{oneshot}
\begin{proof}
	Let $\alpha: \lambda^{\mathcal{H}} \rightarrow \lambda^{\mathcal{P}}$ be the distributive law homomorphism introduced in \Cref{alphapowersetneighbourhooddistrlaw}. As shown in \Cref{imgcabanfa}, the equality in above claim captures $\alpha$-closedness of $\mathcal{Y}$. By construction there exists a CSL-epimorphism $\textnormal{obs}_{\textnormal{exp}_{\mathcal{P}}(\mathcal{Y})}: \mathcal{P}Y \twoheadrightarrow \overline{\textnormal{im}(\obs^{\dag}_{\mathcal{Y}})}^{\textnormal{CSL}}$, which turns $Y$ into a $\mathcal{P}$-algebra generator for $B := \overline{\textnormal{im}(\obs^{\dag}_{\mathcal{Y}})}^{\textnormal{CSL}} = \overline{\textnormal{im}(\obs^{\dag}_{\mathcal{Y}})}^{\textnormal{CABA}}$. As for CABAs join-irreducibles and atoms coincide, the size-minimality of join-irreducibles in \Cref{joinirreducstateminimal} thus implies
	$
	\vert \textnormal{At}(B) \vert \leq \vert Y \vert
	$. Since $\overline{\textnormal{Der}(\mathcal{L})}^{\textnormal{CABA}}$ underlies the minimal $\alpha$-closed pointed $\lambda^{\mathcal{H}}$-bialgebra accepting $\mathcal{L}$, we have $ \overline{\textnormal{Der}(\mathcal{L})}^{\textnormal{CABA}} \subseteq B$ by \Cref{minimalitytheorem}, where we use \Cref{atomsstateminimalgenerator} and \Cref{obsgenerated}. By \Cref{atomscaba}, the former implies $\vert \textnormal{At}(\overline{\textnormal{Der}(\mathcal{L})}^{\textnormal{CABA}}) \vert \leq \vert \textnormal{At}(B) \vert$. Consequently we can deduce $\vert  \textnormal{At}(\overline{\textnormal{Der}(\mathcal{L})}^{\textnormal{CABA}}) \vert \leq \vert Y \vert$, which  shows the claim.  
\end{proof}

\begin{oneshot}{corollary}{distromatonminimal}
		The distromaton for $\mathcal{L}$ is size-minimal among non-deterministic automata $\mathcal{Y}$ accepting $\mathcal{L}$ with $ \overline{\textnormal{im}(\obs^{\dag}_{\mathcal{Y}})}^{\textnormal{CSL}} = \overline{\textnormal{im}(\obs^{\dag}_{\mathcal{Y}})}^{\textnormal{CDL}}$.
\end{oneshot}
\begin{proof}
	Analogous to the proof of \Cref{minimalityatomaton}. Specific to this case are \Cref{powersetalgebra}, \Cref{monotoneneighbourhoodoutputalgebra}, \Cref{neighbourhoodpowersetmorphism}, \Cref{joinirreducstateminimal}, \Cref{join-irred-dlattice}, and \Cref{imgdistromaton}.
\end{proof}

\begin{oneshot}{corollary}{corminimalxorcaba}
		The minimal xor-CABA automaton for $\mathcal{L}$ is size-minimal among $\mathbb{Z}_2$-weighted automata $\mathcal{Y}$ accepting $\mathcal{L}$ with $ \overline{\textnormal{im}(\obs^{\dag}_{\mathcal{Y}})}^{\mathbb{Z}_2\textnormal{-Vect}} = \overline{\textnormal{im}(\obs^{\dag}_{\mathcal{Y}})}^{\textnormal{CABA}}$.
\end{oneshot}
\begin{proof}
	Analogous to the proof of \Cref{minimalityatomaton}. Specific to this case are \Cref{neighbourhoodalgebra}, \Cref{xoroutputalgebra}, \Cref{alphaxorneighbourhooddistrlaw}, \Cref{xorbasisstateminimal}, \Cref{imgxorcaba} and the observation that if $A \subseteq B$ is a sub-vector space of a finite vector space $B$, then $\textnormal{dim}(A) \leq \textnormal{dim}(B)$.
\end{proof}

\begin{oneshot}{corollary}{sizecomparison}
	\begin{itemize}
		\item If $\overline{\textnormal{Der}(\mathcal{L})}^{\mathbb{Z}_2\textnormal{-Vect}} = \overline{\textnormal{Der}(\mathcal{L})}^{\textnormal{CABA}}$, then the minimal xor automaton and the minimal xor-CABA automaton for $\mathcal{L}$ are of the same size.
		\item If $\overline{\textnormal{Der}(\mathcal{L})}^{\textnormal{CSL}} = \overline{\textnormal{Der}(\mathcal{L})}^{\textnormal{CDL}}$, then the canonical RFSA and the distromaton for $\mathcal{L}$ are of the same size.
		\item If $\overline{\textnormal{Der}(\mathcal{L})}^{\textnormal{CSL}} = \overline{\textnormal{Der}(\mathcal{L})}^{\textnormal{CABA}}$, then the canonical RFSA and the \'atomaton for $\mathcal{L}$ are of the same size. 
	\end{itemize}
\end{oneshot}
\begin{proof}
	\begin{itemize}
		\item By \Cref{minimalxor} the minimal xor automaton $\mathcal{X}$ is of size not greater than the minimal xor-CABA automaton $\mathcal{Y}$. Conversely, we find
		\begin{align*}
			\overline{\textnormal{Der}(\mathcal{L})}^{\textnormal{CABA}} &= \overline{\textnormal{Der}(\mathcal{L})}^{\mathbb{Z}_2\textnormal{-Vect}} & \textnormal{(Assumption)} \\
			&= \textnormal{im}(\textnormal{obs}_{\textnormal{exp}_{\mathcal{R}}(\mathcal{X})}) & \textnormal{(\Cref{obsgenerated})}  \\
			&= \overline{\textnormal{im}(\textnormal{obs}^{\dagger}_{\mathcal{X}})}^{\mathbb{Z}_2\textnormal{-Vect}} & \textnormal{(\Cref{imgxorcaba})},
		\end{align*}
		which can be used to show $\overline{\textnormal{im}(\textnormal{obs}^{\dagger}_{\mathcal{X}})}^{\mathbb{Z}_2\textnormal{-Vect}} = \overline{\textnormal{im}(\textnormal{obs}^{\dagger}_{\mathcal{X}})}^{\textnormal{CABA}}$. By \Cref{corminimalxorcaba} the latter implies that $\mathcal{Y}$ is of size not greater than $\mathcal{X}$, which shows the claim.
	\item Let $\mathcal{X}$ denote the canonical RFSA and $\mathcal{Y}$ the distromaton. On the one hand we find
	\begin{align*}
			\overline{\textnormal{Der}(\mathcal{L})}^{\textnormal{CSL}} &= \overline{\textnormal{Der}(\mathcal{L})}^{\textnormal{CDL}} & \textnormal{(Assumption)} \\
			&= \textnormal{im}(\textnormal{obs}_{\textnormal{exp}_{\mathcal{P}}(\mathcal{Y})}) & \textnormal{(\Cref{obsgenerated})}  \\
			&= \overline{\textnormal{im}(\textnormal{obs}^{\dagger}_{\mathcal{Y}})}^{\textnormal{CSL}} & \textnormal{(\Cref{imgdistromaton})},
		\end{align*}	
		which by \Cref{canonicalrfsaminimal} implies that $\mathcal{X}$ is of size not greater than $\mathcal{Y}$. Conversely, we establish the equality
		\begin{align*}
			\overline{\textnormal{Der}(\mathcal{L})}^{\textnormal{CDL}} &= \overline{\textnormal{Der}(\mathcal{L})}^{\textnormal{CSL}} & \textnormal{(Assumption)} \\
			&= \textnormal{im}(\textnormal{obs}_{\textnormal{exp}_{\mathcal{P}}(\mathcal{X})}) & \textnormal{(\Cref{obsgenerated})}  \\
			&= \overline{\textnormal{im}(\textnormal{obs}^{\dagger}_{\mathcal{X}})}^{\textnormal{CSL}} & \textnormal{(\Cref{imgdistromaton})},
		\end{align*}	
		which can be used to show $\overline{\textnormal{im}(\textnormal{obs}^{\dagger}_{\mathcal{X}})}^{\textnormal{CSL}} = \overline{\textnormal{im}(\textnormal{obs}^{\dagger}_{\mathcal{X}})}^{\textnormal{CDL}}$. By \Cref{distromatonminimal} the latter implies that $\mathcal{Y}$ is of size not greater than $\mathcal{X}$, which shows the claim.
		\item The proof for the \'atomaton is analogous to the proof for the distromaton in the previous point.
	\end{itemize}
\end{proof}
\fi
\end{document}